\font\ppppppcarac=ptmr8y at 4pt
\font\pppppcarac=ptmr8y at 5pt
\font\ppppcarac=ptmr8y at 6pt
\font\pppcarac=ptmr8y at 7pt
\font\ppcarac=ptmr8y at 8pt
\font\pcarac=ptmr8y at 9pt
\font\Ppcarac=ptmr8y at 10pt
\font\bf=ptmb8y at 10pt
\newcommand{\bfE}{{\textbf{E}}}
\newcommand{\bfF}{{\textbf{F}}}
\newcommand{\bfG}{{\textbf{G}}}
\newcommand{\bfH}{{\textbf{H}}}
\newcommand{\bfK}{{\textbf{K}}}
\newcommand{\bfL}{{\textbf{L}}}
\newcommand{\bfQ}{{\textbf{Q}}}
\newcommand{\bfS}{{\textbf{S}}}
\newcommand{\bfU}{{\textbf{U}}}
\newcommand{\bfV}{{\textbf{V}}}
\newcommand{\bfW}{{\textbf{W}}}
\newcommand{\bfX}{{\textbf{X}}}
\newcommand{\bfY}{{\textbf{Y}}}
\newcommand{\bfZ}{{\textbf{Z}}}
\newcommand{\bfzero}{{ \hbox{\bf 0} }}
\newcommand{\bfb}{{\textbf{b}}}
\newcommand{\bff}{{\textbf{f}}}
\newcommand{\bfg}{{\textbf{g}}}
\newcommand{\bfh}{{\textbf{h}}}
\newcommand{\bfm}{{\textbf{m}}}
\newcommand{\bfn}{{\textbf{n}}}
\newcommand{\bfq}{{\textbf{q}}}
\newcommand{\bfs}{{\textbf{s}}}
\newcommand{\bfu}{{\textbf{u}}}
\newcommand{\bfw}{{\textbf{w}}}
\newcommand{\bfx}{{\textbf{x}}}
\newcommand{\bfz}{{\textbf{z}}}
\newcommand{\bfEta}{{\bfH}}
\newcommand{\bfSigma}{{\boldsymbol{\Sigma}}}
\newcommand{\bfeta}{{\boldsymbol{\eta}}}
\newcommand{\bflambda}{{\boldsymbol{\lambda}}}
\newcommand{\bfvarphi}{{\boldsymbol{\varphi}}}
\newcommand{\bfpsi}{{\boldsymbol{\psi}}}
\newcommand{\bfxi}{{\boldsymbol{\xi}}}
\newcommand{\bfzeta}{{\boldsymbol{\zeta}}}
\newcommand{\Eta}{{\rm H}}
\newcommand{\KK}{{\mathbb{K}}}
\newcommand{\MM}{{\mathbb{M}}}
\newcommand{\PP}{{\mathbb{P}}}
\newcommand{\RR}{{\mathbb{R}}}
\DeclareMathAlphabet{\mathonebb}{U}{bbold}{m}{n}
\def\11{{\ensuremath{\mathonebb{1}}}}
\newcommand{\curG}{{\mathcal{G}}}
\newcommand{\curM}{{\mathcal{M}}}
\newcommand{\curP}{{\mathcal{P}}}
\newcommand{\curT}{{\mathcal{T}}}
\newcommand{\curU}{{\mathcal{U}}}
\newcommand{\bfcurG}{ {\boldsymbol{{\mathcal{G}}}} }
\newcommand{\bfcurM}{ {\boldsymbol{{\mathcal{M}}}} }
\newcommand{\bfcurU}{ {\boldsymbol{{\mathcal{U}}}} }
\newcommand{\divergence}{\hbox{{\Ppcarac div}}\,}
\newcommand{\Jump}{{\hbox{{\ppcarac Jump}}}}
\newcommand{\perr}{{\hbox{{\pcarac err}}}}
\newcommand{\pperr}{{\hbox{{\ppcarac err}}}}
\newcommand{\pref}{{\hbox{{\pppcarac ref}}}}
\newcommand{\DM}{{\hbox{{\pppppcarac DM}}}}
\newcommand{\opt}{{\hbox{{\ppcarac o}}}}
\newcommand{\popt}{{\hbox{{\ppppcarac o}}}}
\newcommand{\ppopt}{{\hbox{{\ppppcarac o}}}}
\newcommand{\ar}{{\hbox{{\pppcarac ar}}}}
\newcommand{\MC}{{\hbox{{\ppppcarac MC}}}}
\newcommand{\ppMC}{{\hbox{{\ppppppcarac MC}}}}
\newcommand{\PCA}{{\hbox{{\ppppcarac PCA}}}}
\newcommand{\pPCA}{{\hbox{{\pppppcarac PCA}}}}
\newcommand{\tr}{{\hbox{{\textrm tr}}}}
\newcommand{\pcorr}{{\hbox{{\ppppcarac corr}}}}
\newtheorem{proposition}{Proposition}
\newproof{proof}{Proof}
\newdefinition{remark}{Remark}
\newdefinition{hypothesis}{Hypothesis}
\newdefinition{notation}{Notation}
\journal{arXiv}
\begin{document}

\begin{frontmatter}

\title{Probabilistic Learning on Manifolds (PLoM) with Partition}

\author[1]{C. Soize \corref{cor1}}
\ead{christian.soize@univ-eiffel.fr}
\author[2]{R. Ghanem}
\ead{ghanem@usc.edu}

\cortext[cor1]{Corresponding author: C. Soize, christian.soize@univ-eiffel.fr}
\address[1]{Universit\'e Gustave Eiffel, MSME UMR 8208 CNRS, 5 bd Descartes, 77454 Marne-la-Vall\'ee, France}
\address[2]{University of Southern California, 210 KAP Hall, Los Angeles, CA 90089, United States}

\begin{abstract}
The probabilistic learning on manifolds (PLoM) introduced in 2016 has solved difficult supervised problems for the ``small data'' limit where the number $N$ of points in the training set is small. Many extensions have since been proposed, making it possible to deal with increasingly complex cases. However, the performance limit has been observed and explained for applications for which $N$ is very small ($50$ for example) and for which the dimension of the diffusion-map basis is close to $N$. For these cases, we propose a novel extension based on the introduction of a partition in independent random vectors. We take advantage of this novel development to present improvements of the PLoM such as a simplified algorithm for constructing the diffusion-map basis and a new mathematical result for quantifying the concentration of the probability measure in terms of a probability upper bound. The analysis of the  efficiency of this novel extension is presented through two applications.
\end{abstract}

\begin{keyword}
  probabilistic learning \sep PLoM \sep partition in independent random vectors \sep machine learning \sep data driven \sep uncertainty quantification
\end{keyword}


\end{frontmatter}

\section*{Notations}
The following notations are used:\\
\noindent A lower case letter such as $x$, $\eta$, or $u$, is a  real deterministic variable.\\
A boldface lower case letter such as $\bfx$, $\bfeta$, or $\bfu$ is a real deterministic vector.\\
An upper case letter such as $X$, $\Eta$, or $U$, is a real random variable (except for $E$).\\
A boldface upper case letter, $\bfX$, $\bfEta$, or $\bfU$, is a real random vector.\\
A letter between brackets such as  $[x]$, $[\eta]$, $[u]$ or $[C]$, is a real deterministic matrix.\\
A boldface upper case letter between brackets such as $[\bfX]$, $[\bfEta]$, or $[\bfU]$, is a real random matrix.\\
$n,n_q,n_u,n_w\,$: dimensions of random vectors $\bfX,\bfQ,\bfU,\bfW$.\\
$n_\MC\,$: number of additional realizations for random matrix $[\bfH^N_{m_\ppopt}]$.\\
$n_p\,$: number of groups in the partition of $\bfH$.\\
$m, m_i\,$: dimension of the reduced-order diffusion-map bases $[g_m], [g^i_m]$.\\
$m_\opt, m_{i,\opt}\,$: optimal value of $m, m_i$.\\
$E\,$: mathematical expectation.\\
$N,N_\ar\,$: number of points in the training, learned sets.\\
$\nu,\nu_i\,$: dimensions of random vectors $\bfH,\bfY^i$.\\
$\RR,\RR^n\,$: real line, Euclidean vector space of dimension $n$.\\
$\MM_n,\MM_{n,N}\,$: sets of all the $(n\times n),(n\times N)$ real matrices.\\
$\MM_n^+\,$: set of all the positive-definite symmetric $(n\times n)$ real matrices.\\
$\Vert\bfx\Vert\,$: Euclidean norm when $\bfx$ is the vector or Frobenius norm when $[x]$ is the matrix.\\

%
\section{Introduction}
\label{sec:Intro}
%
\paragraph{(i) About the PLoM}
The PLoM (probabilistic learning on manifolds) method was proposed in 2016 \cite{Soize2016} as a complementary approach to existing methods
in machine learning. It allows for solving unsupervised and supervised problems under uncertainty for which the training sets are small. This situation is
encountered in many problems of physics and engineering sciences with
expensive function evaluations. The exploration of the admissible
solution space in these situations is thus hampered by available
computational resources.  The PLoM was successfully adapted to tackle
these challenges for several related problems including nonconvex optimization
under uncertainty \cite{Ghanem2020,Ghanem2019} and the calculation of Sobol's indices  \cite{Arnst2020}.

\paragraph{(ii) Brief discussion on the hypotheses on which the PLoM method has been built}
The PLoM approach starts from a training set made up of a relatively small number $N$ of points (initial realizations).
For the supervised case, it is assumed that the training set is related to an underlying stochastic manifold related to a $\RR^{n}$-valued random variable $\bfX = (\bfQ,\bfW)$ with $\bfQ = \bff(\bfU,\bfW) = \bfF(\bfW)$. The measurable mapping $\bff$ is unknown and $\bfF$ is also an unknown stochastic mapping. The probability distributions of the vector-valued random variables $\bfW$ (control parameter) and $\bfU$ (non-controlled parameter) are given. The stochastic manifold is defined by the unknown graph $\{\bfw, \bfF(\bfw)\}$ for $\bfw$ belonging to the support of the probability distribution of $\bfW$. In the PLoM construction, its is not assumed that this stochastic manifold can directly be described; for instance, it is not assumed that there exist properties of local differentiability (moreover, the manifold is stochastic). Under these conditions, the probability measure of $\bfX$ is concentrated in a region of $\RR^{n}$ for which the only available information is the cloud of  the points of the training set. The PLoM method makes it possible to generate the learned set whose $n_\ar \gg N$ points (additional realizations) are generated by the probability measure that is estimated from the training set. The concentration of the probability measure is preserved thanks to the use of the diffusion-maps basis that allows to enrich the available information from the training set. It should also be noted that the estimate of this unknown probability measure cannot be performed from the training set by using an arbitrarily estimator. It must be parameterized in a manner that permits convergence to any probability measure as its number of points goes towards infinity. The PLoM method therefore does not only consist in generating points that belong to the region in which the measure is concentrated, but also allows these additional points to be realizations of the estimate probability measure with the convergence properties evoked above. The choice of the kernel estimation method for estimating the probability measure from the training set guarantees that this required fundamental property is satisfied (see \cite{Soize2016} and in particular, Section~5.3 of \cite{Soize2020c}).

\paragraph{(iii) A difficulty of the PLoM method for certain applications}
Since its introduction in 2016, extensions of the original method
\cite{Soize2016} have been developed in order to address increasingly
complex problems for the case of small data: sampling of Bayesian
posteriors with a non-Gaussian probabilistic learning on manifolds in
very high dimension \cite{Soize2020b}, physics-constrained
non-Gaussian probabilistic learning on manifolds, for instance, to
integrate information coming from experimental measurements during the
learning \cite{Soize2020a}, probabilistic learning on manifolds
constrained by nonlinear partial differential equations for small
data \cite{Soize2020e}. During this period, a number of
applications were addressed making it possible to refine the method,
to validate it, and to better assess and relax its limitations.
However, some challenges have remained. These are cases where the number of points (realizations) in the training set is very small (for example
$50$) and for which the dimension of the subspace generated by the diffusion-map basis is very close to this number.
In this case, the PLoM may not be more efficient than a standard MCMC
algorithm that is agnostic to any concentration of the probability
measure. One possible way to improve the PLoM for these very
challenging cases is to partition the random vector of which the
training set is a realization, into statistically independent
groups in a non-Gaussian framework. In this manner, statistical
knowledge about the data set, beyond its localization to a
manifold, is relied upon to enhance information extraction and
representation. One difficulty with standard grouping into independent
components is that they typically divvy-up available samples into
seperate groups, with each group consisting of a smaller number of
samples. These approach would not be suitable  to the present setting
given the already small sample size or the training set.  A more
useful approach, which is adopted in this paper, results in groups
that are each equal in size to the training set, but for which the
dimension of the diffusion-map basis is significantly reduced.
This approach corresponds to the novel extension of the PLoM method that we present in this paper.

\paragraph{(iv) A novel extension of the PLOM method to get around the difficulty}
One of the ingredients for this novel extension of the PLoM method is
the construction of a partition in non-Gaussian independent random
vectors, assuming that it exists. Indeed, there may very well be
applications for which the partition yields a single group, identical
to the initial random vector.  For such cases the present approach of
PLoM with partitions affords no further reduction.
Concerning the construction of a partition of independent random vectors, a popular method for testing the statistical independence of the $\nu$ components of a random vector from a given set of $N$ realizations is the use of the frequency distribution \cite{Fleiss2013} coupled with the use of the Pearson chi-squared ($\chi^2$) test \cite{Greenwood1996,Pearson1900}.
For the high dimensions ($\nu$ big) and a relatively small value of
$N$, such an approach does not give sufficiently accurate results. In
addition, even when this type of methods permits testing for
independence, the need remains for a fast algorithm for constructing
the optimal partition .
The independent component analysis (ICA) \cite{Boscolo2004,Comon1994,Comon1991,Herault1986,Hyvarinen1999,Hyvarinen2000,Jutten1991,Lee2000} is a method that consists of extracting independent source signals as a linear mixture of mutually statistically dependent signals, and is often used for source-separation problems.
The fundamental hypothesis that introduced in the ICA methodology is
that the observed vector-valued signal is a linear transformation of
statistically independent real-valued signals (that is to say, is a
linear transformation of a vector-valued signal whose components are
mutually independent) and the objective of the ICA algorithms is to
identify the best linear operator.
In this paper, for the PLoM with partition, we
use the procedure proposed in \cite{Soize2017f}, which is an ICA by mutual information and which does not use the construction of a linear transformation. This direct algorithm permits the identification of an optimal partition in terms of independent random vectors for any non-Gaussian vector in high dimension, which is defined by a relatively small number $N$ of realizations, and which is based on Information Theory.

\paragraph{(v) Organization of the paper}
In Section~\ref{sec:Meth}, we present the methodology of the PLoM method with
partition. In the process, we reintroduce some necessary key notions
of the PLoM method. Sections~\ref{sec:Ap1} and \ref{sec:Ap2} are both devoted
toapplications. Finally, a discussion of the method is presented in the conclusions.

\paragraph{(vi) Novelties presented in the paper}
The main novelty is the development of the PLoM methodology with partition. We also propose a novel algorithm for identifying the optimal values of the hyperparameters related to the construction of the reduced-order diffusion-map basis. For covering the cases for which the normalization introduced by the PLoM is lost with the use of a partition, we introduce constraints by using the Kullback-Leibler minimum cross-entropy principle \cite{Soize2020a}. The quantification of the concentration of the probability measure for the PLoM, which is performed with the distance introduced in \cite{Soize2020c}, is extended for the PLoM with partition, and is completed by a novel result formulated in terms of a probability upper bound of the measure of concentration.

\section{Methodology}
\label{sec:Meth}

\subsection{Supervised problem and training set}
\label{sec:Meth-1}
Let $(\bfw,\bfu)\mapsto \bff(\bfw,\bfu)$ be any measurable mapping on
$\RR^{n_w}\times \RR^{n_u}$ with values in $\RR^{n_q}$ representing a
mathematical/computational model. Let $\bfW$ be the $\RR^{n_w}$-valued
random control parameter and let $\bfU$ be a $\RR^{n_u}$-valued random
non-controlled parameter,  defined on a probability space
$(\Theta,\curT,\curP)$. The random vectors $\bfW$ and $\bfU$ are
assumed to be statistically independent and they are generally non-Gaussian. The probability distributions $P_\bfW(d\bfw)$ $= p_\bfW(\bfw)\, d\bfw$ and $P_\bfU(d\bfu) = p_\bfU(\bfu)\, d\bfu$ are defined by the probability density functions  $p_\bfW$ and $p_\bfU$ with respect to the Lebesgue measures $d\bfw$ and $d\bfu$ on $\RR^{n_w}$ and $\RR^{n_u}$.  Let $\bfQ$ be the $\RR^{n_q}$-valued random variable (QoI) defined on $(\Theta,\curT,\curP)$ such that $\bfQ = \bff(\bfW,\bfU)$. It is assumed that $N \geq 3$ independent realizations $\{\bfq_d^j,j=1,\ldots , N\}$ of $\bfQ$ have been computed  such that $\bfq_d^j = \bff(\bfw_d^j,\bfu_d^j)$ in which $\{\bfw_d^{j} , j=1,\ldots , N\}$ and $\{\bfu_d^{j} , j=1,\ldots , N\}$ are $N$ independent realizations of $\bfW$ and $\bfU$ (subscript $d$ refers to the training set).  We then consider the random variable $\bfX$ with values in $\RR^n$, such that
$\bfX = (\bfQ,\bfW)$ with $n= n_q+n_w$. The training set (initial
data set) related to random vector  $\bfX$ is then made up of the $N$
independent realizations $\{\bfx_d^{j} , j=1,\ldots , N\}$ in which
$\bfx_d^j = (\bfq_d^j,\bfw_d^j) \in \RR^n$ (note that $\bfU$ is not
included in $\bfX$). Since generally the data pertains to
heterogeneous features with potentially wildly distinct supports, it
is assumed that the training set has been suitably scaled for the
purpose of computational statistics. Let us assume that the measurable mapping $\bff$ is such that the conditional probability distribution  $P_{\bfQ\vert\bfW}(d\bfq\vert \bfw)$ given $\bfW = \bfw$ admits a conditional probability density function. It can be deduced (see \cite{Soize2020a}) that the probability distribution $P_\bfX(d\bfx)$ of $\bfX$ admits a density $\bfx\mapsto p_\bfX(\bfx)$ with respect to the Lebesgue measure $d\bfx$ on $\RR^n$. The PLoM \cite{Soize2016,Soize2020c} allows for generating the learned set made up of
$N_\ar \gg N$ realizations $\{\bfx_\ar^{\ell} , \ell=1,\ldots , N_\ar\}$ that allows for deducing
$\{(\bfq_\ar^\ell , \bfw_\ar^\ell) = \bfx_\ar^\ell, \ell=1,\ldots , N_\ar\}$ without using the computational model, but using only the training
set (subscript $ar$ refers to the learned set).
\subsection{Principal component analysis (PCA) of random vector $\bfX$}
\label{sec:Meth-2}
Let $\underline{\bfx}_d \in \RR^n$ and $[C_\bfX] \in \MM_n^+$ be the  mean vector and the covariance matrix of $\bfX$
estimated with the training set.
Let $\mu_1\geq \mu_2\geq \ldots \geq \mu_\nu > 0$ be the $\nu$ largest eigenvalues and let $\bfvarphi^1,\ldots,\bfvarphi^\nu$ be the associated orthonormal eigenvectors of $[C_\bfX]$. The integer $\nu\leq n$ is such that, for a given $\varepsilon_\pPCA > 0$, we have
$\perr_\pPCA(\nu) = 1- {\sum_{\alpha=1}^{\nu} \mu_{\alpha}}/{\tr[C_\bfX]} \leq \varepsilon_\pPCA $.
The PCA of $\bfX$ allows for representing $\bfX$ by $\bfX^\nu$ such that
$\bfX^\nu = \underline{\bfx}_d +[\Phi]\,[\mu]^{1/2}\, \bfH$ such that
$E\{\Vert \bfX-\bfX^\nu\Vert^2 \} \leq \varepsilon_\pPCA \, E\{ \Vert\bfX\Vert^2\}$,
in which $[\Phi] = [\bfvarphi^1 \ldots \bfvarphi^\nu] \in \MM_{n,\nu}$ such that $[\Phi]^T\, [\Phi] = [I_\nu]$ and $[\mu]$ is the diagonal $(\nu\times\nu)$ matrix such that $[\mu]_{\alpha\beta} = \mu_\alpha\delta_{\alpha\beta}$.
From a numerical point of view, if $N < n$, then matrix $[C_\bfX]$ is not estimated and $\nu, [\mu]$, and
$[\Phi]$ are directly computed using a thin SVD \cite{Golub1993} of the matrix whose $N$ columns are $(\bfx_d^j-\underline{\bfx}_d)$ for $j=1,\ldots , N$.
The $\RR^\nu$-valued random variable $\bfH$ is obtained by projection,
$\bfH = [\mu]^{-1/2}\, [\Phi]^T\,(\bfX -\underline{\bfx}_d)$,
and its $N$ independent realizations $\{\bfeta_d^j, j=1,\ldots , N\}$ are such that
$\bfeta_d^j = [\mu]^{-1/2}\, [\Phi]^T\,(\bfx_d^j -\underline{\bfx}_d)\in\RR^\nu$.
Using $\{\bfeta_d^j, j=1,\ldots , N\}$, the estimates of the mean vector and the covariance matrix
of $\bfH$ verify $\underline{\bfeta}_d = \bfzero_\nu$ and $[C_\bfH] = [I_\nu]$.
We define the matrix $[\eta_d] = [ \bfeta_d^1 \ldots \bfeta_d^N]  \in\MM_{\nu,N}$ whose columns are the $N$ realizations of $\bfH$, which is such that
\begin{equation} \label{eq:Meth-1}
\Vert [\eta_d] \Vert^2 = \tr\{[\eta_d]^T\, [\eta_d]\} = \sum_{j=1}^N \Vert\bfeta_d^j\Vert^2 = \nu(N-1)\, .
\end{equation}
\subsection{PLoM method with no group (No-Group PLoM)}
\label{sec:Meth-3}
The PLoM analysis used is the one presented in \cite{Soize2016} whose complete mathematical analysis is performed in \cite{Soize2020c}.

\subsubsection{Nonparametric estimate of the pdf of $\bfH$}
\label{sec:Meth-3.1}
A modification of the multidimensional Gaussian kernel-density estimation method \cite{Duong2008,Duong2005,Filippone2011,Zougab2014}
is used for constructing the nonparametric estimate $p_\bfH^{(N)}$ on $\RR^\nu$ of the pdf $p_\bfH$ of random vector $\bfH$, which is written (see Theorem~3.1 of \cite{Soize2020c} for the convergence with respect to $N$) as
\begin{equation} \label{eq:Meth-2}
p_\bfH^{(N)}(\bfeta) = \frac{1}{N} \sum_{j=1}^N \, \frac{1}{(\sqrt{2\pi}\,\widehat s)^\nu}\, \exp\{-\frac{1}{2\widehat s^2}\Vert\frac{\widehat s}{s} \, \bfeta_d^j -\bfeta\Vert^2\} \quad , \quad \forall \bfeta \in \RR^\nu \, ,
\end{equation}
in which $s = (N(\nu+2)/4)^{-{1}/{(\nu+4)}}$ is the usual Silverman bandwidth (since $[C_\bfH] = [I_\nu]$, see for instance, \cite{Bowman1997}) and where $\widehat s =   s\,(s^2 \! + \! (N \! - \! 1)/{N})^{-1/2}$ has been introduced in order that $\int_{\RR^\nu} \bfeta\, p_\bfH^{(N)}(\bfeta)\, d\bfeta = 0_\nu$ and $\int_{\RR^\nu} \bfeta \otimes \bfeta\,\, p_\bfH^{(N)}(\bfeta)\, d\bfeta = [I_\nu]$.

\subsubsection{Construction of a reduced-order diffusion-map basis (ROB-DM)}
\label{sec:Meth-3.2}
To identify the subset around which the initial data are concentrated, the PLoM relies on the diffusion-map method \cite{Coifman2006,Lafon2006}. The Gaussian kernel is used.
Let $[K]$  and  $[b]$ be the matrices such that, for all $i$ and $j$ in $\{1,\ldots , N\}$,
$[K]_{ij} = \exp\{-(4\,\varepsilon_\DM)^{-1} \Vert\bfeta_d^i-\bfeta_d^j\Vert^2\}$ and
$[b]_{ij} = \delta_{ij} \, b_i$ with $b_i = \sum_{j=1}^N [K]_{ij}$, in which $\varepsilon_\DM >0$ is a smoothing parameter (the non symmetric matrix $\PP = [b]^{-1}[K]\in\MM_N$ is the transition matrix of a Markov chain that yields the probability of transition in one step).
The eigenvalues $\lambda_1,\ldots,\lambda_N$ and the associated eigenvectors $\bfpsi^1,\ldots,\bfpsi^N$ of the right-eigen\-value problem
$[\PP]\, \bfpsi^\alpha = \lambda_\alpha\, \bfpsi^\alpha$ are such that $ 1=\lambda_1 > \lambda_2 \geq \ldots \geq \lambda_N $
and are computed by solving the generalized eigenvalue problem $[K]\, \bfpsi^\alpha = \lambda_\alpha\, [b]\,\bfpsi^\alpha$ with the normalization
$<\![b]\,\bfpsi^\alpha,\bfpsi^\beta\!> =\delta_{\alpha\beta}$. The eigenvector $\bfpsi^1$ associated with $\lambda_1=1$ is a constant vector.
For a given integer $\kappa \geq 0$,  the diffusion-map basis $\{\bfg^1,\ldots,\bfg^\alpha,\ldots, \bfg^N\}$ is a vector basis of $\RR^N$ defined by $\bfg^\alpha = \lambda^\kappa_\alpha\,\bfpsi^\alpha$.
For a given integer $m$ with $3 \leq m \leq N$, the reduced-order diffusion-map basis of order $m$ is defined as the family
$\{\bfg^1,\ldots,\bfg^m\}$ that is represented by the matrix $[g_m] = [\bfg^1 \ldots \bfg^m]\in\MM_{N,m}$ with
$\bfg^\alpha = (g_1^\alpha,\ldots ,g_N^\alpha)$ and $[g_m]_{\ell\alpha} = g_\ell^\alpha$. This ROB-DM depends on two parameters, $\varepsilon_\DM$ and $m$, which have to be identified. It is proven in \cite{Soize2020c}, that the PLoM method does not depend of $\kappa$ that can therefore be chosen to $0$.

It should be noted that, if $\nu=1$, then there is no really interest
to use the ROB-DM and in this case, we propose to take $m=N$ and
$[g_N]= [I_N]$. For non-trivial applications analyzed with the PLoM
without partition, we always have $\nu> 1$ and even, $1\ll\nu \leq
n$. However for the PLoM method with partition, optimal partitions
can be found for which some groups may have dimension $1$  (hence the consideration of possible cases of this type).

\subsubsection{Novel algorithm for identifying the optimal values $\varepsilon_\opt$ and $m_\opt$ of $\varepsilon_\DM$ and $m$}
\label{sec:Meth-3.3}
Let us assume that $\nu\geq 2$. For estimating the optimal values
$\varepsilon_\opt$ of $\varepsilon_\DM$ and $m_\opt$ of $m$, the
criterion of the eigenvalues given  in Section~5.2 of
\cite{Soize2020c} must be satisfied for the PLoM method to be applicable. This criterion can be summarized as follows. We have to find the value $m_\opt \leq N$ of $m$ and the smallest value $\varepsilon_\opt > 0 $ of $\varepsilon_\DM$ such that
\begin{equation} \label{eq:Meth-3}
1=\lambda_1 > \lambda_2(\varepsilon_\opt) \simeq \ldots \simeq \lambda_{m_\ppopt}(\varepsilon_\opt) \gg \lambda_{m_\ppopt+1}(\varepsilon_\opt)\geq \ldots \geq \lambda_N(\varepsilon_\opt) > 0\, ,
\end{equation}
with a jump in amplitude equal to $10$ between $\lambda_{m_\ppopt}(\varepsilon_\opt)$ and $\lambda_{m_\ppopt +1}(\varepsilon_\opt)$.
This property means that we have to find $m_\opt \leq N$ and the smallest positive value $\varepsilon_\opt$  in order (i) to have $\lambda_2(\varepsilon_\opt) < 1$ (one must not have several eigenvalues in the neighborhood of $1$) and (ii) to obtain a plateau  for $\lambda_2(\varepsilon_\opt)$ to $\lambda_{m_\ppopt}(\varepsilon_\opt)$ with a jump of amplitude $10$ between $\lambda_{m_\ppopt}(\varepsilon_\opt)$ and $\lambda_{m_\ppopt +1}(\varepsilon_\opt)$.
A further in-depth analysis makes it possible to state the following new criterion and algorithm to easily estimate $\varepsilon_\opt$ and $m_\opt$. Let $\varepsilon_\DM\mapsto \Jump(\varepsilon_\DM)$ be the function on $ ] 0 , + \infty [ $ defined by
$\Jump(\varepsilon_\DM) = \lambda_{m_\ppopt+1}(\varepsilon_\DM) / \lambda_2(\varepsilon_\DM)$.
\begin{algorithm}
\caption{Algorithm for estimating the optimal values $m_\opt$ of $m$ and $\varepsilon_\opt$ of $\varepsilon_\DM$}
\label{algorithm:Meth-1}
\begin{algorithmic}[1]
\IF{$\nu = 1$}
\STATE{Set the value of $m$ to $m_\opt = N$ and $[g_N] = [I_N]$.}
\ENDIF
\IF{$\nu \geq 2$}
\STATE{Set the value of $m$ to $m_\opt = \nu+1$.}
\STATE{Identify the smallest possible value $\varepsilon_\opt$ of $\varepsilon_\DM$  in order that $\Jump(\varepsilon_\opt) \leq 0.1$
and such that Eq.~\eqref{eq:Meth-3} be verified.}
\ENDIF
\end{algorithmic}
\end{algorithm}
The novel algorithm is thus given in Algorithm~\ref{algorithm:Meth-1} and
Figure~\ref{fig:figureMeth-1} shows an illustration:  we have $m_\opt =
\nu+1= 61$; the optimal value of $\varepsilon_\DM$ that satisfies the
criteria is $\varepsilon_\opt= 65$ and yields
Figure~\ref{fig:figureMeth-1a}; if a smaller value than $65$ is chosen, for
instance the value $5$, then there will be many eigenvalues close to
$1$ as shown in Figure~\ref{fig:figureMeth-1b}; if the smallest value for
$\varepsilon_\DM$ is not selected, for example taking the value $100$,
then  the plateau is not obtained as shown in
Figure~\ref{fig:figureMeth-1c}. For these two bad values of
$\varepsilon_\DM$, the calculated diffusion-map basis is not adapted
to the PLoM procedure.
\begin{figure}[tbhp]
  \centering
  \subfloat[Correct value $\varepsilon_\opt$ of $\varepsilon_\DM$]{\label{fig:figureMeth-1a}\includegraphics[width=4.2cm]{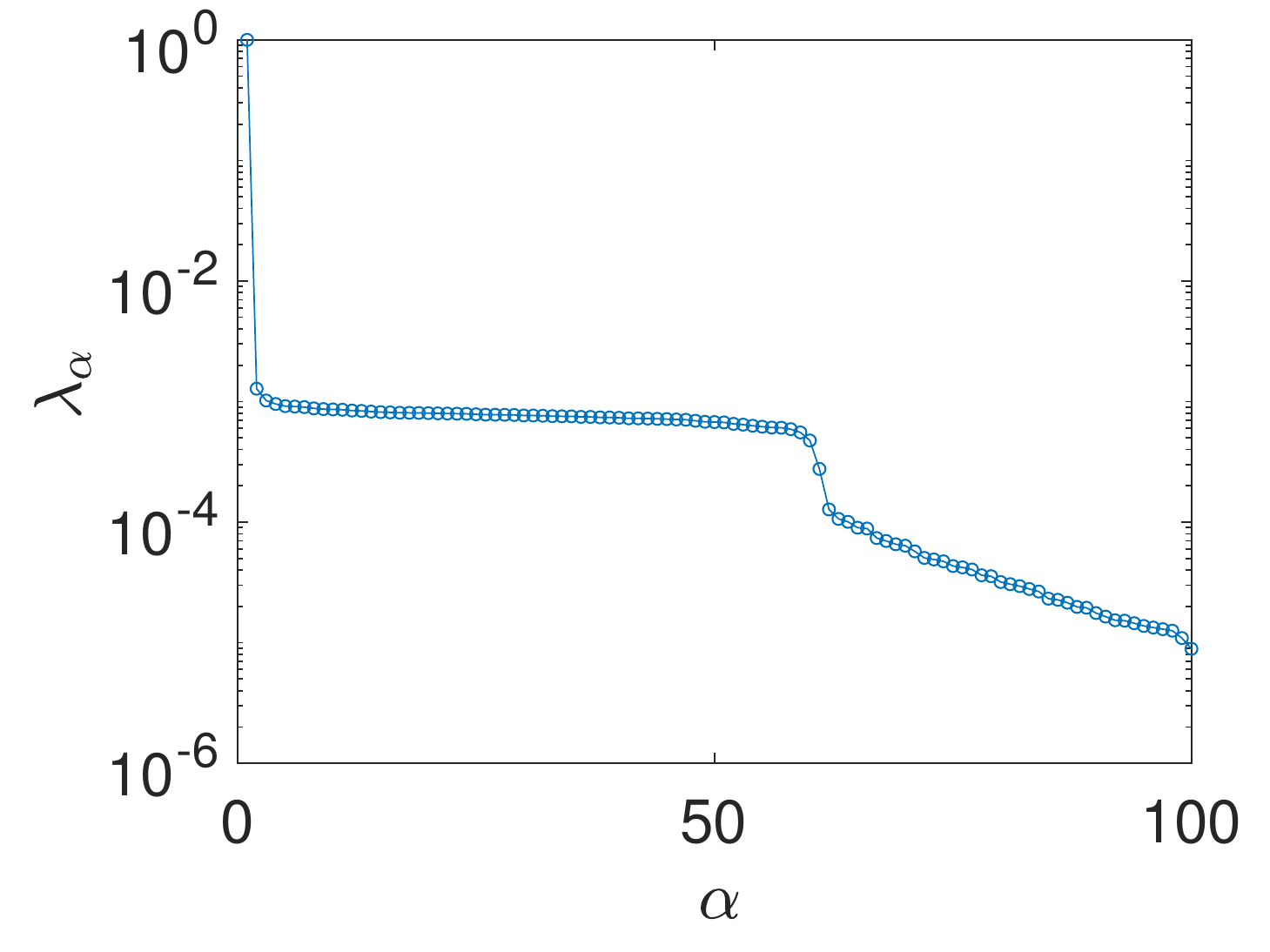}}\hfil
  \subfloat[Bad value of $\varepsilon_\DM$ that is smaller than  $\varepsilon_\opt$]{\label{fig:figureMeth-1b}\includegraphics[width=4.2cm]{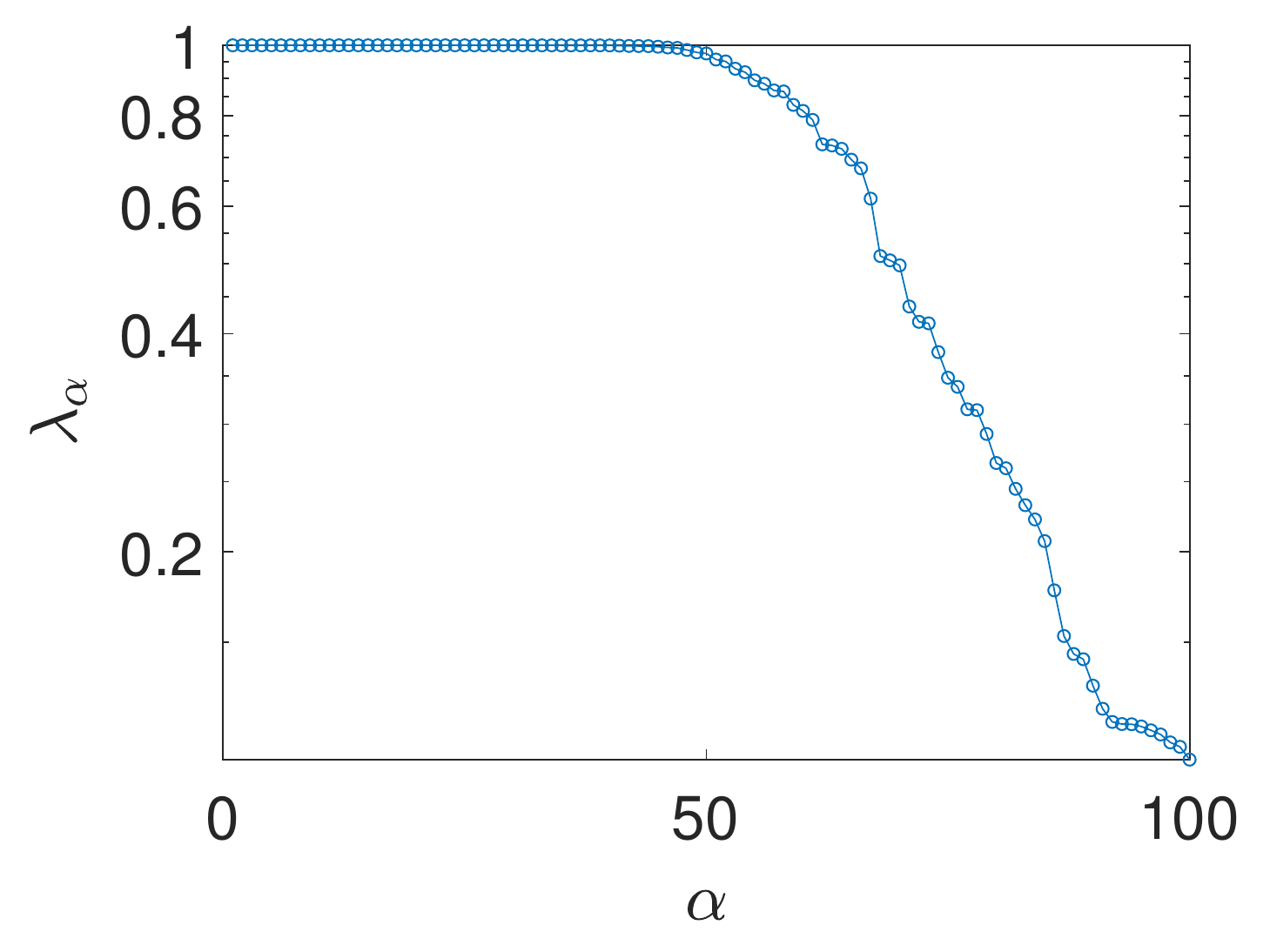}} \hfil
  \subfloat[Bad value of $\varepsilon_\DM$ that is larger than  $\varepsilon_\opt$]{\label{fig:figureMeth-1c}\includegraphics[width=4.2cm]{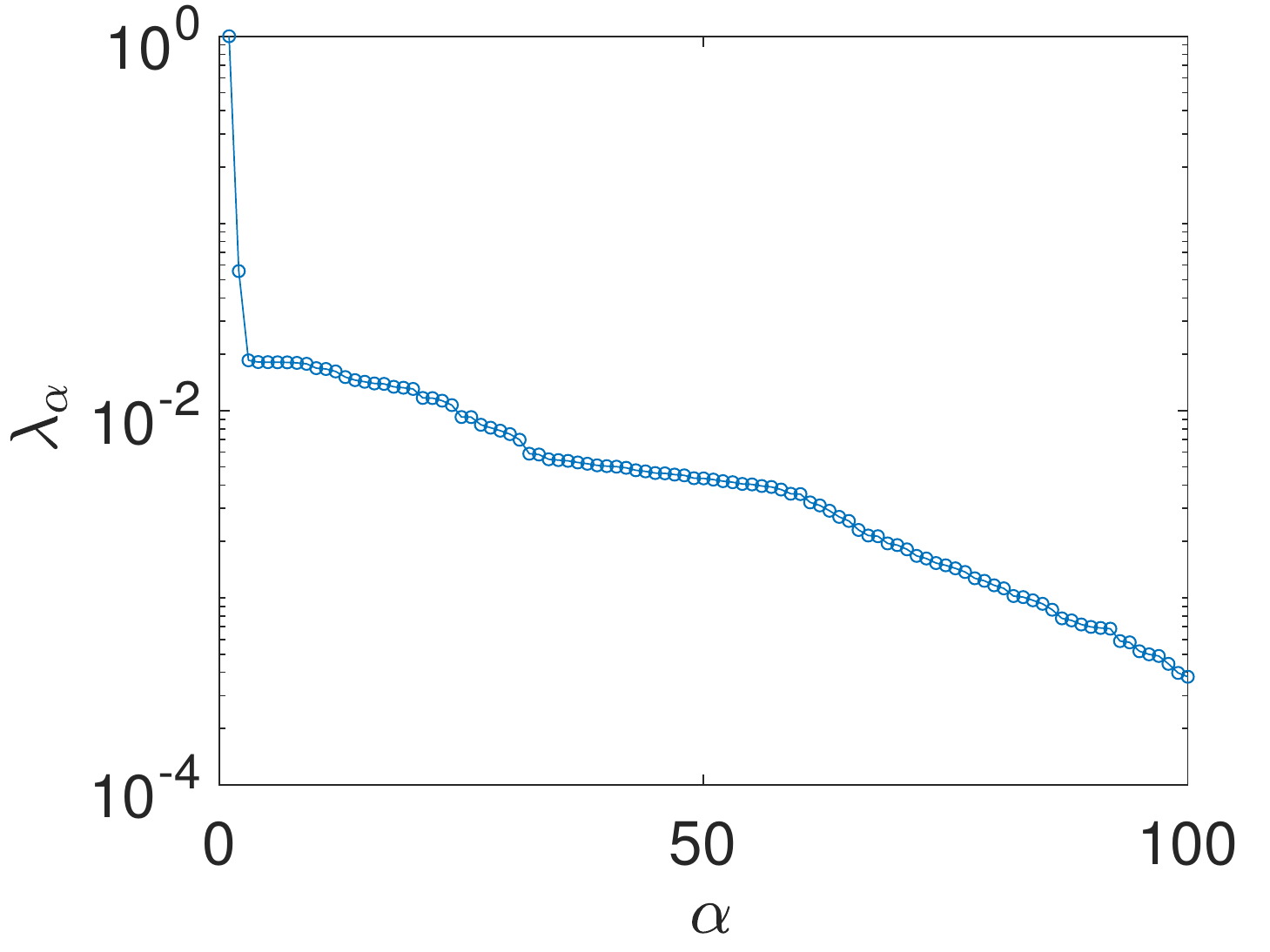}}
  \caption{Illustration of the  criterion effects defined by Eq.~\eqref{eq:Meth-3} for identifying $\varepsilon_\opt$.}
  \label{fig:figureMeth-1}
\end{figure}
\subsubsection{Random matrices $[\,\bfH^N]$, $[\,\bfH^N_{m}]$, $[\,\bfH^N_{m_\ppopt}]$, and MCMC generator}
\label{sec:Meth-3.4}
Let $\bfH^{(N)}$ be the $\RR^\nu$-valued random variable defined on $(\Theta,\curT,\curP)$ for which the pdf is $p_\bfH^{(N)}$ defined by Eq.~\eqref{eq:Meth-2}. Let $[\bfH^N]$ be the random matrix with values in $\MM_{\nu,N}$ such that $[\bfH^N] = [\bfH^1\ldots \bfH^N]$ in which $\bfH^1, \ldots , \bfH^N$ are $N$ independent copies of $\bfH^{(N)}$. It can be seen that $E\{\bfH^{(N)}\} = \bfzero_\nu$ and
$E\{\bfH^{(N)}\otimes \bfH^{(N)} \} =[I_\nu]$. Note that $\bfH^1, \ldots , \bfH^N$ are not taken as $N$ independent copies of $\bfH$ whose pdf $p_\bfH$ is unknown, but are taken as $N$ independent copies of $\bfH^{(N)}$ whose pdf $p_\bfH^{(N)}$ is known.
The PLoM method introduces the $\MM_{\nu,N}$-valued random matrix
$[\bfH^N_{m}] = [\bfZ_{m}]\, [g_{m}]^T$ with $3 \leq m \leq N$ ,
corresponding to a data-reduction representation of random matrix
$[\bfH^N]$, in which $[g_{m}]$ is the ROB-DM and where $[\bfZ_{m}]$ is
a $\MM_{\nu,m}$-valued random matrix for which its probability measure
$p_{[\bfZ_{m}]}([z])\, d[z]$ is explicitly described  by Proposition~2
of \cite{Soize2020c}. In the PLoM method, the MCMC generator of random
matrix $[\bfZ_{m}]$ belongs to the class of Hamiltonian Monte Carlo
methods \cite{Neal2011}, is explicitly described in \cite{Soize2016},
and is mathematically detailed in Theorem~6.3 of
\cite{Soize2020c}. For generating the learned set, the best
probability measure of $[\,\bfH^N_{m}]$  is obtained for $m = m_\opt$
and using the previously defined $[g_{m_\ppopt}]$. For these optimal quantities $m_\opt$ and $[g_{m_\ppopt}]$, the generator allows for computing $n_\MC$ realizations $\{[\bfz_\ar^{\ell}],\ell=1,\ldots , n_\MC\}$ of  $[\bfZ_{m_\ppopt}]$ and therefore, for deducing the $n_\MC$ realizations $\{[\bfeta_\ar^\ell],\ell=1,\ldots ,n_\MC\}$ of $[\bfH^N_{m_\ppopt}]$. The reshaping of matrix $[\bfeta_\ar^\ell] \in \MM_{\nu,N}$ allows for obtaining $N_\ar = n_\MC \times N$ additional realizations
$\{\bfeta_\ar^{\ell'},\ell' =1,\ldots ,N_\ar\}$ of $\bfH$. These additional realizations allow for estimating converged statistics on $\bfH$ and then on $\bfX$, such as pdf, moments, or conditional expectation of the type $E\{\bfxi(\bfQ) \,\vert \,\bfW = \bfw_0\}$ for $\bfw_0$ given in $\RR^{n_w}$ and for any given vector-valued function $\bfxi$ defined on $\RR^{n_q}$.

\subsubsection{Quantifying the concentration of the probability measure of random matrix $[\bfH_{m_\ppopt}^N]$}
\label{sec:Meth-3.5}
In \cite{Soize2020c}, for $3\leq m\leq N$, we have introduce a $L^2$-distance $d_N(m)$ of random matrix $[\bfH_m^N]$ to matrix $[\eta_d]$  in order to quantify the concentration of the probability measure of random matrix $[\bfH^N_{m}]$, which is informed by the initial data set represented by matrix $[\eta_d]$. The square of this distance is defined by
\begin{equation} \label{eq:Meth-4}
d_N^2(m) = E\{\Vert [\bfH_m^N] - [\eta_d]\Vert^2\} /\Vert [\eta_d]\Vert^2\, .
\end{equation}
Let $\curM_\opt =\{m_\opt,m_\opt+1,\ldots , N\}$ in which $m_\opt$ is the optimal value of $m$ previously defined.
Theorem~7.8 of \cite{Soize2020c} shows that
$\min_{m\in\curM_\opt} d_N^2(m)  \leq 1 + m_\opt/(N-1) < d_N^2(N)$
which means that the PLoM method, for $m=m_\opt$ and $[g_{m_\ppopt}]$ is a better method than the usual one corresponding to $d_N^2(N)= 1+N/(N-1)$.
Using the $n_\MC$ realizations $\{[\bfeta_\ar^\ell],\ell=1,\ldots ,n_\MC\}$ of $[\bfH^N_{m_\ppopt}]$, we have the estimate
$d_N^2(m_\opt) \simeq (1/n_\MC)\sum_{\ell=1}^{n_\ppMC}\{\Vert [\bfeta_\ar^\ell]  - [\eta_d]\Vert^2\} /\Vert [\eta_d]\Vert^2$.

\subsection{PLoM analysis with group (With-Group PLoM)}
\label{sec:Meth-4}
In this section, for $\nu\geq 2$, we present the extension of the PLoM analysis for which statistically independent  groups are constructed using an optimal partition  of
random vector $\bfH$.

\subsubsection{Construction of the optimal partition of $\bfH$}
\label{sec:Meth-4.1}
From the training set $\{\bfeta^{j},j=1,\ldots,$ $N\}$, the optimal partition of $\bfH =(H_1,\ldots , H_\nu)$ is performed using the algorithm proposed in \cite{Soize2017f}. Such a partition is composed of $n_p$ groups consisting in $n_p$ mutually independent random vectors $\bfY^1,\ldots ,\bfY^{n_p}$. Since $\bfH$ is a normalized random vector (zero mean vector and covariance matrix equal to the identity matrix), for $i=1,\ldots, n_p$, $\bfY^i$ is a normalized $\RR^{\nu_i}$-valued random variable
$\bfY^i = (Y^i_1,\ldots , Y^i_{\nu_i}) = (H_{r_1^i},\ldots , H_{r_{\nu_i}^i})$ in which $1 \leq r_1^i < r_2^i < \ldots < r_{\nu_i}^i\leq \nu$, with $\nu =\nu_1 + \ldots + \nu_{n_p}$, and where $Y^i_k = H_{r_k^i}$. Random vector $\bfY^i$ is non-Gaussian and such that the estimate of its mean vector is $\underline{\bfeta^i} = \bfzero_{\nu_i}$ and the estimate of its covariance matrix is $[C_{\bfY^i}] = [I_{\nu_i}]$.
We then have $\bfH = {perm}(\bfY^1,\ldots ,\bfY^{n_p})$ in which \textit{perm} is the permutation operator acting on the components of vector $\widetilde\bfH = (\bfY^1,\ldots ,\bfY^{n_p})$ in order to reconstitute $\bfH = perm(\widetilde\bfH)$.
For each group $i$, the training set is represented by the matrix $[\eta^i_d]\in\MM_{\nu_i,N}$ whose columns are the $N$ realizations $\{\bfeta_d^{i,j},j=1,\ldots,N \}$ of the $\RR^{\nu_i}$-valued random variable $\bfY^i$,  which are deduced from an adapted extraction (due to the permutations)  of the components of vectors $\{\bfeta_d^{j},j=1,\ldots,N\}$.
The partition is identified by constructing the function $i_\pref\mapsto \tau(i_\pref)$  of the mutual information defined by Eq.~(3.44) of \cite{Soize2017f} and then by deducing the optimal level $i_\pref^{\,\opt}$ defined by Eq.~(3.46) of \cite{Soize2017f}.

\subsubsection{Use of the PLoM for each independent group}
\label{sec:Meth-4.2}
Let $i$ be fixed in $\{1,\ldots , n_p\}$. The PLoM method (summarized in Section~\ref{sec:Meth-3}) is applied to the $\RR^{\nu_i}$-valued random variable $\bfY^i$ of the optimal partition  $\bfY^1,\ldots ,\bfY^{n_p}$ of  $\bfH ={perm}(\bfY^1,\ldots ,\bfY^{n_p})$. The parameters of the PLoM are thus the following.

\noindent 1) The Silverman bandwidth is $s_i = (N(\nu_i+2)/4)^{-{1}/{(\nu_i+4)}}$  (since $[C_{\bfY^i}] = [I_{\nu_i}]$) and the modified bandwidth is $\widehat s_i =   s_i\,(s_i^2 \! + \! (N \! - \! 1)/{N})^{-1/2}$.

\noindent 2) Algorithm~\ref{algorithm:Meth-1} is used. If $\nu_i=1$, then $m_{i,\opt} = N$ and $[g^i_N] = N$. If $\nu_i\geq 2$, the optimal parameter $m_{i,\opt}$ of the dimension $m_i$ of the ROB-DM${}^i$ is such that $m_{i,\opt} = \nu_i +1$. The optimal parameter $\varepsilon_{i,\opt}$ of  $\varepsilon_{i,\DM}$ is calculated as explained in Section~\ref{sec:Meth-3.2}. The ROB-DM${}^i$ of order $m_{i,\opt}$ is represented by the matrix $[g^i_{m_{i,\ppopt}}]\in\MM_{N,m_{i,\ppopt}}$.

\noindent 3) The learned set of the random matrix $[\,\bfY^{N,i}_{m_{i,\ppopt}}] = [\bfZ^i_{m_{i,\ppopt}}]\, [g^i_{m_{i,\ppopt}}]^T$  is computed for $m_i = m_{i,\opt}$ and by using $[g^i_{m_{i,\ppopt}}]$.  Finally, the $n_\MC$ realizations $\{ [\bfeta_\ar^{i,\ell}],\ell=1,\ldots ,n_\MC\}$ of $[\bfY^{N,i}_{m_{i,\ppopt}}]$ are computed with the MCMC generator and by reshaping, we obtain the $N_\ar = n_\MC \times N$ additional realizations $\{\bfeta_\ar^{i,\ell'},\ell'=1,\ldots,N_\ar\}$.

\subsubsection{Possible lost of the normalization}
\label{sec:Meth-4.3}
Numerical experiments have been done for numerous cases with respect to the number of groups and the dimension of each group. These experiments have shown the following.
In general, the mean value of $\bfY^i$, estimated using the
$N_\ar$ additional realizations
$\{\bfeta_\ar^{i,\ell'},\ell'=1,\ldots,N_\ar\}$, is sufficiently close
to zero. Likewise, the estimate of the covariance matrix is
sufficiently close to a diagonal matrix. However, sometimes the
diagonal of the estimated  covariance matrix can be lower than $1$
(for instance $0.5$). Such a case can occur for relatively small value
of $\nu_i$ (but not systematically and not only; this behavior is
application-dependent).  In these situations, normalization and
structure can be recovered by imposing constraints in the PLoM method.

\subsubsection{Constraints on the second-order moments of the components of $\bfY^i$ if loss of normalization occurs}
\label{sec:Meth-4.4}
As explained in Section~\ref{sec:Meth-4.3}, if appropriate for group $i$,
constraints $\{ E\{(Y^i_k)^2\}  = 1 , k=1,\ldots, \nu_i\}$ can be readily introduced  in the PLoM.
For that, we use the method and the iterative algorithm presented in Sections~5.5 and 5.6 of \cite{Soize2020a}.
The method consists of constructing the generator using the
PLoM for each independent group, defined in Section~\ref{sec:Meth-4.2}, and the  Kullback-Leibler minimum cross-entropy principle. The
resulting optimization problem is formulated using Lagrange multipliers associated with the constraints. The optimal solution of the Lagrange multipliers is
computed using an efficient iterative algorithm. At each iteration, the MCMC generator of the PLoM is used.
The  constraints are rewritten as
\begin{equation} \label{eq:Meth-5}
E\{\bfh^i(\bfY^i)\} = \bfb^i \, ,
\end{equation}
in which the function $\bfh^i = (h^i_1,\ldots ,h^i_{\nu_i})$ and the vector $\bfb^i = (b^i_1,\ldots , b^i_{\nu_i})$ are such that
$h^i_k(\bfY^i) = (Y^i_k)^2$ and $b^i_k =1$ for $k$ in $\{1,\ldots , \nu_i\}$.
Eqs.~(71) and (72) of \cite{Soize2020a} involve the Lagrange multiplier $\bflambda = (\lambda_1,\ldots,\lambda_{\nu_i})\in \RR^{\nu_i}$ associated with the constraints defined by Eq.~\eqref{eq:Meth-5}. These two equations, which define the nonlinear mapping $[u]\mapsto [L^i_\bflambda([u])]$ from $\MM_{\nu_i,N}$ into $\MM_{\nu_i,N}$ (drift of the It\^o stochastic differential equation of the PLoM generator), have to be modified as follows.
For $\alpha=1,\ldots ,\nu_i$, for $\ell=1,\ldots ,N$,  and for $[u] = [\bfu^1 \ldots \bfu^N]$ in $\MM_{\nu_i,N}$, we have
\begin{align}
& [L^i_\bflambda([u])]_{\alpha\ell}  = \frac{1}{\rho_i(\bfu^\ell)} \frac{\partial \rho_i(\bfu^\ell)}{ \partial u_\alpha^\ell}
                       - 2\lambda_\alpha u_\alpha^\ell \, , \nonumber \\
 \rho_i(\bfu^\ell)  = & \frac{1}{N} \sum_{j=1}^N \, \frac{1}{(\sqrt{2\pi}\,\widehat s_i)^{\nu_i}} \, \exp\{-\frac{1}{2\widehat s_i^2}\Vert\frac{\widehat s_i}{s_i} \, \bfeta_d^{i,j} -\bfu^\ell\Vert^2\} \, . \nonumber
\end{align}
The iteration algorithm computes the sequence $\{\bflambda^\iota\}_{\iota \geq 1}$ that is convergent. If difficulties of convergence appear, a relaxation factor (less than $1$) is introduced for computing $\bflambda^{\iota+1}$ as a function of $\bflambda^\iota$.
For controlling the convergence as a function of iteration number $\iota$, we use the error function $\iota\mapsto \perr_i(\iota)$  defined by
\begin{equation} \label{eq:Meth-6}
\perr_i(\iota) =  \Vert \bfb^i - E\{\bfh^i(\bfY^i_{\bflambda^\iota})\} \Vert / \Vert \bfb^i\Vert \, .
\end{equation}
At each iteration $\iota$, $E\{\bfh^i(\bfY^i_{\bflambda^\iota})\}$ is estimated with the $N_\ar$ additional realizations deduced by reshaping of the
$n_\MC$ realizations of the $\MM_{\nu_i,N}$-valued random matrix $[\bfY^{N,i}_{m_{i,\ppopt}}(\bflambda^\iota)]$ that depends on $\bflambda^\iota$. These realizations are generated by the MCMC algorithm of the PLoM under the constraints.

\subsubsection{Learned data set generated by With-Group PLoM}
\label{sec:Meth-4.5}
We have seen above (see Section~\ref{sec:Meth-4.2}-(3) how  the learned set $\{[\bfeta_\ar^{i,\ell},\ell=1,\ldots,n_\MC\}$ of random matrix $[\bfY^{N,i}_{m_{i,\ppopt}}]$ are generated using With-Group PLoM for each group $i=1,\ldots, n_p$ (using or not the constraints). From this information, we can directly deduce the learned set
$\{ [\bfeta_\ar^{wg,\ell}],\ell =1,\ldots,n_\MC \}$ of $[\bfH^{wg,N}_{\bfm_\ppopt}]$ that corresponds to the concatenation with an adapted extraction of the rows (due to the permutations) of matrices  $\{[\bfY^{N,i}_{m_{i,\ppopt}}] , i=1,\ldots , n_p\} $ and where $\bfm_\opt = (m_{1,\opt},\ldots , m_{n_p,\opt})$. We have introduced a superscript $wg$ for distinguishing With-Group PLoM from No-Group PLoM.
The reshaping of matrix $[\bfeta_\ar^{wg,\ell}] \in \MM_{\nu,N}$ allows for obtaining $N_\ar = n_\MC \times N$ additional realizations
$\{\bfeta_\ar^{wg,\ell'},\ell' =1,\ldots ,N_\ar\}$ of $\bfH$, computed using With-Group PLoM.

\subsubsection{Quantifying the concentration of the probability measure of random matrices $[\bfH^{wg,N}_{\bfm_\ppopt}]$ and $\{[\bfY^{N,i}_{m_\ppopt}], i=1,\ldots, n_p\}$}
\label{sec:Meth-4.6}
For $\bfm_\opt = (m_{1,\opt},\ldots ,m_{n_p,\opt})$, the square of the
distance  of the random matrix $[\bfH^{wg,N}_{\bfm_\ppopt}]$ to matrix
$[\eta_d]$ is directly given by Eq.~\eqref{eq:Meth-4} which is rewritten
here as,
\begin{equation} \label{eq:Meth-7}
d_{wg,N}^2(\bfm_\opt) = E\{\Vert [\bfH^{wg,N}_{\bfm_\ppopt}] - [\eta_d]\Vert^2\} /\Vert [\eta_d]\Vert^2\, .
\end{equation}
The mathematical expectation is estimated using the $n_\MC$ realizations $\{[\bfeta_\ar^{wg,\ell}],\ell=1,\ldots ,n_\MC\}$. Using again  Eq.~\eqref{eq:Meth-4},  for $i \in\{1,\ldots , n_p\}$, the square of the distance  of random matrix $[\bfY^{N,i}_{m_{i,\ppopt}}]$ to matrix $[\eta^i_d]$ is given by
\begin{equation} \label{eq:Meth-8}
d_{i,N}^2(m_{i,\opt}) = E\{\Vert [\bfY^{N,i}_{m_{i,\ppopt}}] - [\eta^i_d]\Vert^2\} /\Vert [\eta^i_d]\Vert^2\, ,
\end{equation}
which is estimated using the $n_\MC$ realizations $\{[\bfeta_\ar^{i,\ell}],\ell=1,\ldots ,n_\MC\}$. Eq.~\eqref{eq:Meth-7} contains the information defined by Eq.~\eqref{eq:Meth-8}. Indeed it is easy to verify that we have the relation
\begin{equation} \label{eq:Meth-9}
d_{wg,N}^2(\bfm_\opt) = \sum_{i=1}^{n_p} (\nu_i/\nu) \, d_{i,N}^2(m_{i,\opt})\, .
\end{equation}
\subsubsection{How to quantify the gain obtained by using With-Group PLoM instead of No-Group PLoM when $n_p > 1$}
\label{sec:Meth-4.7}
For a given application, the first method consists in numerically comparing the estimates of $d_{wg,N}^2(\bfm_\opt)$ defined by Eq.~\eqref{eq:Meth-7} with $d_N^2(m_\opt)$ defined by
Eq.~\eqref{eq:Meth-4}. If there is a gain, we must have
\begin{equation} \label{eq:Meth-10}
d_{wg,N}^2(\bfm_\opt) < d_N^2(m_\opt)\, .
\end{equation}
This expected inequality for any applications for which $n_p > 1$ is
reinforced by the second method, which is encapsulated by the following proposition.
\begin{proposition}[Probability upper bound of the measure of concentration] \label{proposition:Meth-1}
Let $\varepsilon$ be a given real number such that $0< \varepsilon < 1$.
Let $d_N^2(m_\opt)$ be defined by Eq.~\eqref{eq:Meth-4} for $m=m_\opt$. We then have
\begin{equation} \label{eq:Meth-11}
{\rm Proba}\{\Vert [\bfH_{m_\ppopt}^N] - [\eta_d]\Vert^2 /\Vert [\eta_d]\Vert^2 \geq \varepsilon\} \leq d_N^2(m_\opt)/\varepsilon \, .
\end{equation}
Let $r$ be the positive real number (geometric mean) such that
$r = \{\Pi_{i=1}^{n_p} d_{i,N}^2(m_{i,\opt})\}^{1/n_p}$ in which $d_{i,N}^2(m_{i,\opt})$ is defined by
Eq.~\eqref{eq:Meth-8}.  We then have
\begin{equation} \label{eq:Meth-12}
{\rm Proba}\{\Vert [\bfH^{wg,N}_{\bfm_\ppopt}] - [\eta_d]\Vert^2 /\Vert [\eta_d]\Vert^2 \geq \varepsilon \}  \leq (r/\varepsilon )^{n_p}\, .
\end{equation}
\end{proposition}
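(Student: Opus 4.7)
The two inequalities invite separate treatments. For~\eqref{eq:Meth-11}, by Eq.~\eqref{eq:Meth-1} the quantity $\Vert[\eta_d]\Vert^2 = \nu(N-1)$ is a deterministic constant, so $Y = \Vert[\bfH_{m_\ppopt}^N] - [\eta_d]\Vert^2/\Vert[\eta_d]\Vert^2$ is a non-negative random variable with $E\{Y\} = d_N^2(m_\opt)$ by~\eqref{eq:Meth-4}. Markov's inequality applied to $Y$ immediately yields ${\rm Proba}\{Y\geq\varepsilon\}\leq E\{Y\}/\varepsilon$, which is exactly~\eqref{eq:Meth-11}.

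For~\eqref{eq:Meth-12}, the plan is to exploit the independence of the partition to upgrade this single-variable Markov bound into a product form. The first step is to decompose the Frobenius norm along the partition: since $[\bfH^{wg,N}_{\bfm_\ppopt}]$ and $[\eta_d]$ each split (up to row permutations) into $n_p$ blocks associated with the independent groups,
\begin{equation*}
\Vert[\bfH^{wg,N}_{\bfm_\ppopt}] - [\eta_d]\Vert^2 = \sum_{i=1}^{n_p}\Vert[\bfY^{N,i}_{m_{i,\ppopt}}]-[\eta^i_d]\Vert^2, \qquad \Vert[\eta^i_d]\Vert^2 = \nu_i(N-1).
\end{equation*}
Setting $X_i = \Vert[\bfY^{N,i}_{m_{i,\ppopt}}]-[\eta^i_d]\Vert^2/\Vert[\eta^i_d]\Vert^2$, one has $E\{X_i\} = d_{i,N}^2(m_{i,\opt})$, and the $X_i$ are mutually independent because the groups $\bfY^1,\ldots,\bfY^{n_p}$ are independent and the group-level PLoM generators draw independent randomness (Section~\ref{sec:Meth-4.2}).

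Once the $X_i$ are in hand, the product bound follows from Markov applied group-by-group combined with independence. Indeed ${\rm Proba}\{X_i\geq\varepsilon\}\leq d_{i,N}^2(m_{i,\opt})/\varepsilon$, and by independence of the $X_i$,
\begin{equation*}
{\rm Proba}\Bigl\{\bigcap_{i=1}^{n_p}\{X_i\geq\varepsilon\}\Bigr\} = \prod_{i=1}^{n_p}{\rm Proba}\{X_i\geq\varepsilon\} \leq \frac{\prod_{i=1}^{n_p}d_{i,N}^2(m_{i,\opt})}{\varepsilon^{n_p}} = (r/\varepsilon)^{n_p}.
\end{equation*}

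The delicate step---and the one I expect to be the crux---is tying the event $\{\Vert[\bfH^{wg,N}_{\bfm_\ppopt}] - [\eta_d]\Vert^2/\Vert[\eta_d]\Vert^2\geq\varepsilon\}$ to $\bigcap_i\{X_i\geq\varepsilon\}$. By the pointwise analogue of Eq.~\eqref{eq:Meth-9}, the left-hand event is $\{\sum_i(\nu_i/\nu)X_i\geq\varepsilon\}$, which does not automatically force every $X_i\geq\varepsilon$. Two natural ways to close the argument: (i) read~\eqref{eq:Meth-12} as a joint concentration statement across the $n_p$ groups (each group's normalized deviation simultaneously $\geq\varepsilon$), in which case the computation above finishes the proof directly; or (ii) apply Markov instead to the product $\prod_i X_i$, whose expectation factorizes to $r^{n_p}$ by independence, and then invoke a weighted AM-GM argument to pass from the product event to the sum event appearing in~\eqref{eq:Meth-12}.
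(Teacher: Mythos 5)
Your handling of Eq.~\eqref{eq:Meth-11} is exactly the paper's: Markov's inequality applied to the non-negative variable $\Vert[\bfH_{m_\ppopt}^N]-[\eta_d]\Vert^2/\Vert[\eta_d]\Vert^2$, whose expectation is $d_N^2(m_\opt)$ by definition. For Eq.~\eqref{eq:Meth-12}, your block decomposition of the Frobenius norm, the independence of the groups, the group-by-group Markov bound, and the product over groups reproduce the paper's steps almost verbatim (the paper writes $\Xi_i=\Vert[\bfY^{N,i}_{m_{i,\ppopt}}]-[\eta^i_d]\Vert^2$ and $\zeta_i=\Vert[\eta^i_d]\Vert^2$, i.e.\ your $X_i=\Xi_i/\zeta_i$).

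The ``delicate step'' you single out is precisely where the paper's own proof is weakest, so your diagnosis is on target. The paper's step (iii) asserts the \emph{equality} ${\rm Proba}\{\cap_{i=1}^{n_p}\{\Xi_i/\zeta_i\geq\varepsilon\}\}={\rm Proba}\{\sum_i\Xi_i\geq\varepsilon\sum_i\zeta_i\}$, but its step (ii) only establishes the inclusion $\cap_{i=1}^{n_p}\{\Xi_i\geq\varepsilon\zeta_i\}\subseteq\{\sum_i\Xi_i\geq\varepsilon\sum_i\zeta_i\}$; the reverse inclusion fails in general (the weighted average $\sum_i(\nu_i/\nu)X_i$ can exceed $\varepsilon$ without every $X_i$ doing so), and the one inclusion that does hold bounds the probability of the aggregate event from \emph{below}, which is the wrong direction for the claimed upper bound. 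In effect the paper adopts your option (i): what is rigorously established is the joint concentration statement ${\rm Proba}\{\cap_{i=1}^{n_p}\{X_i\geq\varepsilon\}\}\leq(r/\varepsilon)^{n_p}$, and Eq.~\eqref{eq:Meth-12} as literally written is justified only under that reading of the event. Your option (ii) does not repair this: weighted AM--GM gives $\sum_i(\nu_i/\nu)X_i\geq\prod_i X_i^{\nu_i/\nu}$, so the (weighted) product event is again \emph{contained} in the sum event, and Markov applied to the product still only bounds the probability of a smaller event. So your computation matches the paper's and your identification of the crux is correct; just be aware that neither of your two closing strategies (nor the paper's equality assertion) actually upgrades the product bound into an upper bound on the probability of the aggregate relative distance exceeding $\varepsilon$.
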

\begin{proof} [Proof of Proposition~\ref{proposition:Meth-1}]
(i) Using the Markov inequality to the left hand-side member of Eq.~\eqref{eq:Meth-11} directly yields Eq.~\eqref{eq:Meth-11}.
(ii) Let us introduce the simplified following notations:
$\Xi_i = \Vert [\bfY^{N,i}_{m_{i,\ppopt}}] - [\eta^i_d]\Vert^2$ and $\zeta_i = \Vert [\eta^i_d]\Vert^2$. Therefore, Eq.~\eqref{eq:Meth-8} can be rewritten as
$d_{i,N}^2(m_{i,\opt}) = E\{\Xi_i\}/\zeta_i$. If $\forall i\in \{1,\ldots ,n_p\}$ we have $\Xi_i \geq \varepsilon\, \zeta_i \,\,  a.s$, then
$\Vert [\bfH^{wg,N}_{\bfm_\ppopt}] - [\eta_d]\Vert^2 = \sum_{i=1}^{n_p} \Xi_i \geq \varepsilon \sum_{i=1}^{n_p} \zeta_i = \varepsilon \, \Vert [\eta_d]\Vert^2
\,\, a.s$, that is to say $\Vert [\bfH^{wg,N}_{\bfm_\ppopt}] - [\eta_d]\Vert^2 / \Vert [\eta_d]\Vert^2 \geq \varepsilon \,\, a.s$.
(iii) Using result (ii) above, it can be deduced that
${\rm Proba}\{ \cap_{i=1}^{n_p} \{ \Xi_i/\zeta_i \geq \varepsilon\} \}
   = {\rm Proba}\{\Vert [\bfH^{wg,N}_{\bfm_\ppopt}] - [\eta_d]\Vert^2  / \Vert [\eta_d]\Vert^2 \geq \varepsilon \}$.
\noindent (iv) Due to the partition, the random matrices $[\bfY^{N,1}_{m_{1,\ppopt}}],\ldots , [\bfY^{N,n_p}_{m_{n_p,\ppopt}}]$ are statistically independent, and thus  $\Xi_1,\ldots , \Xi_{n_p}$ are statistically independent. Therefore, we can write,
${\rm Proba}\{ \cap_{i=1}^{n_p} \{ \Xi_i/\zeta_i \geq \varepsilon\} \}
   =    \Pi_{i=1}^{n_p} {\rm Proba}\{\Xi_i /\zeta_i \geq \varepsilon \}$.
\noindent (v) The results (iii) and (iv) above yield
${\rm Proba}\{\Vert [\bfH^{wg,N}_{\bfm_\ppopt}] - [\eta_d]\Vert^2  / \Vert [\eta_d]\Vert^2 \geq \varepsilon \}
   =    \Pi_{i=1}^{n_p} {\rm Proba}\{\Xi_i /\zeta_i \geq \varepsilon \}$.
The use of the Markov inequality allows us to write,
${\rm Proba}\{\Xi_i /\zeta_i \geq \varepsilon \} \leq E\{\Xi_i\}/(\varepsilon\,\xi_i) = d_{i,N}^2(m_{i,\opt})/\varepsilon$. Substituting this inequation into the  right hand-side member of the last equality allows us to write
${\rm Proba}\{\Vert [\bfH^{wg,N}_{\bfm_\ppopt}] - [\eta_d]\Vert^2  / \Vert [\eta_d]\Vert^2 \geq \varepsilon \}
   \leq    \Pi_{i=1}^{n_p} \{ d_{i,N}^2(m_{i,\opt})/\varepsilon \}$ $ = (r/\varepsilon )^{n_p}$,
which is Eq.~\eqref{eq:Meth-12}.
\end{proof}

\section{Application~1}
\label{sec:Ap1}

The probabilistic model is chosen so that the partition in terms of statistically independent groups
is known. This will serve to validate the proposed methodology. This application can easily be reproduced. We directly construct the normalized non-Gaussian $\RR^\nu$-valued random variable $\bfH = (H_1,\ldots , H_\nu)$ with $\nu = 60$. Its probabilistic model is described in Appendix~\ref{AppendixA}. The random vector $\bfX$ from which $\bfH$ is deduced by a PCA is not constructed. It should be noted that this application is very difficult for the learning methods taking into account the high degree of the polynomials in the model, which induces a complexity of the geometry of the support of the probability measure  of $\bfH$.

A reference data set with  $N_\pref = 1\, 000\, 000$ independent realizations and the training set with $N = 1\, 200$ independent realizations $\{\bfeta_d^{j},j=1,\ldots,N\}$ are generated using the probabilistic model of $\bfH$. The learned set is generated by the PLoM method (without or with groups) with $N_\ar = 1\, 200\, 000$ realizations $\{\bfeta_\ar^{\ell},\ell=1,\ldots,N_\ar\}$ ($N_\ar = n_\MC\times N$ with $n_\MC = 1\, 000$).
It should be noted that the mean vector $\underline{\bfeta}$ and  the covariance matrix $[C_\bfH]$ of $\bfH$, which are estimated with the
$N$ realizations of the training set,  are such that $\underline{\bfeta}_d = \bfzero_\nu$ and $[C_\bfH] = [I_\nu]$.
\subsection{PLoM analysis with no group (No-Group PLoM)}
\label{sec:Ap1-1}
Algorithm~\ref{algorithm:Meth-1} is used for the calculation of the reduced-order diffusion-map basis $[g_{m_\ppopt}]$ of the $\RR^\nu$-valued random variable $\bfH$. The optimal dimension is $m_\opt = \nu+1 = 61$.
Figure~\ref{fig:figureAP1-1a} displays the function $\varepsilon_\DM \mapsto \Jump(\varepsilon_\DM)$ and shows that the optimal value $\varepsilon_\opt$ of the smoothing parameter $\varepsilon_\DM$ is $\varepsilon_\opt= 656$ for which  $\Jump(\varepsilon_\opt) = 0.1$. For this value $\varepsilon_\opt$ of $\varepsilon_\DM$, Figure~\ref{fig:figureAP1-1b} shows the graph of function $\alpha\mapsto \lambda_\alpha(\varepsilon_\opt)$. It can be seen that the criterion defined by Eq.~\eqref{eq:Meth-3} is satisfied.
\begin{figure}[tbhp]
  \centering
  \subfloat[Identifying the value $\varepsilon_\opt$ of $\varepsilon_\DM $]{\label{fig:figureAP1-1a}\includegraphics[width=5.5cm]{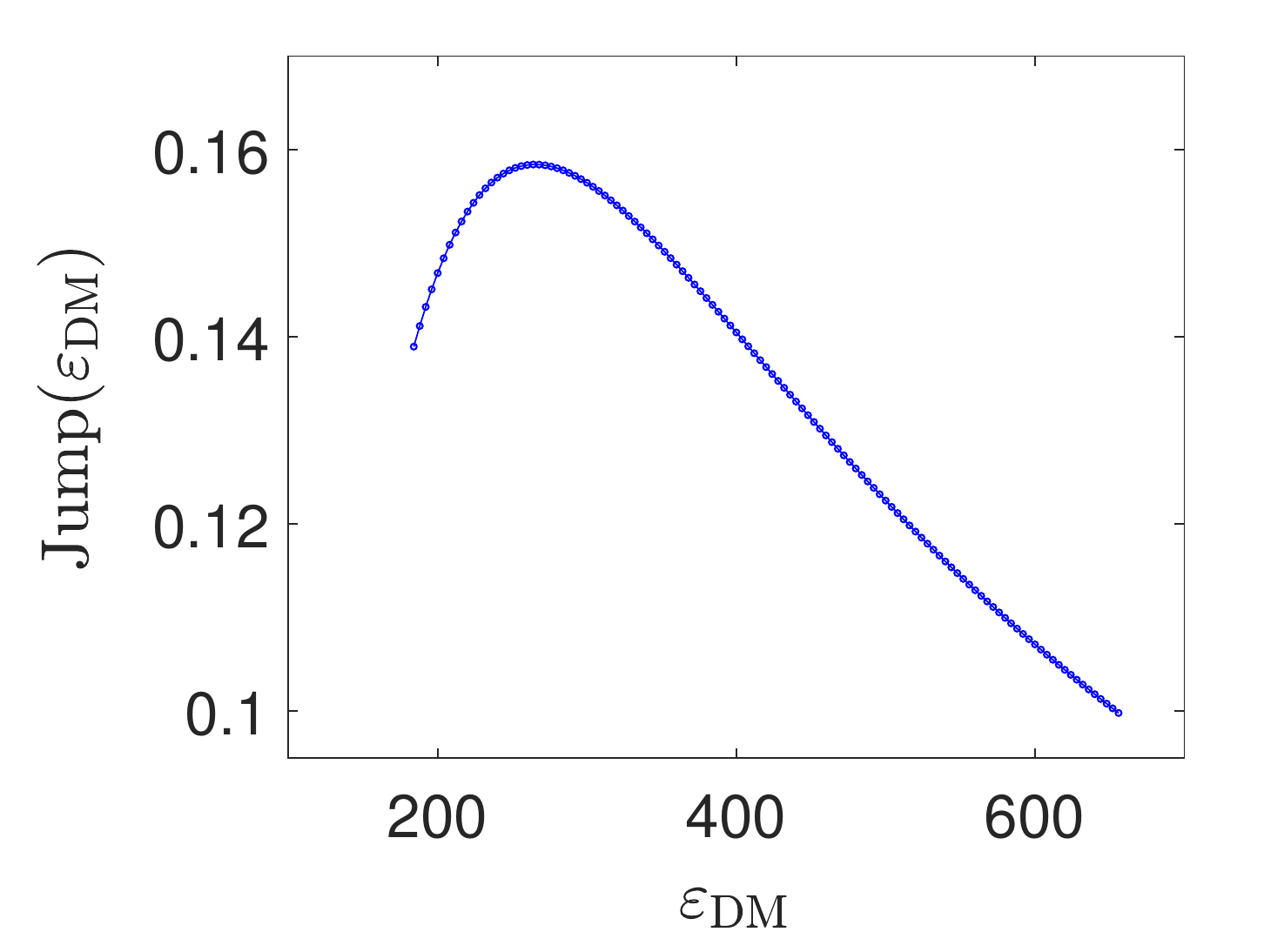}}
  \hfil
  \subfloat[Eigenvalues of the transition matrix] {\label{fig:figureAP1-1b} \includegraphics[width=5.5cm] {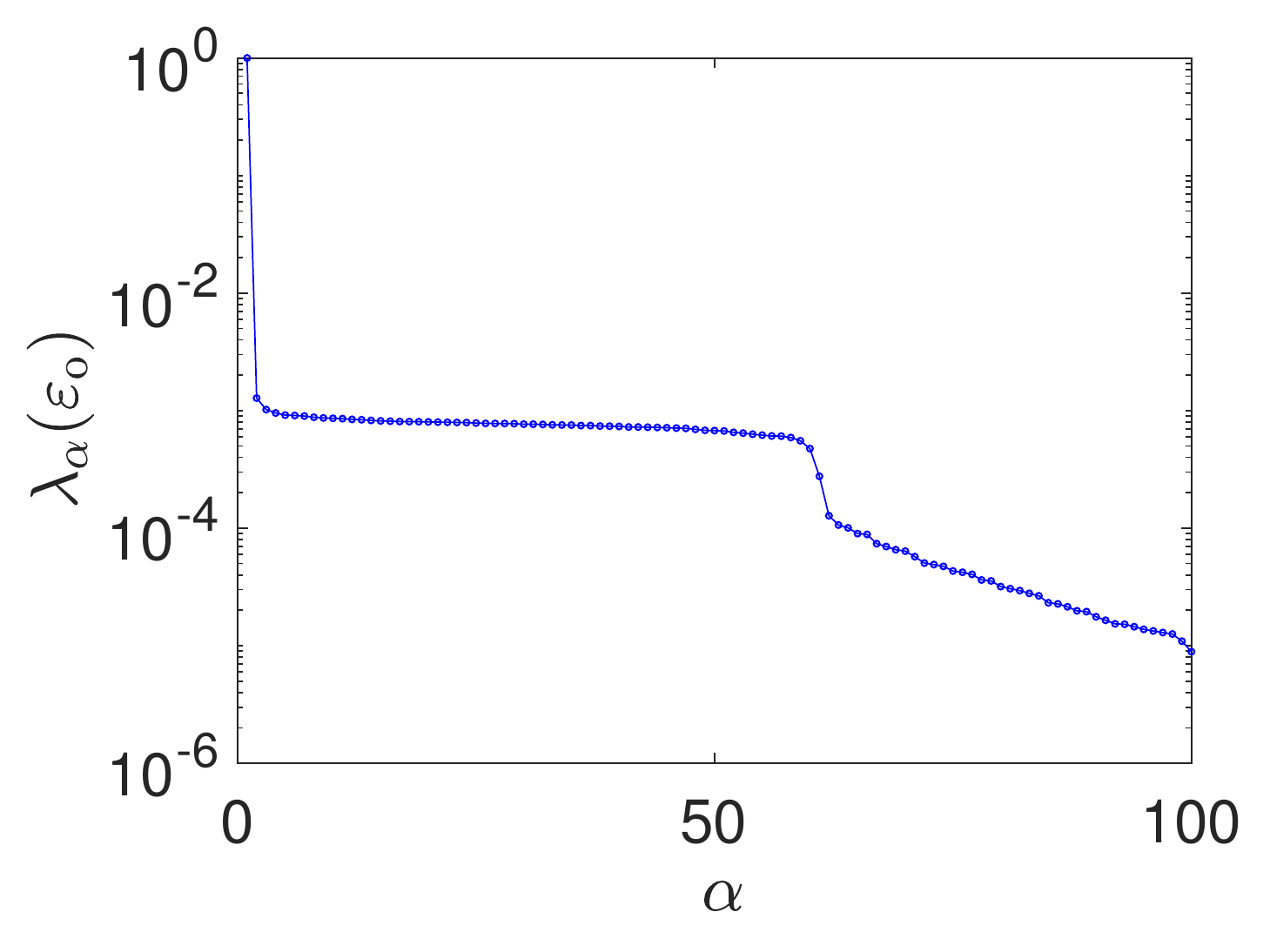}}
  \caption{Diffusion map basis of the standard PLoM (No group).}
  \label{fig:figureAP1-1}
\end{figure}
The PLoM algorithm with no group is then used  for generating the learned set $\{\bfeta_\ar^{\ell},\ell=1,\ldots,N_\ar\}$.
Figure~\ref{fig:figureAP1-2} shows the pdf of each one of the random variables $H_4$, $H_5$, $H_6$, and $H_7$ estimated with the learned set.  Each pdf is estimated (i) with the $N$ realizations of the training set, (ii) with the $N_\pref$ realizations of the reference data set, (iii) with the $N_\ar$ additional realizations generated with the Hamiltonian MCMC algorithm corresponding to the PLoM with $m_\opt = N$ and $[g_{m_\popt}] = [I_N]$, and referenced as "\textit{No-PLoM}", and finally, with the $N_\ar$ realizations of the learned set constructed  with the PLoM for which the partition in groups is not taken into account and referenced as "\textit{No-Group PLoM}" (in this case no constraints are applied). It can be seen that the No-PLoM  estimation yields a big scattering with an important increase of the dispersion (and thus a loss of the concentration of the probability measure) while No-Group PLoM preserves the concentration of the probability measure (as expected) and the pdfs' estimations are good enough. These estimations will be improved by using the PLoM with groups and referenced as "\textit{With-Group PLoM}".
\begin{figure}[tbhp]
  \centering
  \subfloat[pdf of $H_4$]{\label{fig:figureAP1-2a} \includegraphics[width=6.5cm] {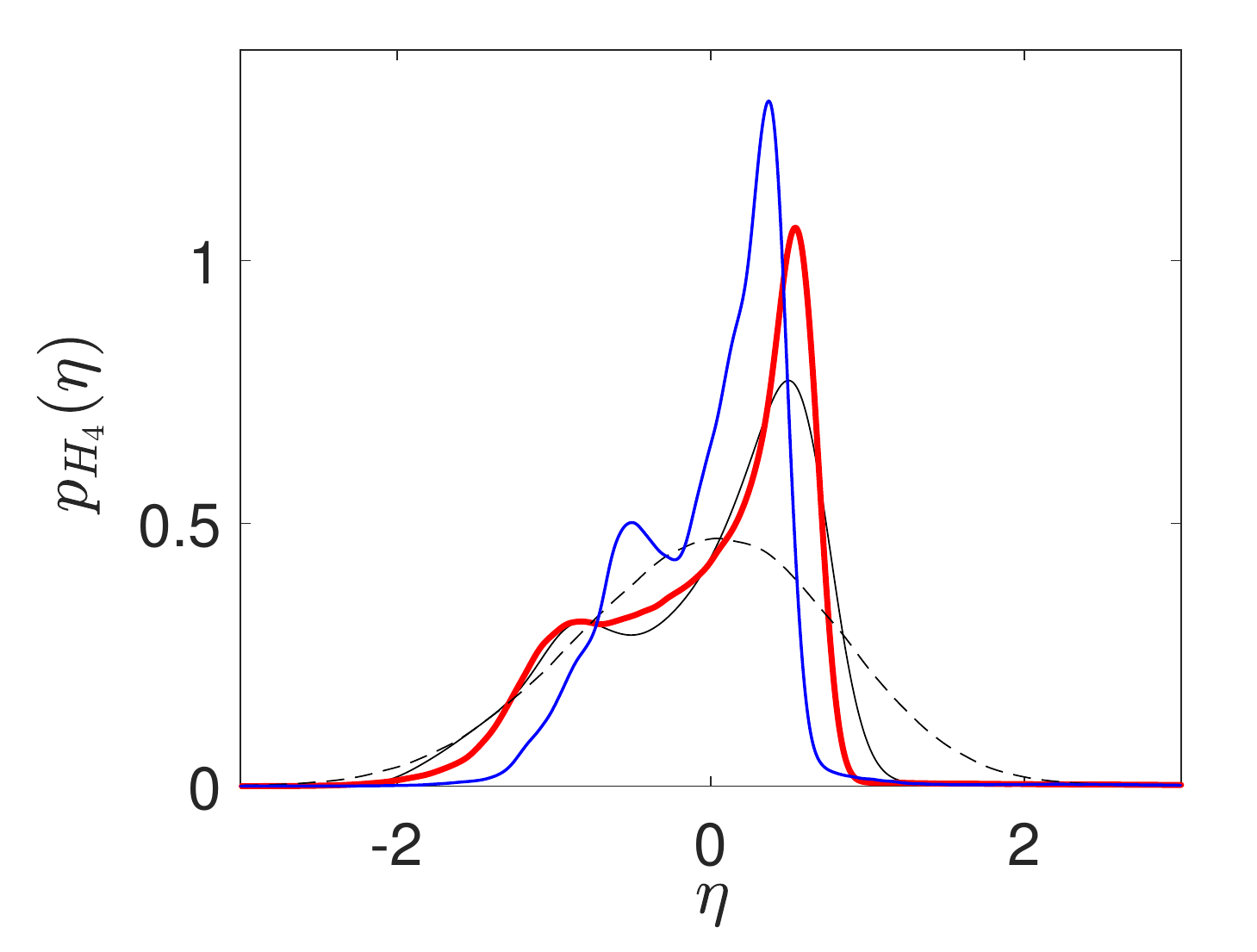}}
  \subfloat[pdf of $H_5$]{\label{fig:figureAP1-2b} \includegraphics[width=6.5cm] {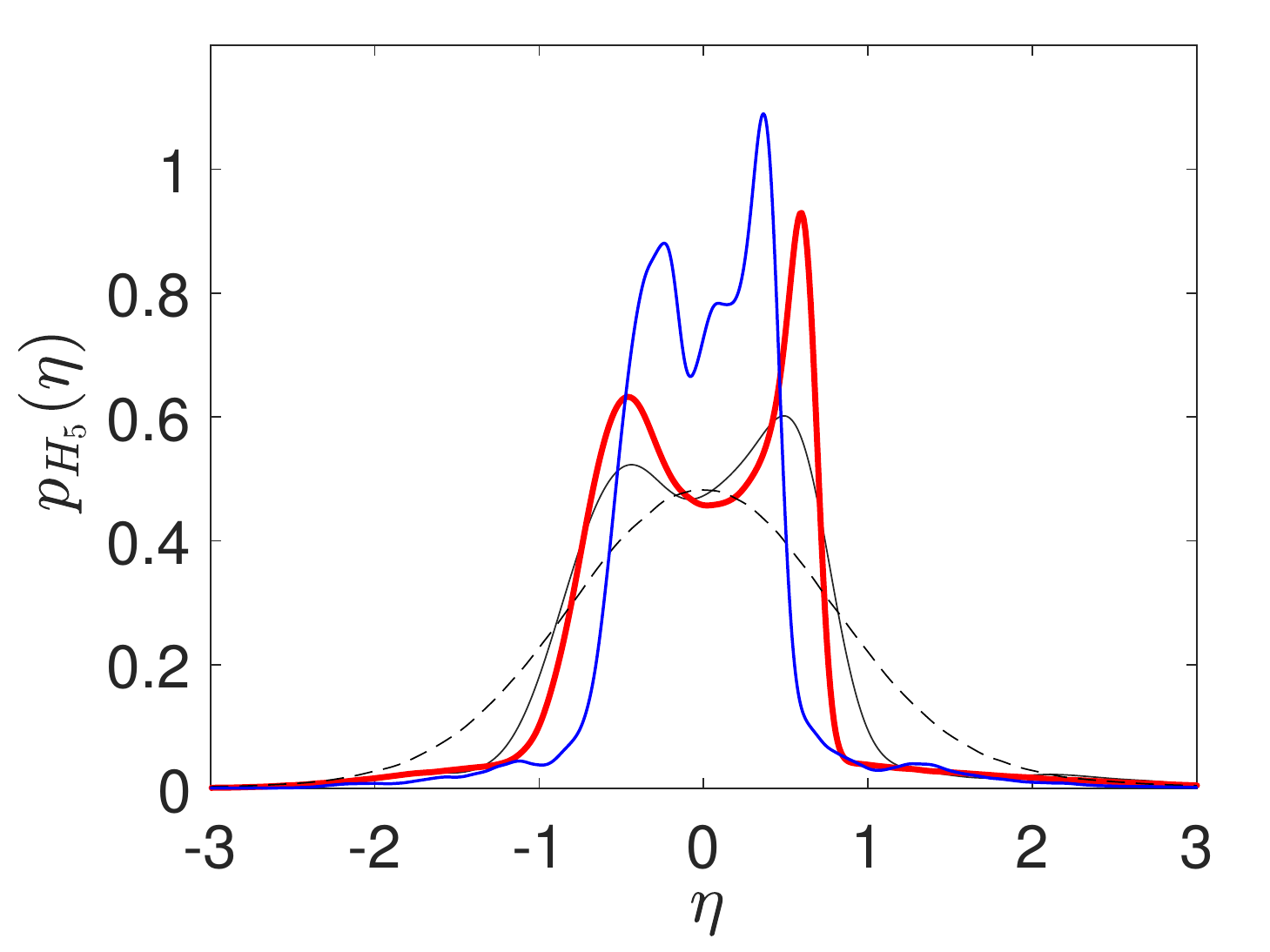}}\\
  \subfloat[pdf of $H_6$]{\label{fig:figureAP1-2c} \includegraphics[width=6.5cm] {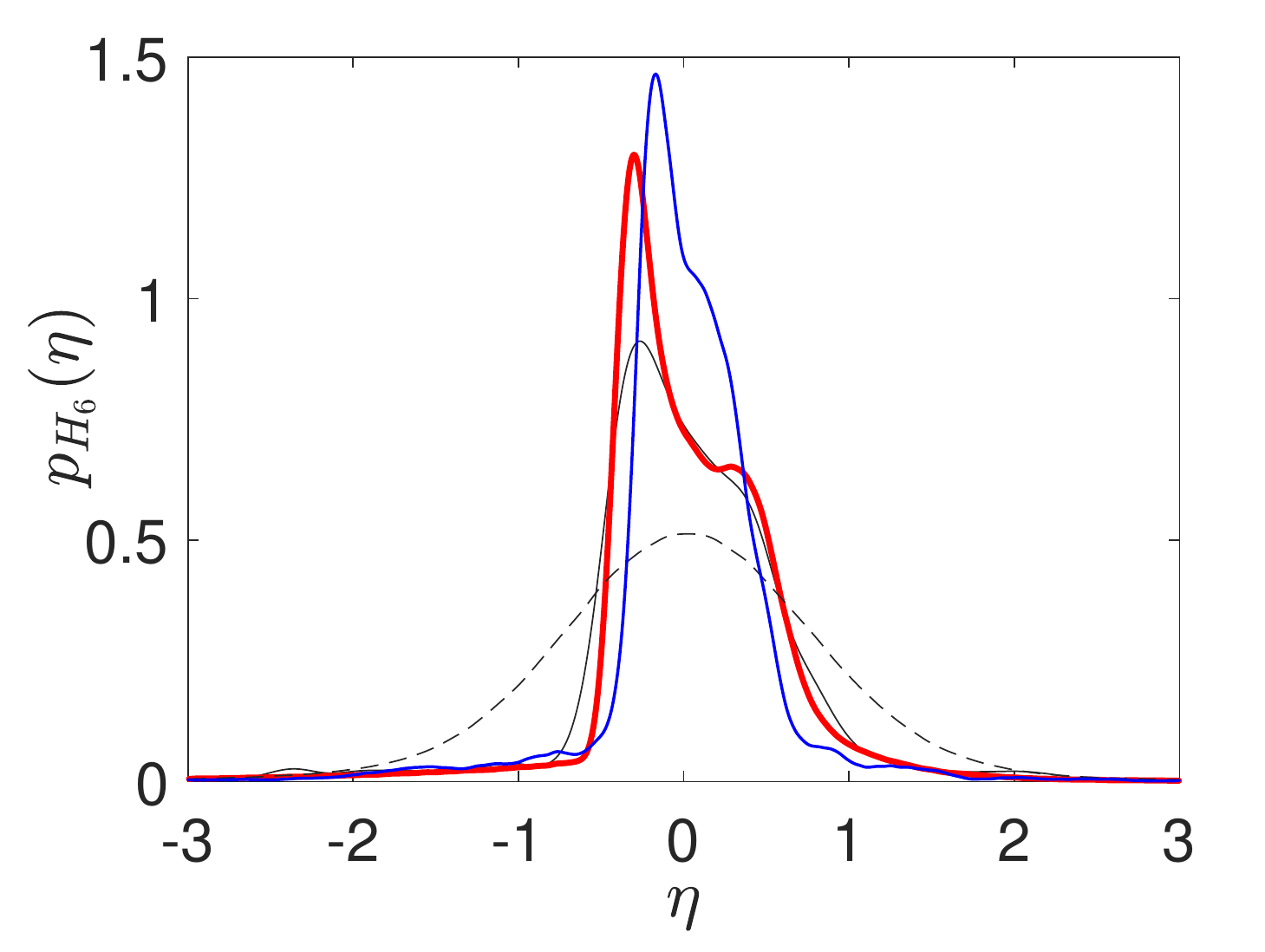}}
  \subfloat[pdf of $H_7$]{\label{fig:figureAP1-2d} \includegraphics[width=6.5cm] {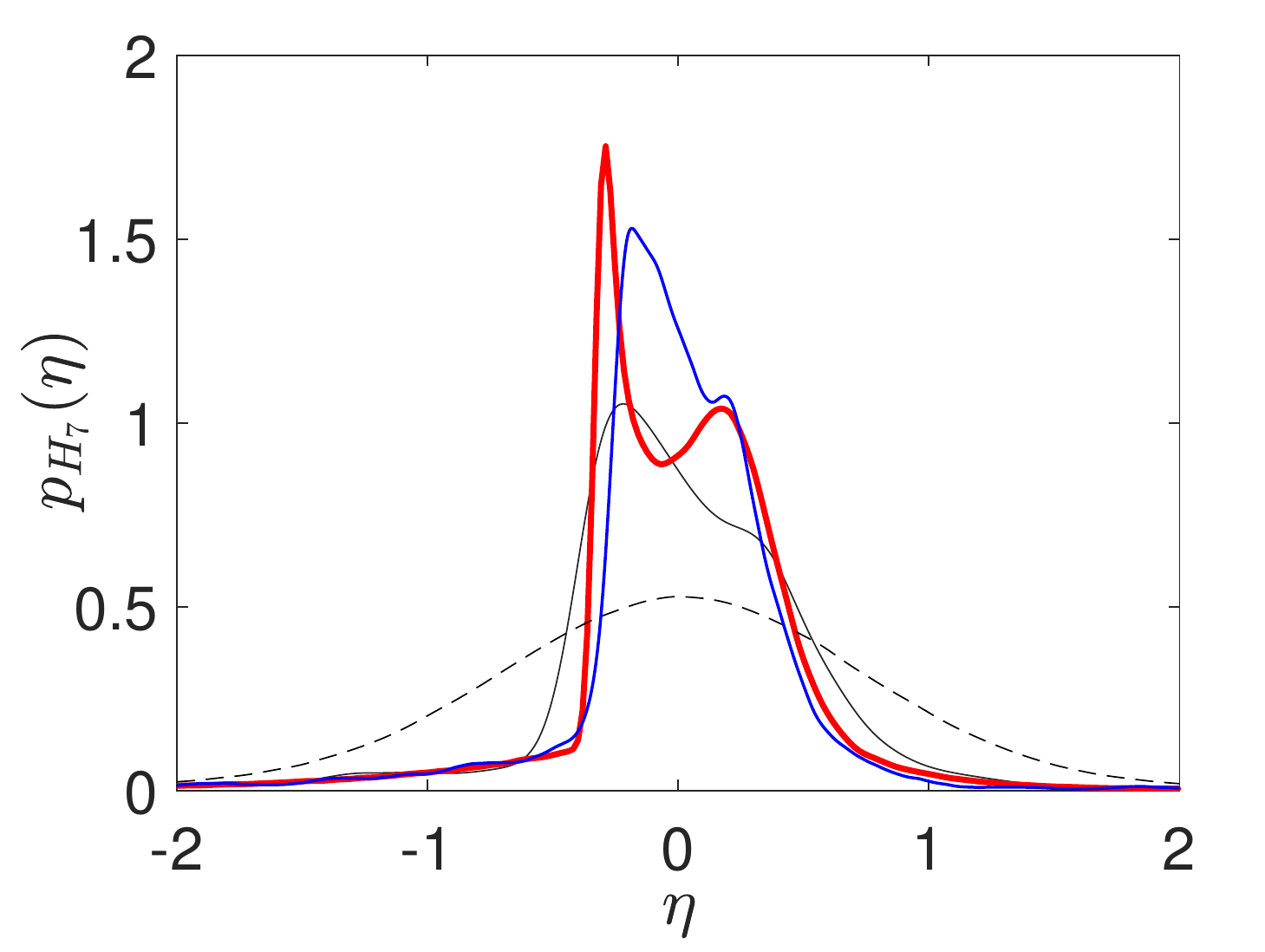}}
  \caption{pdf estimated with (i) the training set (black thin), (ii) the reference data set (red thick), (iii) No-PLoM (dashed) , and (iv) No-Group PLoM (blue thin).}
  \label{fig:figureAP1-2}
\end{figure}
\subsection{Computing the partition}
\label{sec:Ap1-2}
The optimal partition is computed as explained in Section~\ref{sec:Meth-4.1}.
Figure~\ref{fig:figureAP1-3a} displays the graph of $i_\pref\mapsto \tau(i_\pref)$, which shows that $i_\pref^{\,\opt}= 0.013$. Finally, the algorithm identifies the partition and finds $n_p = 3$ groups  with $\nu_1=10$, $\nu_2= 20$, and $\nu_3=30$ and with $\bfY^1= (H_1,\ldots, H_{10})$,
$\bfY^2= (H_{11},\ldots, H_{30})$, and $\bfY^3= (H_{31},\ldots, $ $H_{60})$, which correspond to the model introduced in Appendix~\ref{AppendixA} for generating the training set. This result constitutes an additional validation of the optimal partition algorithm that is used for non-Gaussian random vectors. For illustration, Figure~\ref{fig:figureAP1-3b} displays the graph of the joint pdf of random variables $H_1$ and $H_2$.
\begin{figure}[tbhp]
  \centering
  \subfloat[Graph of function $i_\pref\mapsto \tau(i_\pref)$]{\label{fig:figureAP1-3a} \includegraphics[width=5.5cm] {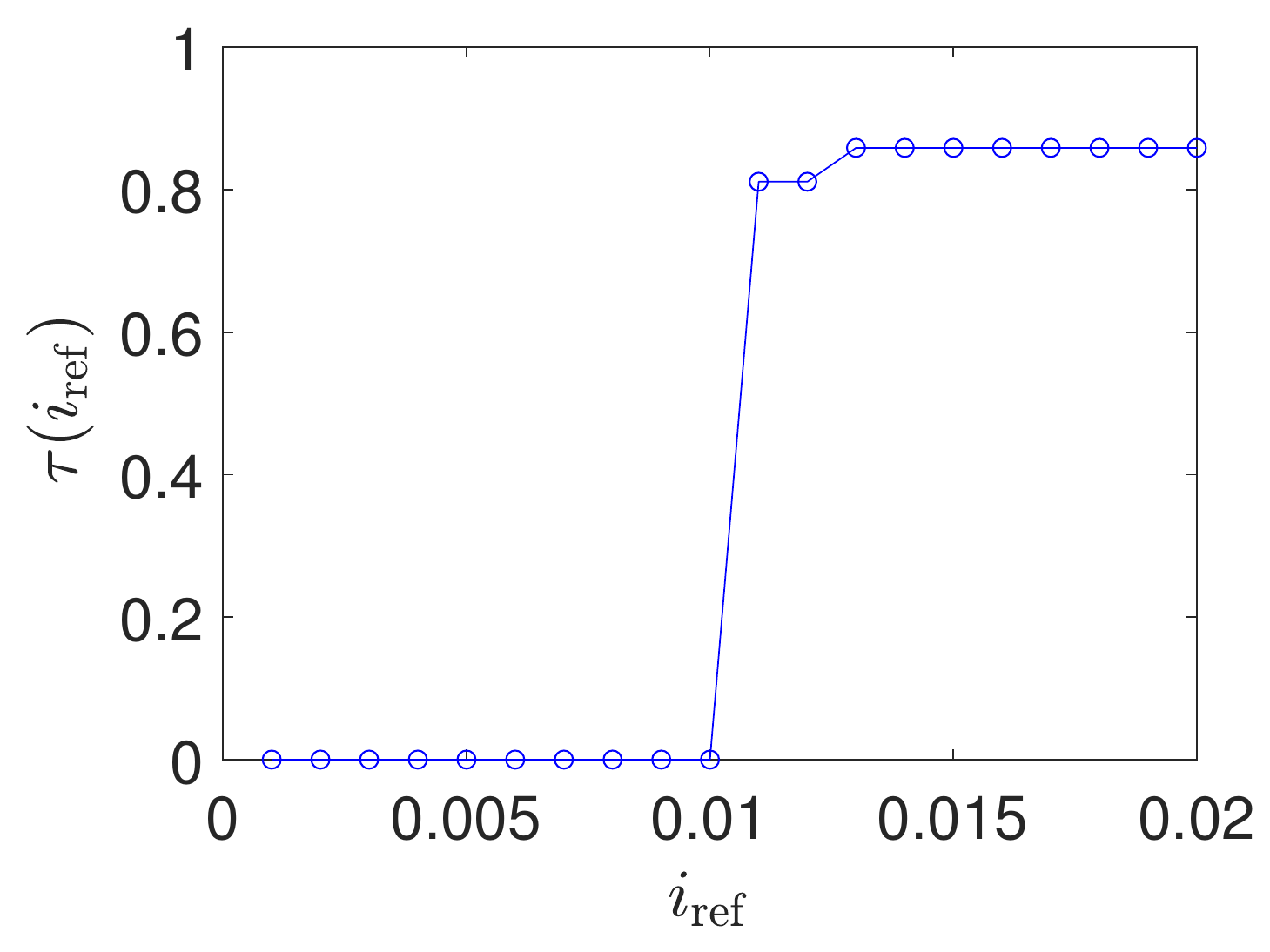}}\hfil
  \subfloat[Joint pdf of $H_1$ and $H_2$]   {\label{fig:figureAP1-3b} \includegraphics[width=4.8cm] {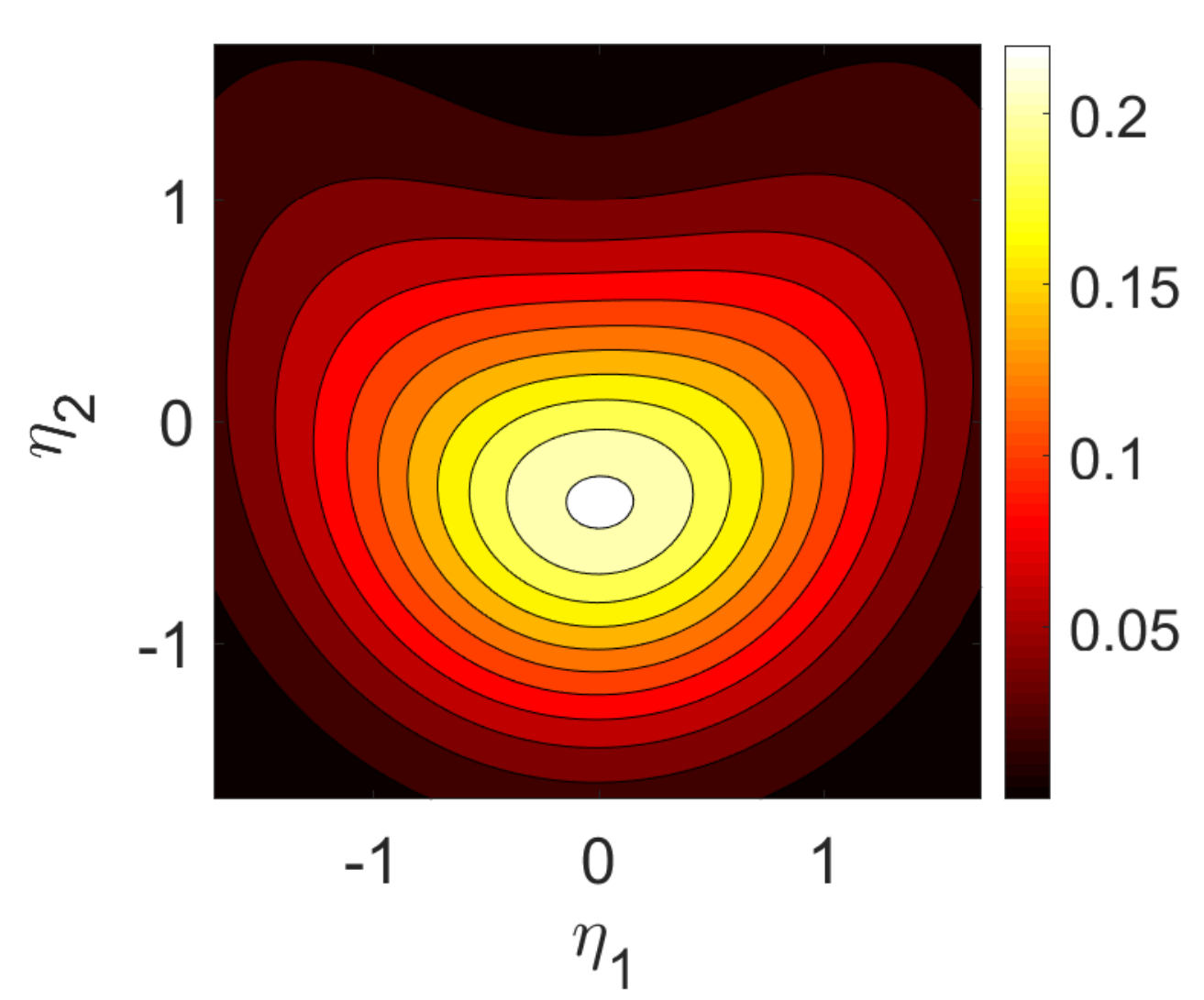}}
  \caption{Partition of $\bfH$ in $n_p$ mutually independent random vectors $\bfY^1,\ldots ,\bfY^{n_p}$.}
  \label{fig:figureAP1-3}
\end{figure}
\subsection{PLoM analysis with groups (With-Group PLoM)}
\label{sec:Ap1-3}
Algorithm~\ref{algorithm:Meth-1} is used for each group $i=1,2,3$. We then have $m_{i,\opt} = \nu_i+1$. The training set $\{\bfeta_d^{i,j},j=1,\ldots,N\}$ of $\bfY^i$ is used. A similar graph to the one shown in Figure~\ref{fig:figureAP1-1a} is constructed for identifying the optimal value $\varepsilon_{i,\opt}$ of the smoothing parameter $\varepsilon_{i,\DM}$ yielding $\varepsilon_{1,\opt} = 412$, $\varepsilon_{2,\opt} = 896$, and $\varepsilon_{3,\opt} = 1\, 132$.
Figure~\ref{fig:figureAP1-4a} shows the distribution of the eigenvalues of the transition matrix of each group $i$ computed for $\varepsilon_{i,\DM} =\varepsilon_{i,\opt}$.
It can be seen that all the required criteria are satisfied.
\begin{figure}[tbhp]
  \centering
  \subfloat[Eigenvalues of the transition matrix of group $i$ computed for $\varepsilon_{i,\DM} =\varepsilon_{i,\opt}$]
  {\label{fig:figureAP1-4a} \includegraphics[width=5.5cm]{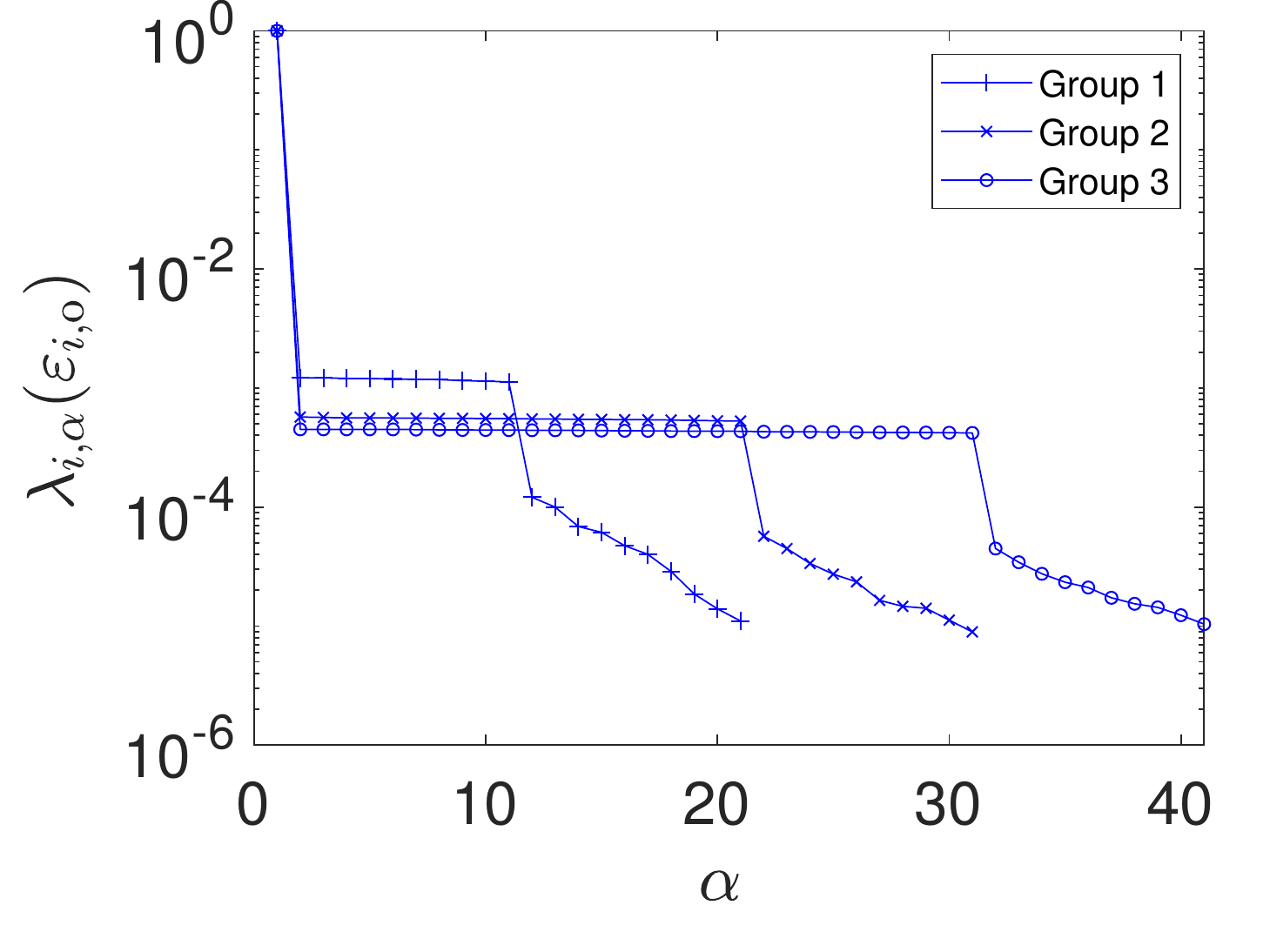}}
  \hfil
  \subfloat[Error function $\iota\mapsto \pperr_i(\iota)$ of group $i$ for iteration number $\iota$ of the iteration algorithm ]
  {\label{fig:figureAP1-4b} \includegraphics[width=5.5cm] {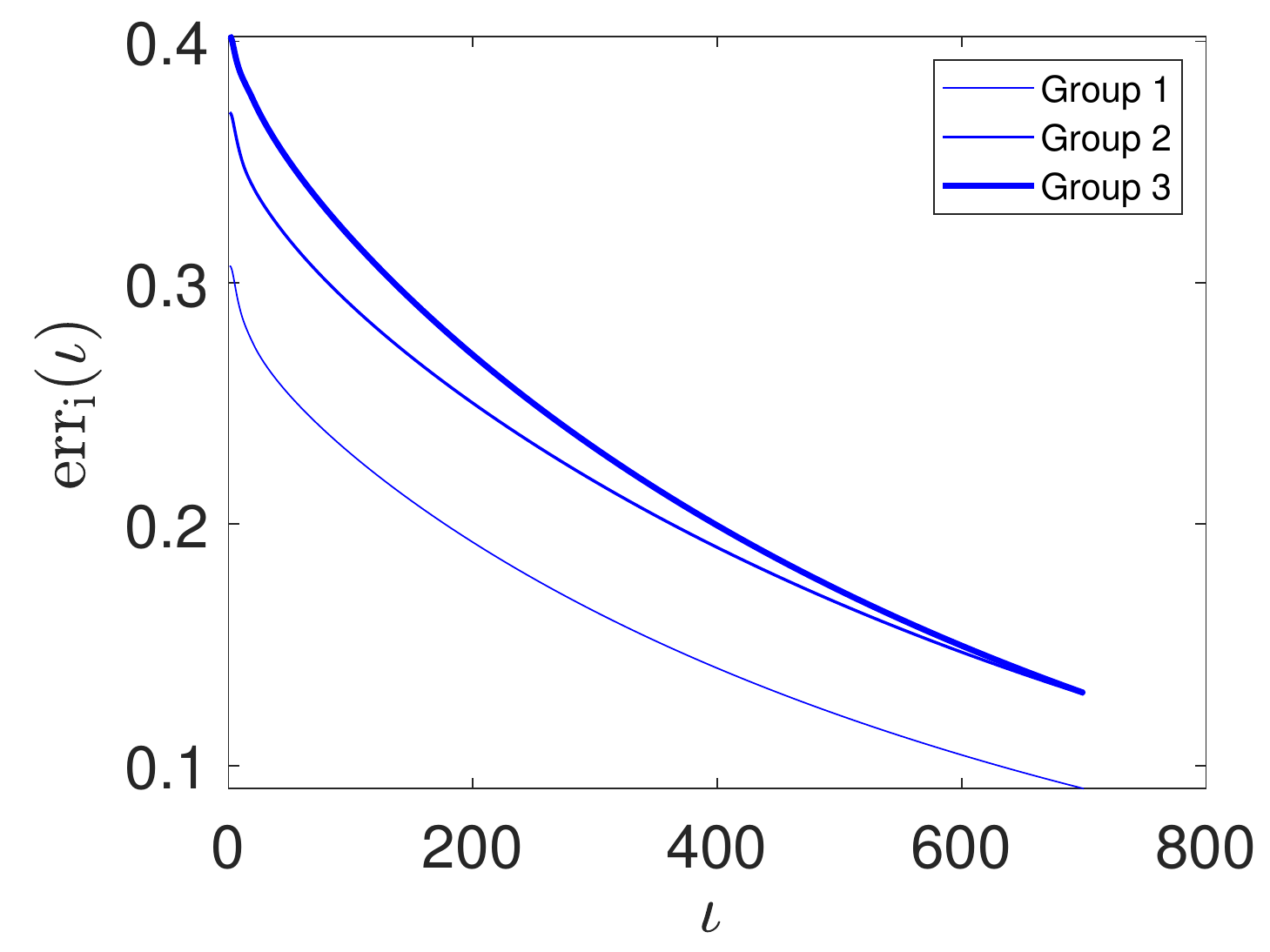}}
  \caption{Diffusion map basis and error function of the iteration algorithm for each one of the $3$ groups, $i=1,2,3$.}
  \label{fig:figureAP1-4}
\end{figure}
\begin{figure}[tbhp]
  \centering
  \subfloat[Mean value of $H_k$] {\label{fig:figureAP1-5a} \includegraphics[width=5.5cm]{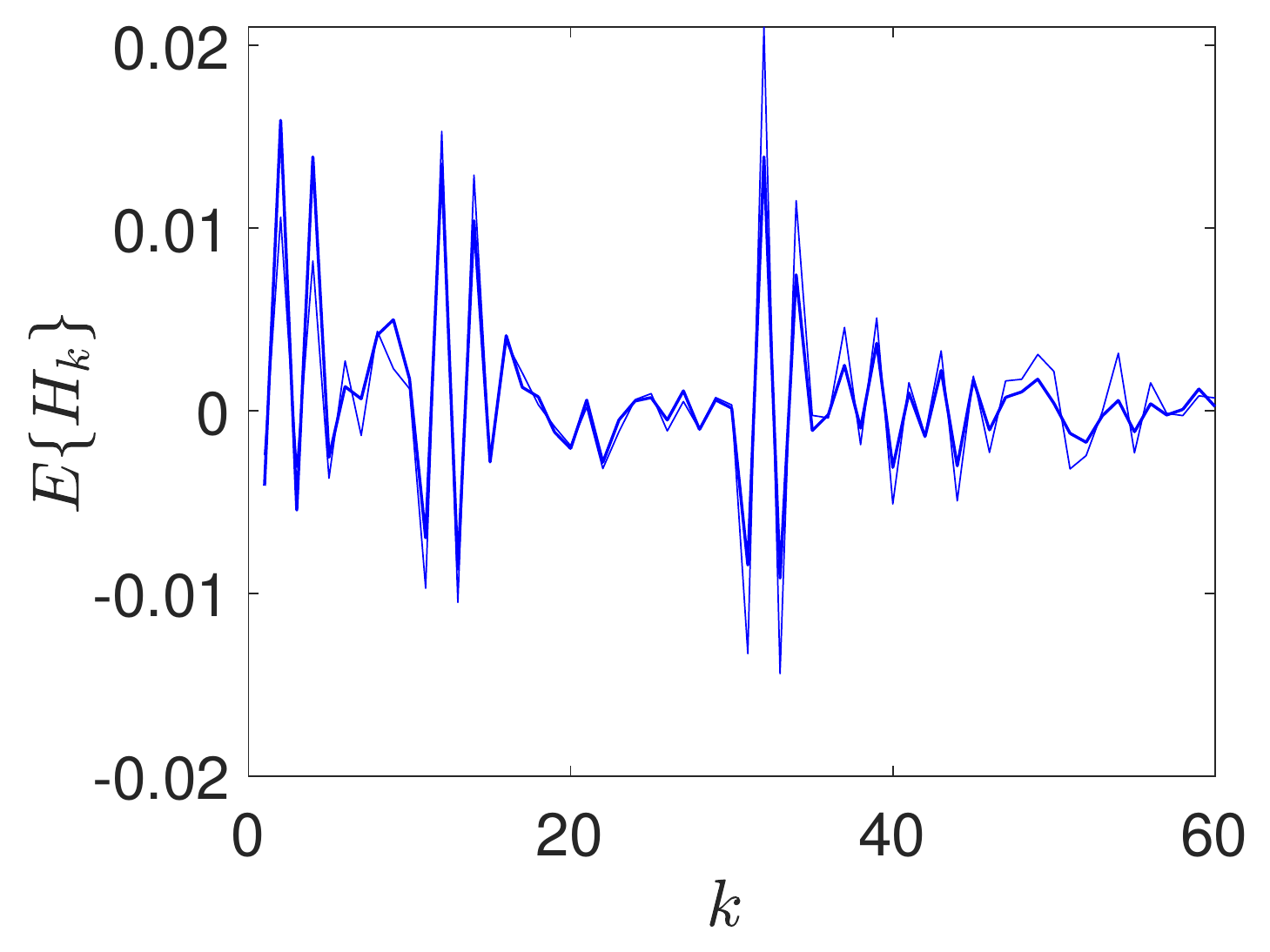}}
  \hfil
  \subfloat[Standard deviation of $H_k$]  {\label{fig:figureAP1-5b} \includegraphics[width=5.5cm] {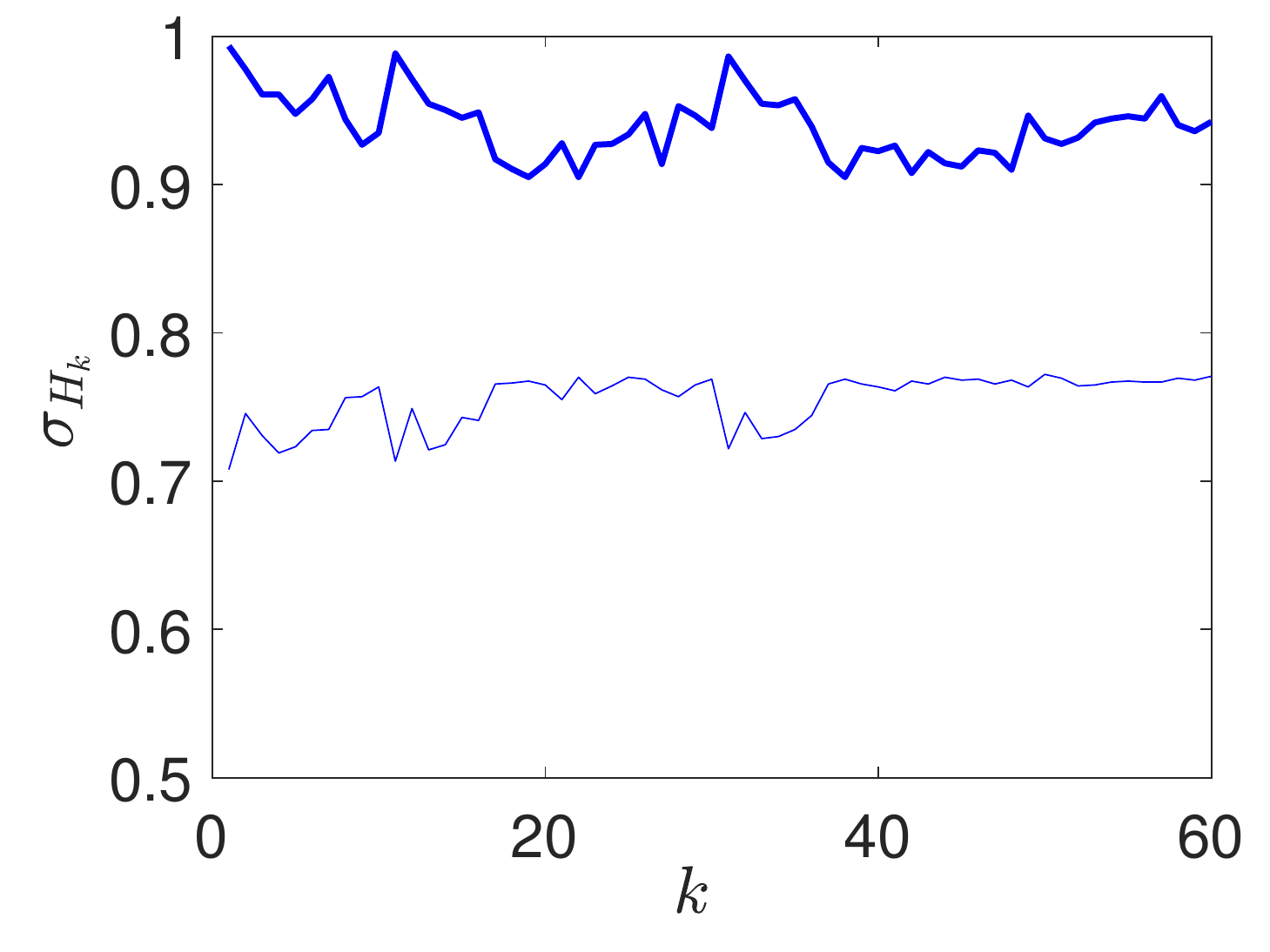}}
  \caption{Mean value and standard deviation of the components $H_k$, $k=1,\ldots, \nu$, of $\bfH$ estimated using the learned set generated
  by No-Group PLoM  (thin line) and by With-Group PLoM  (thick line).}
  \label{fig:figureAP1-5}
\end{figure}
For each group $i=1,2,3$, the PLoM method with groups is used for generating the learned set $\{\bfeta_\ar^{i,j},j=1,\ldots,N_\ar\}$.
The constraints $E\{(Y^i_k)^2\} = 1$ for $k\in\{1,\ldots , \nu_i\}$ are applied and the iterative algorithm introduced in Section~\ref{sec:Meth-4.4} is used. Figure~\ref{fig:figureAP1-4b} displays the error function $\iota\mapsto \perr_i(\iota)$ of group $i$, defined by Eq.~\eqref{eq:Meth-6}, which shows the convergence of the iterative algorithm.
It should be noted that the convergence could have been pushed
further, but the numerical experiments showed that the additional gain obtained is negligible. In addition, numerical experiments have been carried out to compare the efficiency of the type of constraints. We have verified that taking into account all the constraints (mean of
$\bfY^i$ equal to $\bfzero$ and covariance matrix $[C_{\bfY^i}] = [I_{\nu_i}]$) did not provide significant improvements on the preservation of the concentration of the probability measure compared to the sole application of the constraints $E\{(Y^i_k)^2\} = 1$ for $k\in\{1,\ldots , \nu_i\}$. Figure~\ref{fig:figureAP1-5} shows the mean value and the standard deviation of the components $H_k$, $k=1,\ldots, \nu$ of $\bfH$ estimated using the learned set generated
by No-Group PLoM (see Section~\ref{sec:Ap1-1}) and by With-Group PLoM. Figure~\ref{fig:figureAP1-5a} shows that the mean values are reasonably small with respect to $1$ and therefore that it is not necessary to improve it by introducing the constraints for the mean. Figure~\ref{fig:figureAP1-5b} shows that the standard deviations are improved by using With-Group PLoM for which the constraints are taken into account.
Figure~\ref{fig:figureAP1-6} shows the pdf of  $H_4$, $H_5$, $H_6$, and $H_7$ estimated with the learned set.  Similarly to Section~\ref{sec:Ap1-3}, each pdf is estimated (i) with the $N$ realizations of the training set, (ii) with the $N_\pref$ realizations of the reference data set, (iii) with the $N_\ar$ additional realizations computed by No-PLoM, and (iv) with the $N_\ar$ realizations computed by  With-Group PLoM.  Comparing Figure~\ref{fig:figureAP1-2} with Figure~\ref{fig:figureAP1-6}, it can be seen that the use of groups improves the pdfs' estimations as expected. It can also be noted that the estimates are excellent for this very difficult case in particular by comparing with the usual approach (see the dashed lines corresponding to No-PLoM).
\begin{figure}[tbhp]
  \centering
  \subfloat[pdf of $H_4$]{\label{fig:figureAP1-6a} \includegraphics[width=6.5cm] {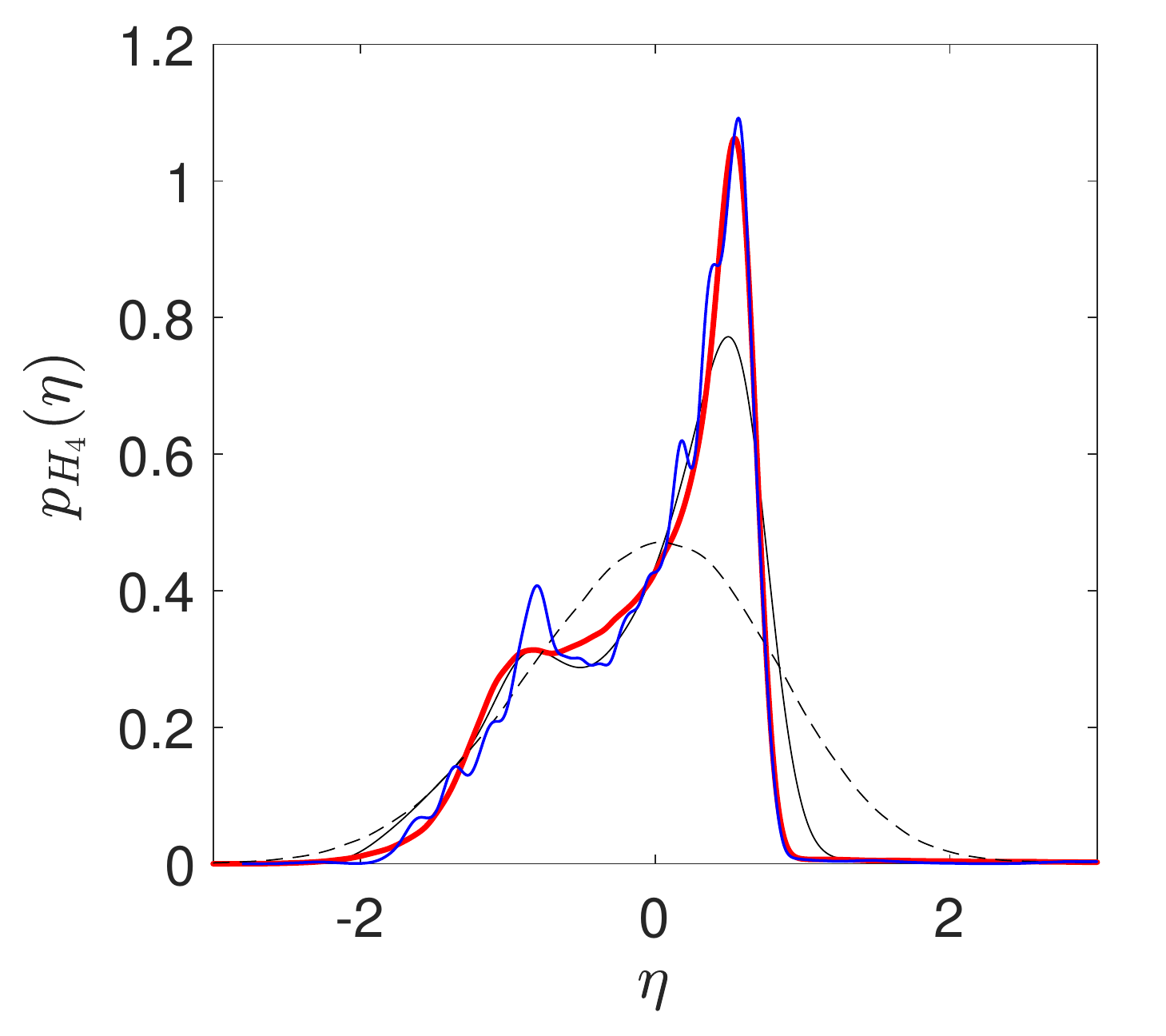}}
  \subfloat[pdf of $H_5$]{\label{fig:figureAP1-6b} \includegraphics[width=6.5cm] {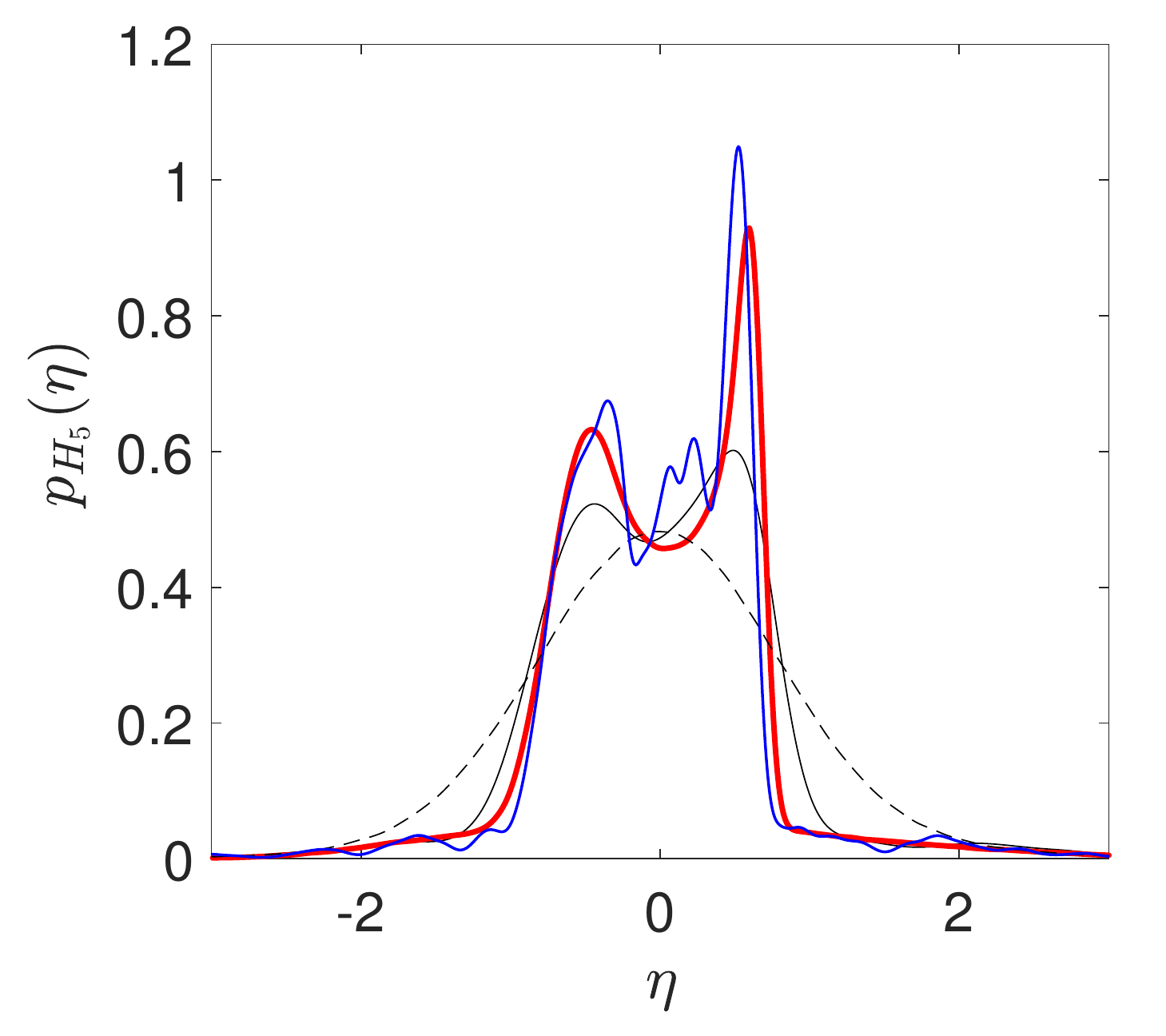}}\\
  \subfloat[pdf of $H_6$]{\label{fig:figureAP1-6c} \includegraphics[width=6.5cm] {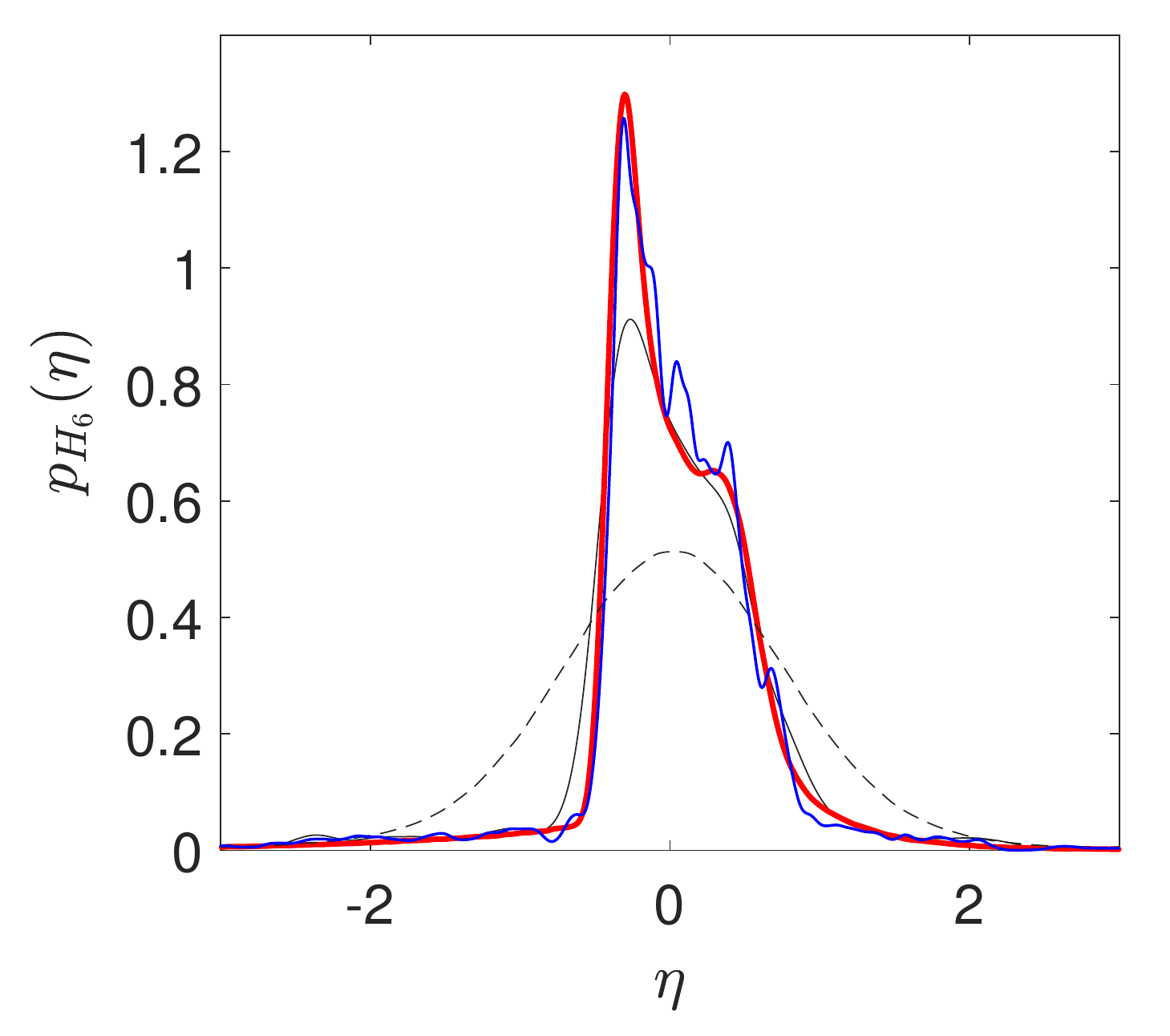}}
  \subfloat[pdf of $H_7$]{\label{fig:figureAP1-6d} \includegraphics[width=6.5cm] {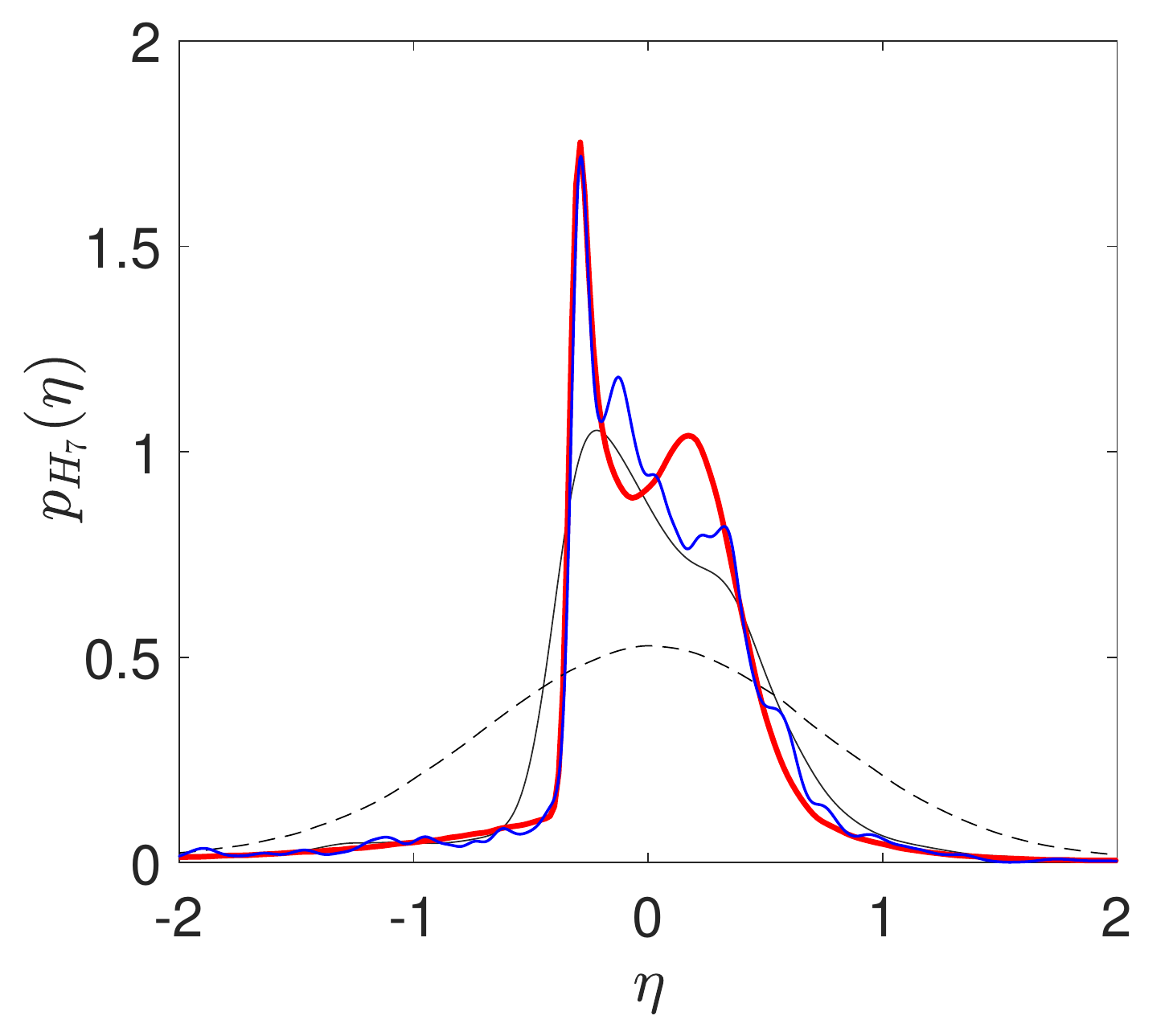}}
  \caption{pdf estimated with (i) the training set (black thin), (ii) the reference data set (red thick), (iii) No-PLoM (dashed) , and (iv) With-Group PLoM (blue thin).}
  \label{fig:figureAP1-6}
\end{figure}
\subsection{Quantifying the concentration of the probability measure}
\label{sec:Ap1-4}
For No PLoM, the computations are performed as explained in Section~\ref{sec:Ap1-1}, for No-Group PLoM as in Sections~\ref{sec:Meth-3} and \ref{sec:Ap1-1}, and for With-Group PLoM as in Sections~\ref{sec:Ap1-3} and \ref{sec:Meth-4}.

(i) The results concerning the  concentration of the probability measure are summarized in Table~\ref{tab:table1}. For No PLoM,  $d^2_N(N)$ is computed with Eq.~\eqref{eq:Meth-4} for which $m=N=1\, 200$. For No-Group PLoM, $d^2_N(m_\opt)$ is also computed with Eq.~\eqref{eq:Meth-4} but with $m=m_\opt = 61$. For With-Group PLoM, $d^2_{wg,N}(\bfm_\opt)$ is computed with Eq.~\eqref{eq:Meth-7} for which $\bfm_\opt = (m_{1,\opt}, m_{2,\opt},m_{3,\opt})$ with $m_{1,\opt} =11$, $m_{2,\opt}=21$, and $m_{3,\opt} =31$, and where $d^2_{i,N}(m_{i,\opt})$ are computed using Eq.~\eqref{eq:Meth-8}, which yields $d^2_{1,N}(m_{1,\opt})= 0.012$,
$d^2_{2,N}(m_{2,\opt})= 0.015$, and $d^2_{3,N}(m_{3,\opt})= 0.019$.
The results obtained are those that were hoped for. Without using the
PLoM method, we find numerically $d^2_N(N) \simeq 2 $ that is the
theoretical value (see Section~\ref{sec:Meth-3.5}). We also see that $d^2_N
(m_\opt) = 0.094 \ll 2$, which shows that the usual PLoM method
(without group) effectively preserves the concentration of the
probability measure unlike the usual MCMC method that does not allow
it. For the PLoM with groups, an improvement is observed relative to
the PLoM without group as indicated by the evaluation $d^2_{wg,N}(\bfm_\opt)= 0.016 \ll  d^2_N(m_\opt) = 0.094$. The quantification of the probability of the random relative distance defined by Eq.~\eqref{eq:Meth-12} confirms this improvement. Note that the probability $(r /\varepsilon)^{n_p}$ corresponds to an upper value, the probability being certainly smaller.
\begin{table}[tbhp]
  \caption{Concentration of the probability measure for Application~1}\label{tab:table1}
\begin{center}
 \begin{tabular}{|c|c|c|c|c|} \hline
                  \textbf{No}  & \textbf{PLoM}      &      \multicolumn{3}{c|}{\textbf{PLoM}}                 \\
                  \textbf{PLoM}& \textbf{No Group}  &      \multicolumn{3}{c|}{\textbf{With Group}}          \\ \hline
                               & $m_\opt =61$    &                        & \multicolumn{2}{c|}{$\rm{Proba}$ by Eq.~\eqref{eq:Meth-12}} \\
                    $d^2_N(N)$ & $d^2_N(m_\opt)$ & $d^2_{wg,N}(\bfm_\opt)$ &  $\varepsilon$ &  $\leq (\frac{r}{\varepsilon})^{n_p}$   \\ \hline
                    $2.00$     &   $0.094$       &     $0.016$             &   $0.05$           &  $\leq 0.028$                       \\
                               &                 &                         &   $0.10$           &  $\leq 0.0034$                       \\ \hline
  \end{tabular}
\end{center}
\end{table}

(ii) Concerning the visualization of the concentration of the probability measure, Figure~\ref{fig:figureAP1-7} shows the clouds of points $\{\eta_\ar^\ell, \ell = 1, \ldots, N_\ar\} $ for the components $(H_1, H_2, H_3)$, generated (a) without using PLoM (No-PLoM), (b) using the PLoM method without group (No-Group PLoM), and (c) using the PLoM method with groups (With-Group PLoM). The three figures confirm the analysis presented in point (i) above.
\begin{figure}[tbhp]
  \centering
  \subfloat[No PLoM]{\label{fig:figureAP1-7a} \includegraphics[width=4.2cm] {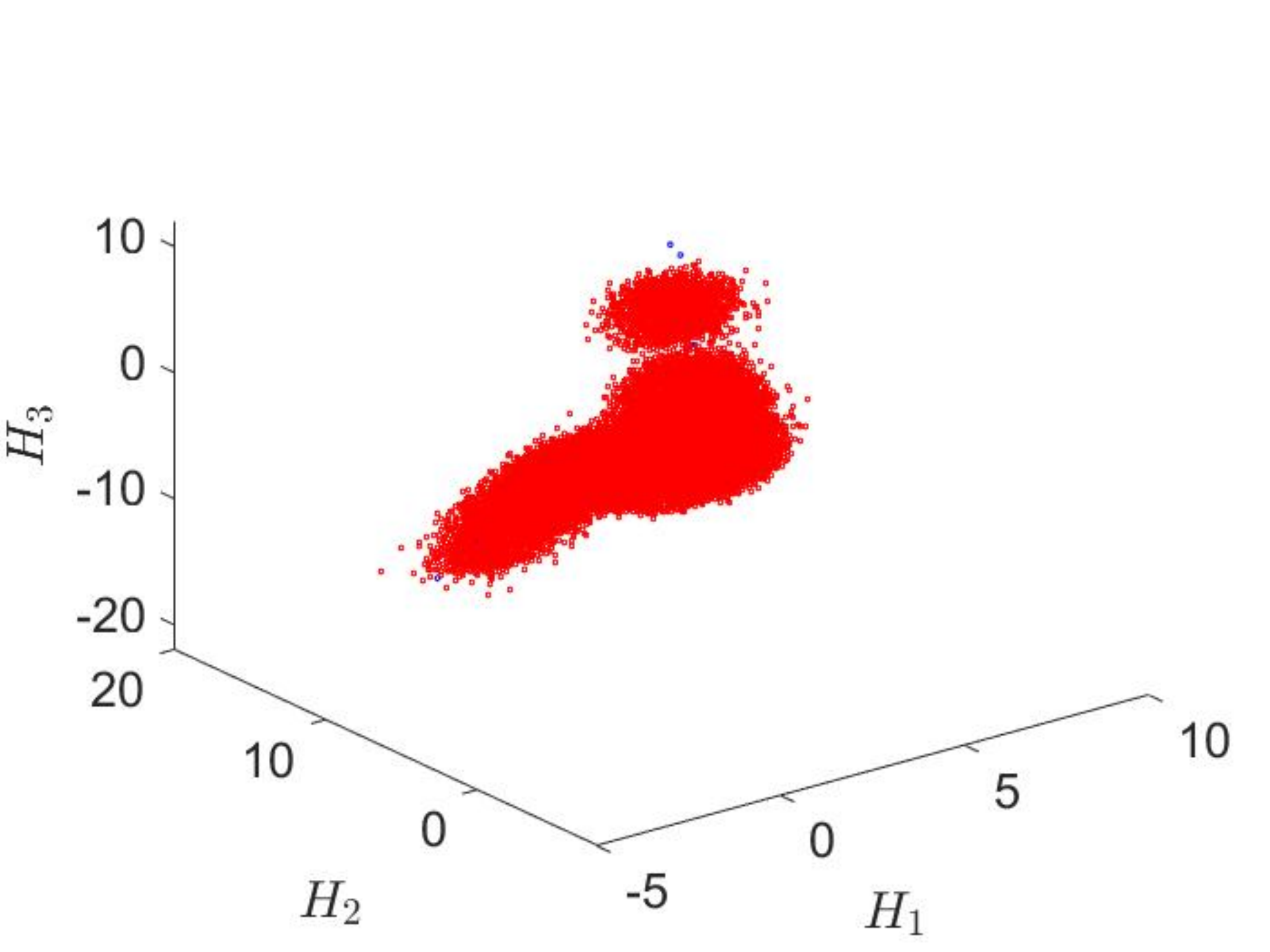}}
  \subfloat[No-Group PLoM]{\label{fig:figureAP1-7b} \includegraphics[width=4.2cm] {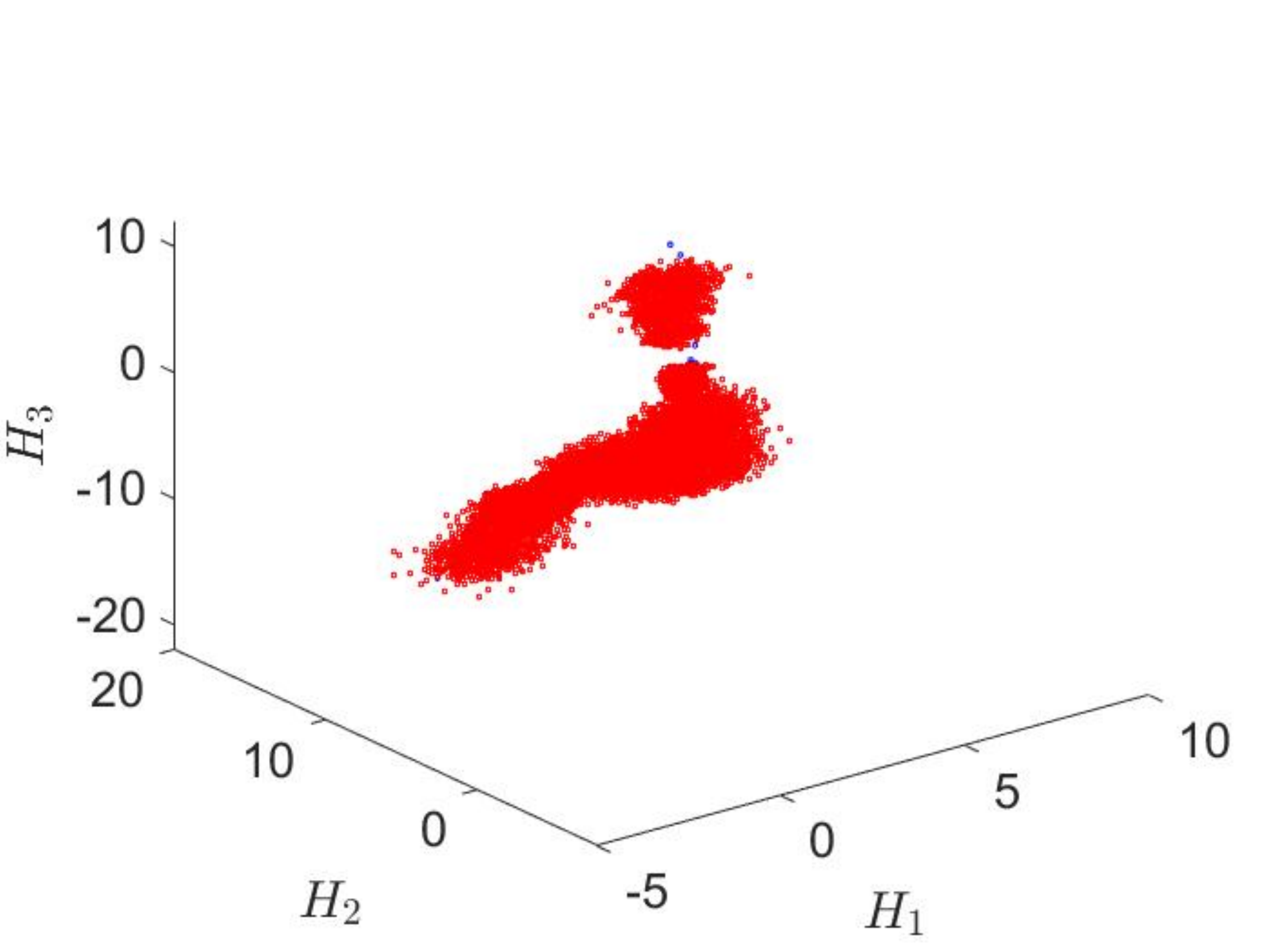}}
  \subfloat[With-Group PLoM]{\label{fig:figureAP1-7c} \includegraphics[width=4.2cm] {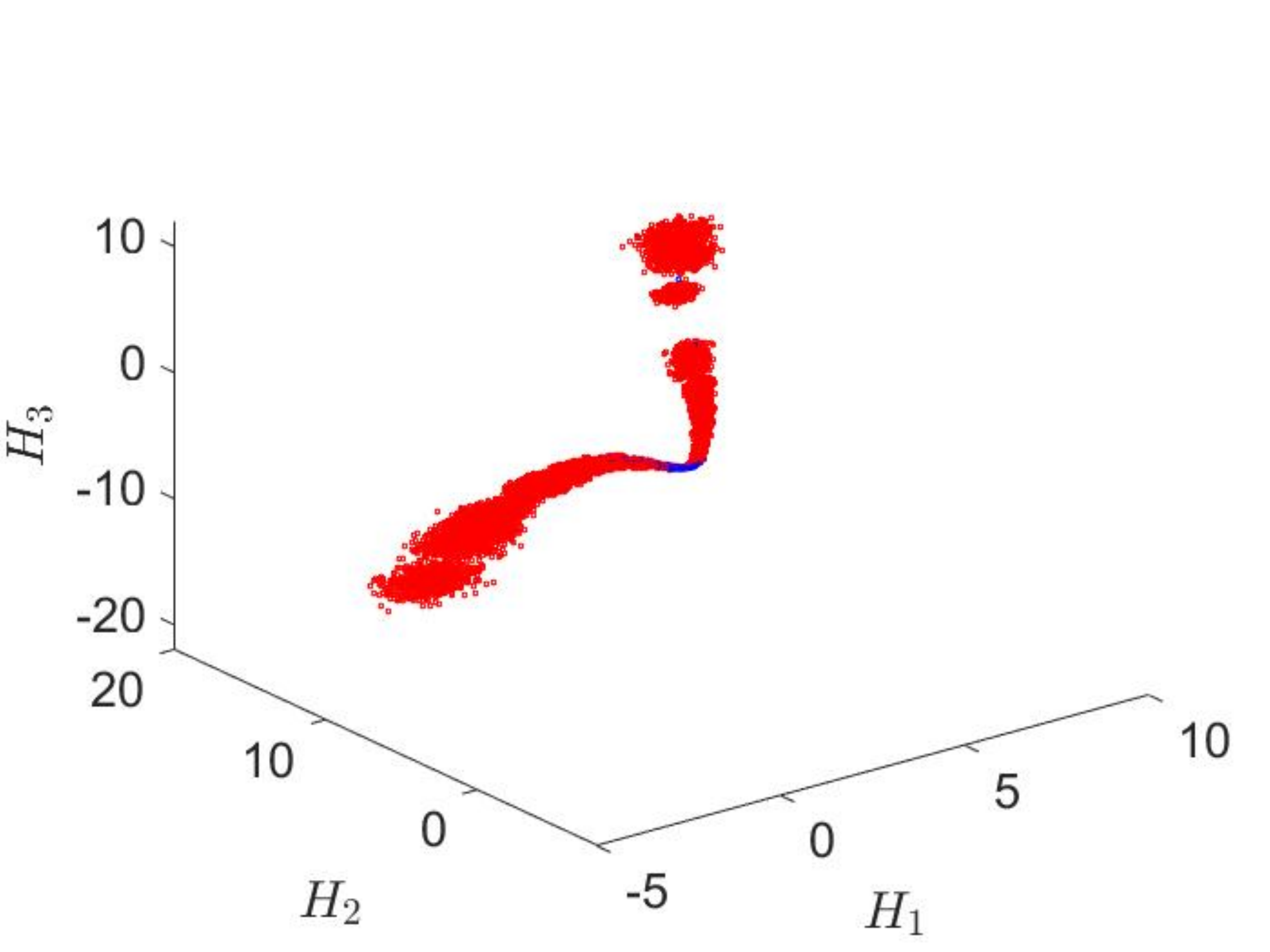}}
  \caption{Clouds of the $N_\ar = 1\, 200\, 000$ realizations of $(H_1,H_2,H_3)$ computed with No-PLoM, No-Group PLoM, and With-Group PLoM.}
  \label{fig:figureAP1-7}
\end{figure}

\section{Application~2}
\label{sec:Ap2}
The second application is devoted to a supervised learning problem $\bfQ=\bff(\bfW,\bfU)$ (see Section~\ref{sec:Meth-1}) in high dimension, for which the uncontrolled random parameter is the $\RR^{n_u}$-valued random variable $\bfU$ with $n_u = 420\, 000$, the random control random parameter is the $\RR^{n_w}$-valued random variable $\bfW$ with $n_w = 2$, and the QoI is the $\RR^{n_q}$-valued random variable $\bfQ$ with $n_q = 10\, 098$.
\subsection{Generation of the training set and reference data set}
\label{sec:Ap2-1}
This application relates to a linear elastic system modeled by an elliptic stochastic boundary problem (BVP) in a 3D bounded domain $\Omega$, described in the SI Units. The generic point of $\Omega$ is $\bfzeta = (\zeta_1,\zeta_2,\zeta_3)$ in an orthonormal Cartesian coordinate system $(O,\zeta_1,\zeta_2,\zeta_3)$ with $O =(0,0,0)$. The outward unit normal to $\partial\Omega = \Gamma_0 \cup \Gamma$ is denoted by $\bfn(\bfzeta)$. There is a zero Dirichlet condition on $\Gamma_0$ and a Neumann condition on $\Gamma$. Domain $\Omega$ is occupied by a random linear elastic medium (heterogeneous material). The uncontrolled parameter of the system is the fourth-order tensor-valued non-Gaussian elasticity random field $\{\KK = \{\KK_{ijkh}(\bfzeta)\}_{1\leq i,j,k,h \leq 3}, \bfzeta\in\Omega\}$ (random coefficients of the partial differential operator) for which the mean value is isotropic and the statistical fluctuations are anisotropic. The control parameter $\bfw=(w_1,w_2)$ of the system consists of $w_1 = \log(L_\pcorr)$ in which
$L_\pcorr$ is a spatial correlation length and of $w_2 = \log(\delta_G)$ in which $\delta_{G}$ is a dispersion parameter, which allow  the statistical fluctuations of $\KK$ to be controlled. The observation of the system  is the $\RR^3$-valued random displacement field $\bfV =(V_1,V_2,V_3)$ on $\Omega$, which is the random solution of the stochastic BVP,
\begin{align}
- \divergence \, {\bfSigma}  &=  \bfzero      \quad \hbox{in}   \quad \Omega   \, ,               \nonumber \\
\bfV &  = \bfzero \quad \hbox{on} \quad \Gamma_0\, ,                                               \nonumber \\
\bfSigma \, \bfn & =     \bfcurG^\Gamma  \quad \hbox{on}  \quad \Gamma \, .                       \nonumber
\end{align}
The stress tensor $\bfSigma = \{\Sigma_{ij}\}_{1\leq i,j\leq 3}$ is related to the strain tensor $\bfE(\bfV) = \{E_{kh}(\bfV)\}_{1\leq k,h \leq 3}$ by the
constitutive equation, $\Sigma_{ij}(\bfzeta)= \KK_{ijkh}(\bfzeta) \, E_{kh}(\bfV(\bfzeta))$
in which the strain tensor is such that $E_{kh}(\bfV) = (\partial V_k/\partial\zeta_h + \partial V_h/\partial\zeta_k)/2$.
The geometry, the surface force field $\bfcurG^\Gamma$, the probabilistic model of the elasticity random field $\KK$ that depends on parameter $\bfw$, and the finite element discretization of the weak formulation of the stochastic BVP are detailed in Appendix~\ref{AppendixB}.

The control parameter $\bfw$ is modelled by a $\RR^2$-valued random variable $\bfW=(W_1,W_2)$.
The random vectors $\bfU$, $\bfW$, and $\bfQ$, for which the dimensions are $n_u = 420\, 000$, $n_w = 2$, and $n_q = 10\, 098$, are defined in Appendix~\ref{AppendixB}. The random vectors $\bfW$ and $\bfU$ are  statistically independent.
The dimension $n = n_q+n_w$ of random vector $\bfX = (\bfQ,\bfW)$ is thus $n= 10\, 000$.

The training set is generated as explained in Section~\ref{sec:Meth-1} for which $N= 100$ independent realizations, $\bfu_d^j$ and $\bfw_d^j$, of $\bfU$ and $\bfW$ are generated using the probabilistic model detailed in Appendix~\ref{AppendixB}. For each $j \in\{1,\ldots, N\}$, the realization $\bfq_d^j$ of $\bfQ$ is computed by solving  the BVP using the computational model (finite element discretization of the BVP), which is such that $\bfq_d^j = \bff(\bfw_d^j,\bfu_d^j)$ (note that $\bff$ is not explicitly known and results from the solution of the BVP). The training set related to random vector  $\bfX =(\bfQ,\bfW)$ is then made up of the $N$ independent realizations $\{\bfx_d^{j} , j=1,\ldots , N\}$ in which $\bfx_d^j = (\bfq_d^j,\bfw_d^j) \in \RR^n$.

The reference data set $\{\bfx_\pref^{\ell} , \ell=1,\ldots , N_\pref\}$ for $\bfX$ is generated as the training set but with $N_\pref = 20\, 000$ independent realizations. Computations have been made for $N_\pref = 15\,000$, $18\,000$, and $20\,000$, which have shown that the pdf of each observed component of $\bfQ$ were converged for $N_\pref = 20\,000$ (note that the construction of the reference has been very CPU time consuming).

The learned sets generated without using the PLoM method (No PLoM), or using the PLoM method with no group (No-Group PLoM), or with groups (With-Group PLoM)
will be all performed with $N_\ar = 200\, 000$ realizations
$\{\bfeta_\ar^{\ell},\ell=1,\ldots,N_\ar\}$ ($N_\ar = n_\MC\times N$ with $n_\MC = 2\, 000$).

\subsection{PLoM analysis without and with partition}
\label{sec:Ap2-2}
In this section, we give the main results without too many details (paper length limitation), knowing that we have already presented a detailed analysis for Application~1.

\paragraph{(i) PCA of random vector $\bfX$}
Since $n=10\,000 \ll N = 100$, the eigenvalue problem of $[C_\bfX]$ is solved using a thin SVD of matrix $[x_d] = [\bfx_d^1 \ldots \bfx_d^N]\in \MM_{n,N}$, which thus does not require the assembling of $[C_\bfX]$ (as explained in Section~\ref{sec:Meth-2}). Figure~\ref{fig:figureAP2-1a} shows the distribution of the eigenvalues $\mu_\alpha$.
For constructing the PCA representation, $\bfX^\nu = \underline{\bfx}_d +[\Phi]\,[\mu]^{1/2}\, \bfH$, of $\bfX$, we have chosen $\varepsilon_\PCA = 0.001$ that yields $\nu = 27$. Following Section~\ref{sec:Meth-2}, the matrix $[\eta_d] \in \MM_{\nu,N}$ is constructed with the $\nu$ realizations $\bfeta^1_d,\ldots , \bfeta^N_d$ of the $\RR^\nu$-valued random variable $\bfH$.

\paragraph{(ii) Reduced-order diffusion-map basis for No-Group PLoM}
Algorithm~\ref{algorithm:Meth-1} is used for the calculation of the reduced-order diffusion-map basis $[g_{m_\ppopt}]$ of the $\RR^\nu$-valued random variable $\bfH$. For the optimal values $m_\opt = \nu+1 = 28$ and
$\varepsilon_\opt =103$, Figure~\ref{fig:figureAP2-1b} shows the eigenvalues $\lambda_\alpha$ of the transition matrix.
\begin{figure}[tbhp]
  \centering
  \subfloat[Eigenvalues of the covariance matrix of $\bfX$]{\label{fig:figureAP2-1a}\includegraphics[width=5.5cm]{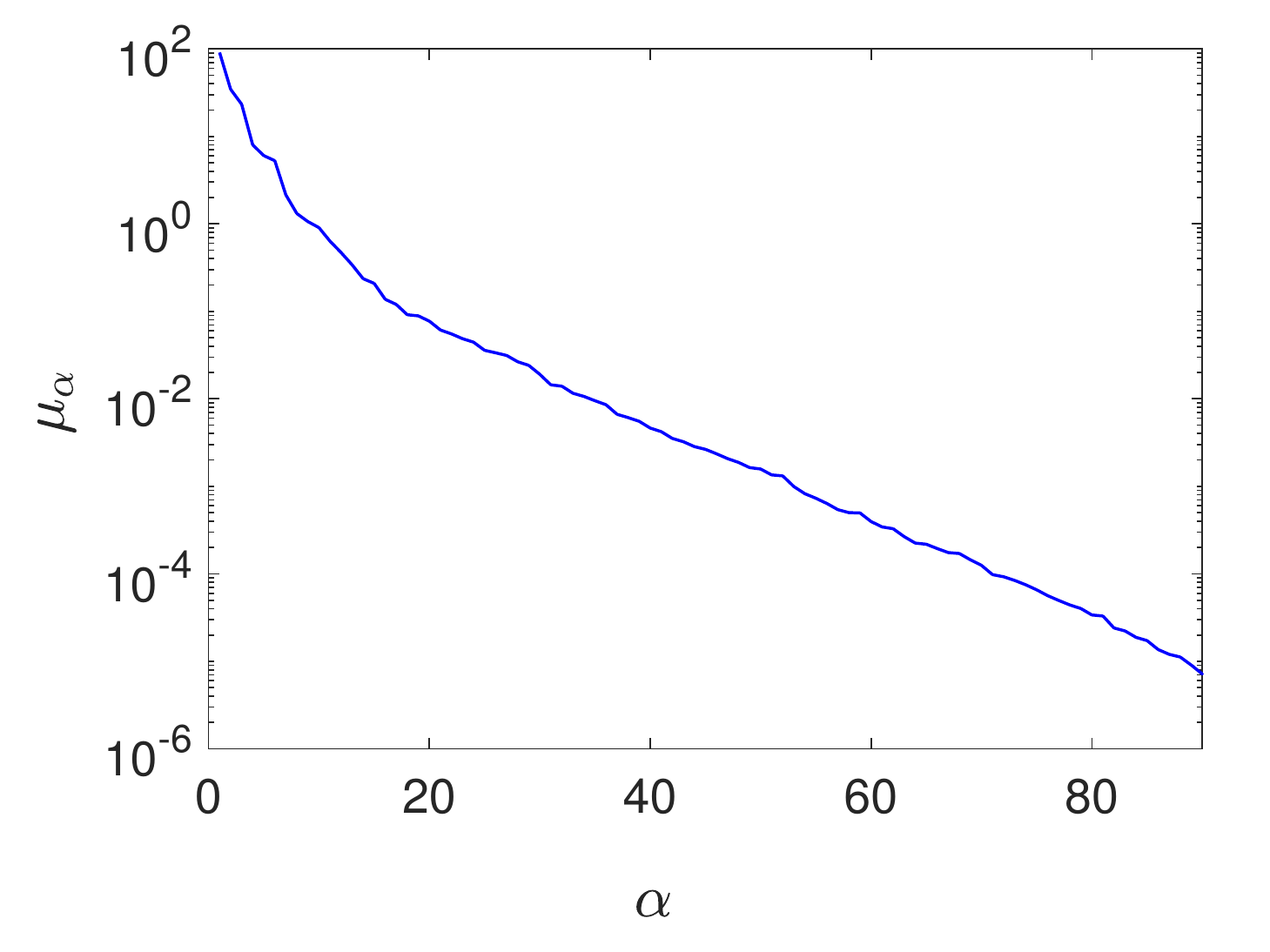}}
  \hfil
  \subfloat[Eigenvalues of the transition matrix] {\label{fig:figureAP2-1b} \includegraphics[width=5.5cm] {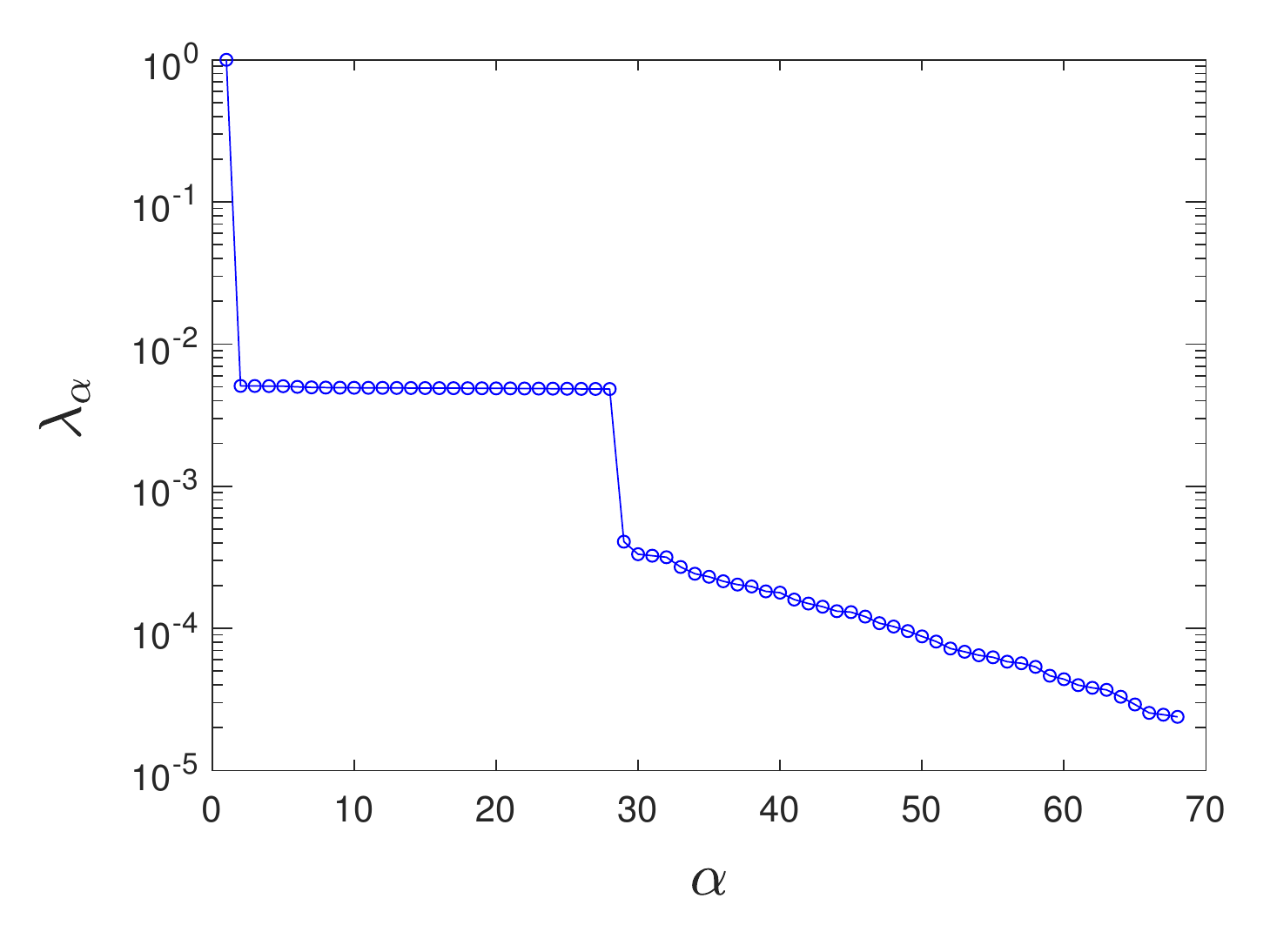}}
  \caption{PCA eigenvalues and diffusion map eigenvalues of the PLoM without group (No-Group PLoM).}
  \label{fig:figureAP2-1}
\end{figure}
\paragraph{(iii) Construction of the optimal partition of $\bfH$}
The optimal partition is computed as explained in Section~\ref{sec:Meth-4.1}.
Figure~\ref{fig:figureAP2-2a} displays the graph of $i_\pref\mapsto \tau(i_\pref)$, which shows that $i_\pref^{\,\opt}= 0.112$. The algorithm identifies the partition and finds $n_p = 9$ groups such that
$\nu_1=5$ with $\bfY^1 =(H_1,H_2,H_4,H_{16},H_{19})$,
$\nu_2= 1$ with $Y^2 =H_3$,
$\nu_3= 15$ with $\bfY^3 =(H_5\, \hbox{to}\, H_{11}, H_{14}, H_{15}, H_{17}, H_{18},H_{20}, H_{24} \,\hbox{to} \,H_{26}$),
and $\nu_4 = \ldots = \nu_9 = 1$ with $Y^4 =H_{12}$, $Y^5 =H_{13}$, $Y^6 =H_{21}$, $Y^7 =H_{22}$, $Y^8 =H_{23}$,  and $Y^9 =H_{27}$.
For each one of the two groups $i=1$ and $3$ (having a length greater than $1$), the optimal values are $m_{1,\opt} = 6$, $m_{3,\opt} = 16$, $\varepsilon_{1,\opt}= 37.7$, and $\varepsilon_{3,\opt}= 103$. For these optimal values of $\varepsilon_{i,\opt}$, Figure~\ref{fig:figureAP2-2b} shows the distribution of eigenvalues $\lambda_{i,\alpha}$  of the transition matrix. For the groups $i\neq 1$ and $3$, we have $m_{i,\opt}=N$ (see Algorithm~\ref{algorithm:Meth-1}).
\begin{figure} [tbhp]
  \centering
  \subfloat[Graph of function $i_\pref\mapsto \tau(i_\pref)$]{\label{fig:figureAP2-2a} \includegraphics[width=5.5cm] {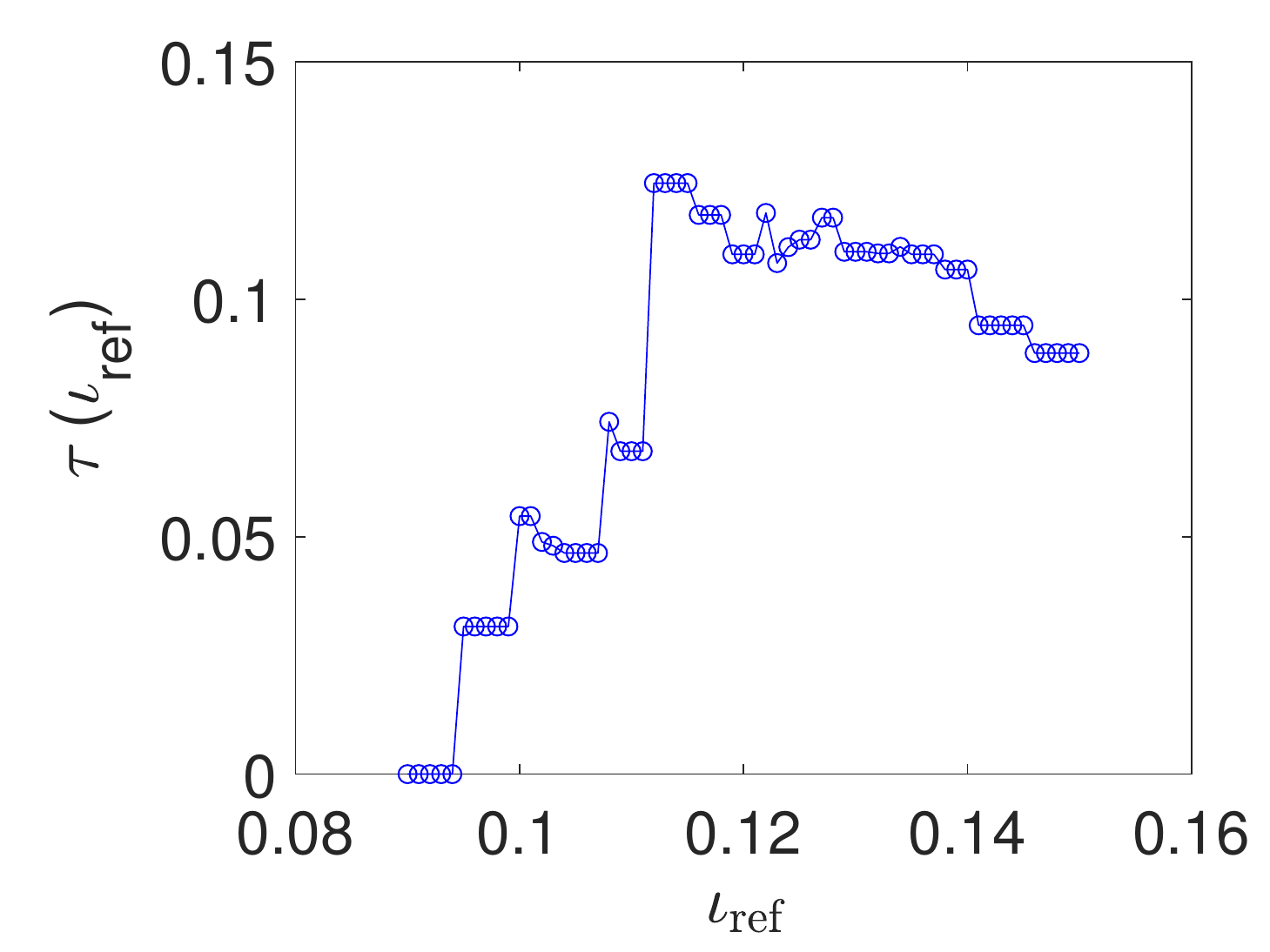}}
  \hfil
  \subfloat[Eigenvalues of the transition matrix]   {\label{fig:figureAP2-2b} \includegraphics[width=5.3cm] {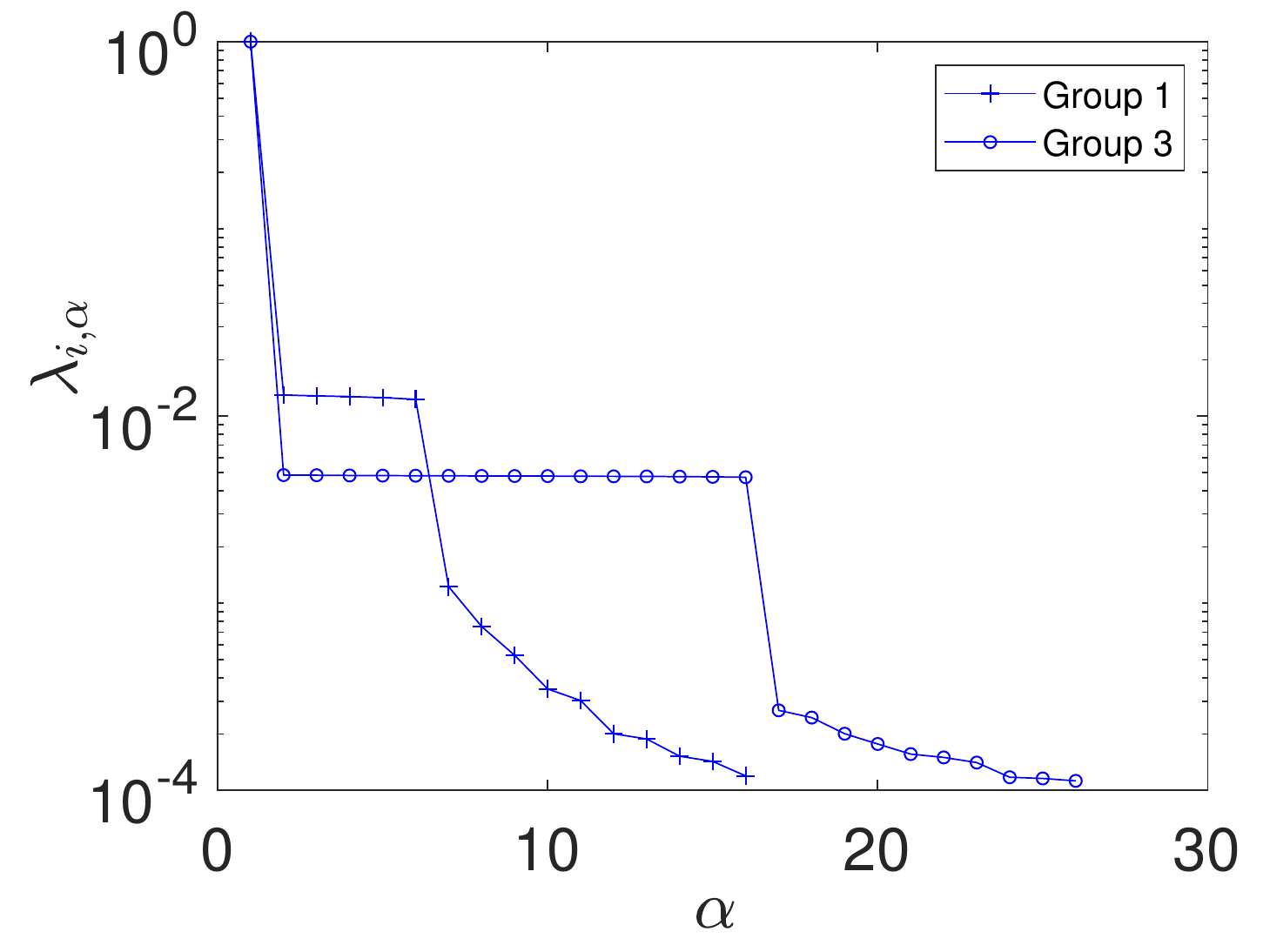}}
  \caption{Partition of $\bfH$ in $n_p=9$ mutually independent random vectors $\bfY^1,\ldots ,\bfY^{n_p}$ and diffusion map eigenvalues of the PLoM for groups $1$ and $3$.}
  \label{fig:figureAP2-2}
\end{figure}
\paragraph{(iv) Influence of the constraints of all the components of $\bfY^i$}
No-Group PLoM is performed without any constraints applied to random vector $\bfH$. With-Group PLoM is performed, group by group, in applying, for $i=1,\ldots,n_p=9$, the constraints $E\{(Y^i_k)^2\} =1$ for $k\in\{1,\ldots , \nu_i\}$. For all the components $k=1,\ldots, \nu=27$ of $\bfH$, Figure~\ref{fig:figureAP2-3} shows the mean value $E\{H_k\}$ and the standard deviation $\sigma_{H_k}$ that are estimated by No-Group PLoM and by With-Group PLoM. We can see that the mean values remain much lower than $1$ although no constraint is applied to the mean, as well for No-Group PLoM as for With-Group PLoM.
We can also see that the standard deviation of the components are already close to $1$ for No-Group PLoM although no constraint is applied to the second-order moments. As expected, for With-Group PLoM for which the constraints  are applied to the second-order moments, the standard deviations are almost equal to $1$.
\begin{figure}[tbhp]
  \centering
  \subfloat[Mean value of $H_k$] {\label{fig:figureAP2-3a} \includegraphics[width=5.4cm]{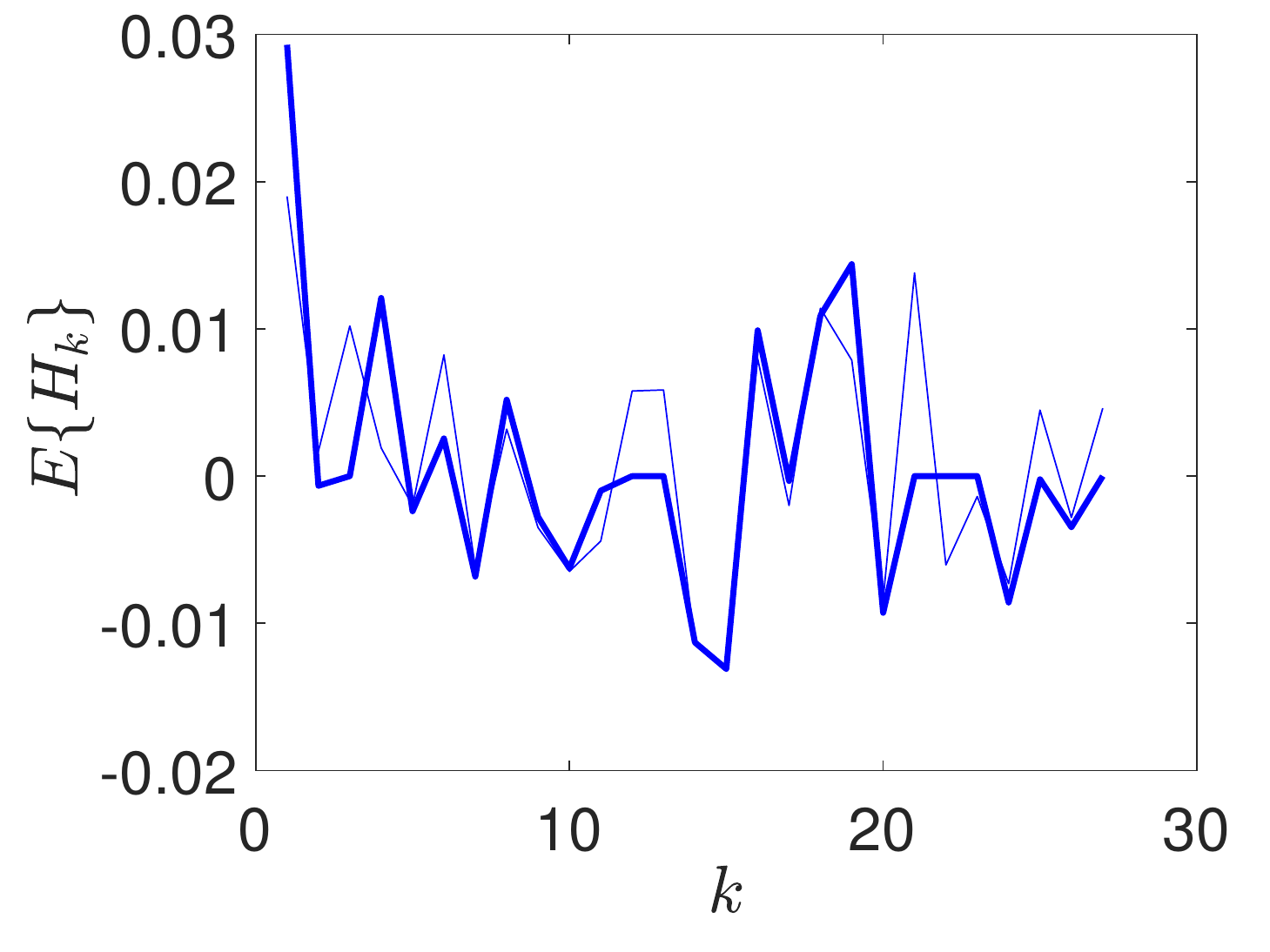}}
  \hfil
  \subfloat[Standard deviation of $H_k$]  {\label{fig:figureAP2-3b} \includegraphics[width=5.5cm] {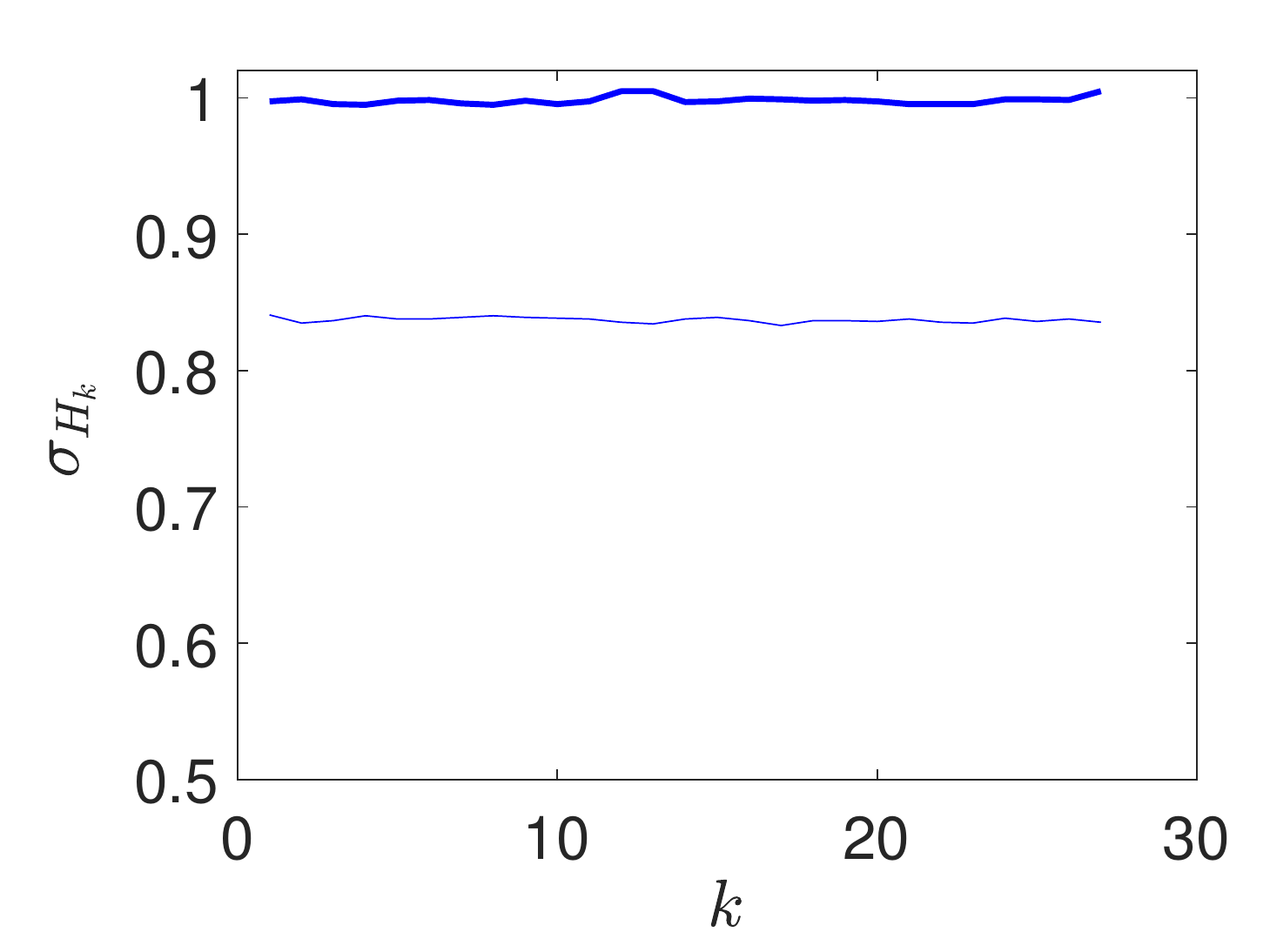}}
  \caption{Mean value and standard deviation of the components $H_k$, $k=1,\ldots, \nu$, of $\bfH$ estimated using the learned set generated
  with No-Group PLoM (thin line) and  With-Group PLoM (thick line).}
  \label{fig:figureAP2-3}
\end{figure}
\paragraph{(v) pdf of observations estimated by the PLoM}
The pdf of components $Q_{17}$ and $Q_{7740}$ of $\bfQ$ are presented in Figure~\ref{fig:figureAP2-4}. Component $17$ corresponds to the $\zeta_2$-axis displacement $V_2(\bfzeta)$ at point $\bfzeta=(0,0,0.1)$ while component $7740$ corresponds to the $\zeta_3$-axis displacement $V_3(\bfzeta)$ at point $\bfzeta=(0.78,0,0.1)$.
\begin{figure}[tbhp]
  \subfloat[pdf of $Q_{17}$]{\label{fig:figureAP2-4a} \includegraphics[width=6.6cm] {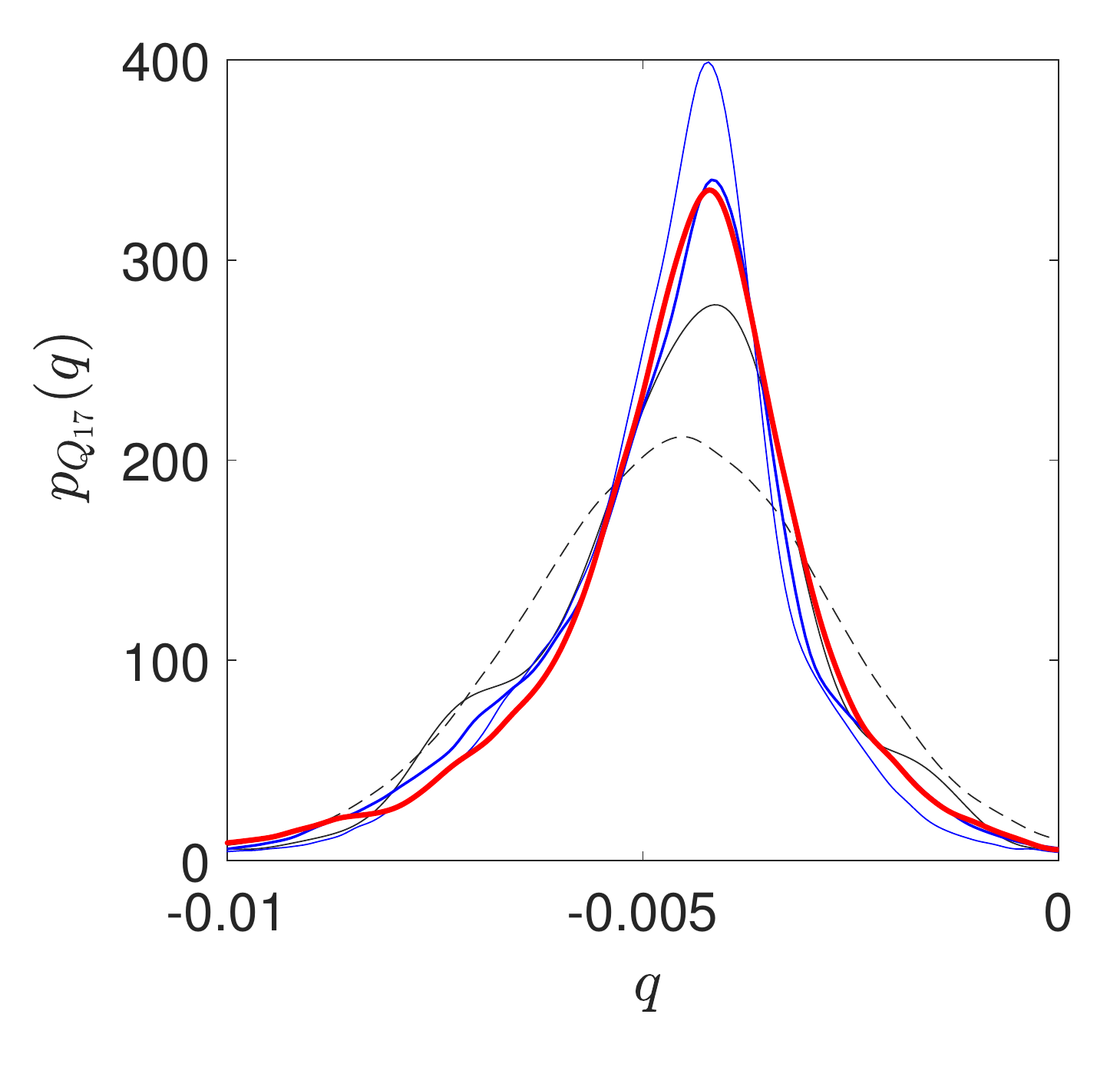}} \hfil
  \subfloat[pdf of $Q_{7740}$]{\label{fig:figureAP2-4b} \includegraphics[width=6.6cm] {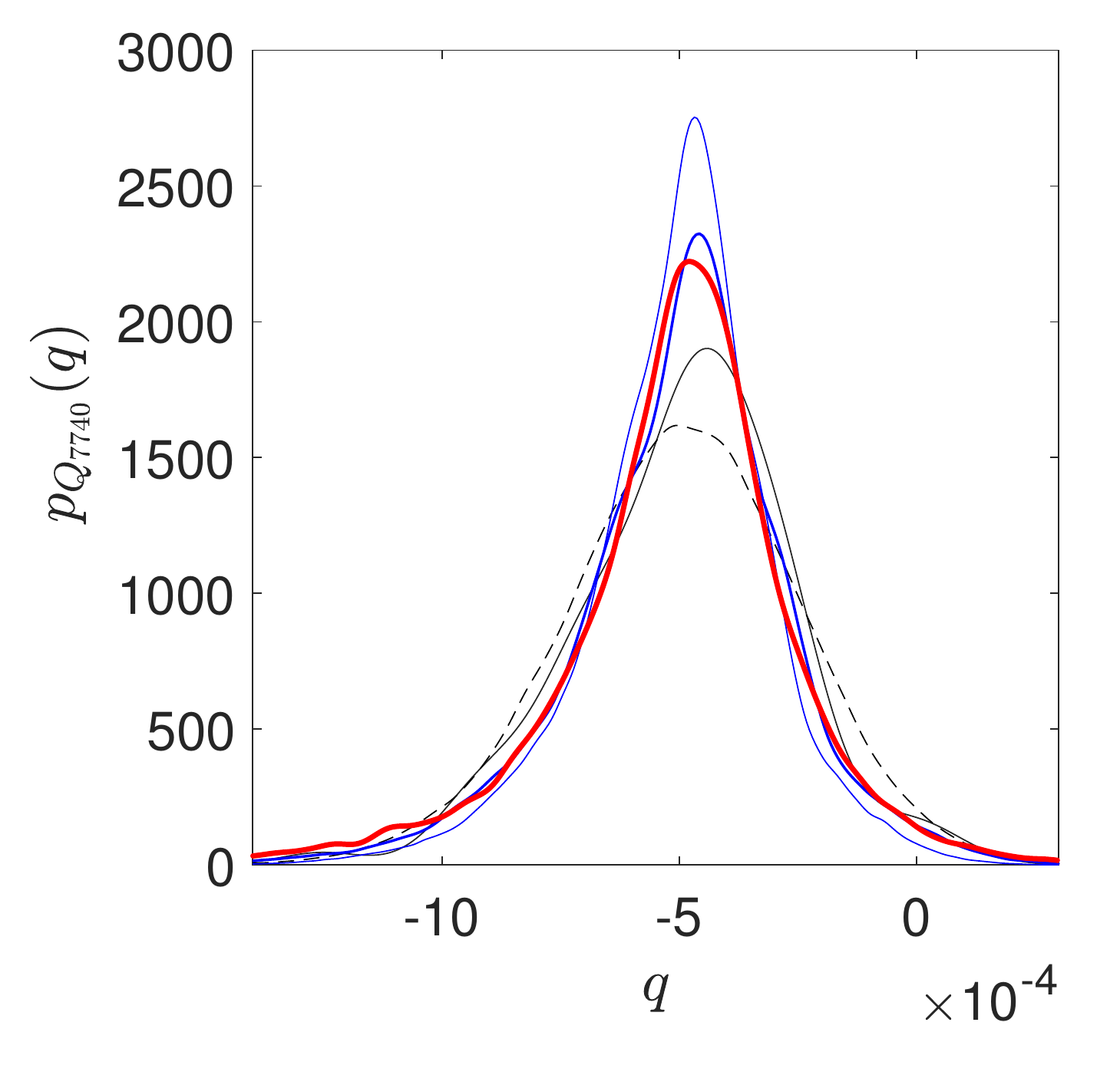}}
  \caption{pdf estimated with (i) the training dataset (black thin), (ii) the reference dataset (red thick), (iii) No-PLoM (dashed), (iv) No-Group PLoM (blue thin), and (v) With-Group PLoM (blue thick).}
  \label{fig:figureAP2-4}
\end{figure}
Figure~\ref{fig:figureAP2-4} shows the pdf estimated (i) with the $N_d=100$ points of the training set, (ii) with the $N_\pref = 20\, 000$ points of the reference data set, (iii) with  $N_\ar = 200\, 000$ additional realizations generated with an usual MCMC generator (without using the PLoM method), (iv) with the $N_\ar = 200\, 000$ additional realizations of the learned set generated by No-Group PLoM, and finally,  (v) with the $N_\ar = 200\, 000$ additional realizations of the learned set generated by With-Group PLoM for which a partition in $n_p=9$ groups has been identified.
It can be seen that the usual MCMC method (no PLoM) is not good at all, that  No-Group PLoM already gives a good estimation in comparison with the reference, and finally, that  With-Group PLoM gives an excellent estimation when compared to the reference.

\paragraph{(vi) Quantifying the concentration of the probability measure}
The analysis is carried out as in Section~\ref{sec:Ap1-4}.
The results concerning the  concentration of the probability measure is summarized in Table~\ref{tab:table2} and in Figure~\ref{fig:figureAP2-5}.
\begin{table}[tbhp]
  \caption{Concentration of the probability measure for Application~2}\label{tab:table2}
\begin{center}
 \begin{tabular}{|c|c|c|c|c|} \hline
     \textbf{No}  & \textbf{PLoM}    &      \multicolumn{3}{c|}{\textbf{PLoM}}                                              \\
     \textbf{PLoM}& \textbf{No Group}&      \multicolumn{3}{c|}{\textbf{With Group}}                                        \\
     \hline
                  &  $m_\opt =28$    &                         & \multicolumn{2}{c|}{$\rm{Proba}$ by Eq.~\eqref{eq:Meth-12}} \\
       $d^2_N(N)$ & $d^2_N(m_\opt)$  & $d^2_{wg,N}(\bfm_\opt)$ &  $\varepsilon$ &  $\leq (\frac{r}{\varepsilon})^{n_p}$     \\ \hline
       $2.00$     &   $0.16$         &     $0.044$             &    $0.05$           &   $\leq 4.3\times 10^{-5}$                        \\
                  &                  &                         &    $0.1$           &   $\leq 8.3\times 10^{-8}$            \\
     \hline
  \end{tabular}
\end{center}
\end{table}
For No PLoM,  $d^2_N(N)$ is computed with Eq.~\eqref{eq:Meth-4} for which $m=N=100$. For No-Group PLoM, $d^2_N(m_\opt)$ is also computed with Eq.~\eqref{eq:Meth-4} but with $m=m_\opt = 28$. For With-Group PLoM, $d^2_{wg,N}(\bfm_\opt)$ is computed with Eq.~\eqref{eq:Meth-7} for which $\bfm_\opt = (m_{1,\opt},\ldots ,m_{9,\opt})$.
The graph $i\mapsto d^2_{i,N}(m_{i,\ppopt})$ is computed using Eq.~\eqref{eq:Meth-8} and is plotted in Figure~\ref{fig:figureAP2-5}.
Without using the PLoM method, we find numerically $d^2_N(N) = 2$ that is the theoretical value (see Section~\ref{sec:Meth-3.5}). We also see that $d^2_N (m_\opt) = 0.16 \ll 2 $, which shows that the usual PLoM method (without group) effectively preserves the concentration of the probability measure unlike the usual MCMC methods that do not allow it. For the PLoM with groups, it can be seen an improvement with respect to the PLoM without group because $d^2_{wg,N}(\bfm_\opt)= 0.044 \ll  d^2_N(m_\opt) = 0.16$. The quantification of the probability of the random relative distance defined by Eq.~\eqref{eq:Meth-12} confirms this improvement. Note that the probability $(r/\varepsilon)^{n_p}$ corresponds to an upper value, the probability being certainly smaller.
\begin{figure}[tbhp]
  \centering
  \includegraphics[width=5.5cm]{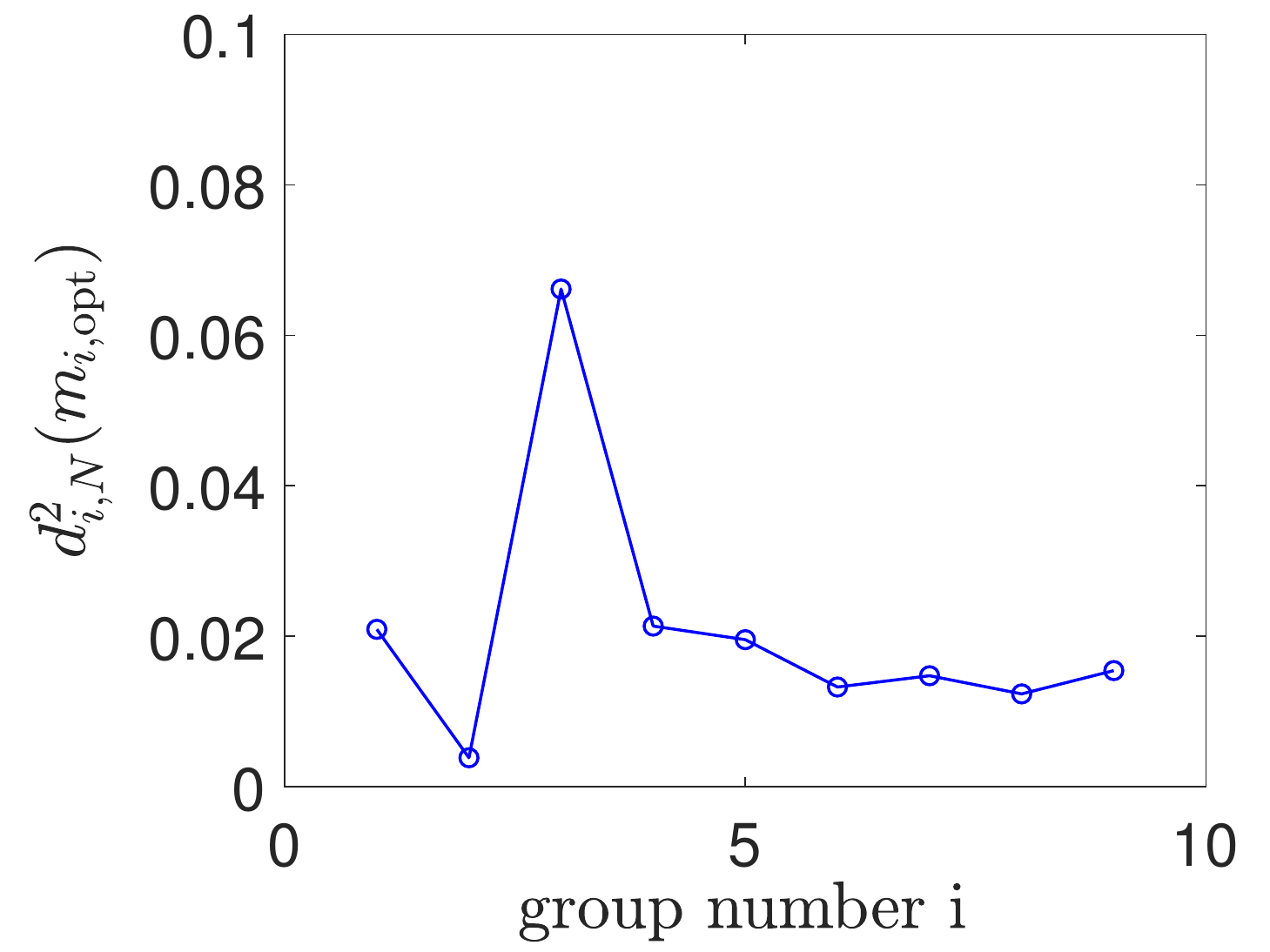}
\caption{Concentration of the probability measure for each group $i\in\{1,\ldots , 9\}$: graph of $i\mapsto d^2_{i,N}(m_{i,\ppopt})$  computed using Eq.~\eqref{eq:Meth-8}.}
\label{fig:figureAP2-5}
\end{figure}

\section{Discussion and conclusion}
%
The implementation of a partition in the PLoM method has provided an
opportunity to revisit, improve the efficiency, and simplify the
algorithm to identify the optimal values of the hyperparameters of the
reduced-order diffusion-map basis. This was made necessary for the
PLoM method with partition, because the number of groups identified
can be large and for each group of dimension greater than 1, the
reduced-order diffusion-map basis must be constructed. This new
efficient algorithm is common to PLoM with or without partition.

Still within the framework of the PLoM  carried out with partition,
we have made the following observations. If a group of the partition
has a relatively small dimension (a few units, or even one or two
dozen) and if the support of its probability measure  has a complex
geometry,  one could obtain a significant loss of normalization
compared to 1 (for instance 0.6 or 0.7 instead of 0.9 or 1). For
instance, such a situation can be encountered  by the presence of
numerous non-Gaussian stochastic germs that generate strong
statistical fluctuations (for example, up to ten times the standard
deviation for some components). For these cases, we have proposed to
introduce constraints on the second-order moments of the components of
such a group, by reusing the Kullback-Leibler minimum cross-entropy
principle that we have previously used for taking into account physics
constraints in the PLoM method. For instance, Application 1 is very
difficult due to the high degree of the polynomials  in the model;
although the realizations of the training set are centered and
have a covariance matrix equal to the identity matrix, the
fluctuations vary between $-10$ and +$10$ for some components (which
must be compared to a magnitude of $1$).
It should be noted that we have also developed, tested, and
implemented the general case of introducing constraints on the mean
vector (zero mean) and on the covariance matrix (identity matrix). One
then increases considerably the number of Lagrange multipliers to be
calculated by the iterative method, which induces a significant
numerical additional cost. We have not seen any significant
improvement compared to the only constraint related to the diagonal of
the covariance matrix (second-order moments equal to $1$ knowing that
the centering is reasonably well obtained without constraint on the
mean vector). Under these conditions we have only presented the
simplest case of constraints and we have demonstrated it on two applications.

In the recently published mathematical foundations of PLoM
\cite{Soize2020c}, to establish the main theorem, we introduced a
distance between the random matrix defined by PLoM and the
deterministic matrix that represent all the given points of the
training set. In the present paper, and in order to facilitate the
quantification of the preservation of the concentration of the
probability measure between the usual MCMC method (No PLoM), the PLoM
method without partition (No-Group PLoM), and the PLoM method with
partition (With-Group PLoM), we apply this distance to each group of
the partition. We have assessed it numerically for the two
applications. The results obtained confirm the theoretical results:
there is a significant loss of  concentration of the probability
measure for the usual MCMC method (No PLoM) while the PLoM method
without partition (No-Group PLoM) preserves well the concentration of
the probability measure. In addition, this distance shows that the PLoM method with partition further improves the preservation of the concentration of the probability measure compared to the PLoM method without partition, which was hoped for.
Finally, to complete the quantification of the concentration of the probability measure by the distance, we have also proven  a mathematical result of this quantification in terms of probability. This result shows that if the number of groups of the partition increases, then the gain of With-Group PLoM can be significantly improved compared to No-Group PLoM. We have numerically quantified these probabilities for the two applications.

This work contributes to the improvement of the PLoM method. The
results presented are very satisfactory for the two applications
which, while quite distinct, present significant challenges
to other statistical learning methods.

\appendix
\section{Probabilistic model of the random generator for Applications~1}
\label{AppendixA}
In this Appendix, any second-order random quantity $\bfS$ is defined on a probability space $(\Theta,\curT,\curP)$ and its mathematical expectation $E\{\bfS\}$ is estimated by $\underline{\bfs} =(1/N)\sum_{j=1}^N \bfs^j$ using $N$ independent realizations $\{\bfs^j = \bfS(\theta_j), j=1,\ldots , N\}$ of $\bfS$ with $\theta_j\in\Theta$.

The $\RR^\nu$-valued random variable $\bfH =(H_1,\ldots , H_\nu)$  is written as a partition of $n_p = 3$ independent random vectors $\bfY^1,\ldots ,\bfY^{n_p}$ such that, for $i=1,\ldots, n_p$, the normalized $\RR^{\nu_i}$-valued random variable $\bfY^i = (Y^i_1,\ldots , Y^i_{\nu_i})$ is non-Gaussian and such that the estimate
of its mean vector is $\underline{\bfeta^i} = \bfzero_{\nu_i}$ and the estimate of its covariance matrix is $[C_{\bfY^i}] = [I_{\nu_i}]$.
We have $\nu =\nu_1 + \ldots + \nu_{n_p}$ and we choose $\nu_1=10$, $\nu_2= 20$, and $\nu_3=30$.

For $i=1,2,3$, let $[\, b^i]$ be the deterministic matrix in $\MM_{\nu_i}$ defined by: $rng('default')$; $[\,b^i] = (0.15\times rand(\nu_i,\nu_i) + 0.85)/ \nu_i$ (in which $rng$ and $rand$ are the Matlab functions). Let $\bfU^i = 2\,[\, b^i]\, \bfcurU^i - \11$ be the  $\RR^{\nu_i}$-valued random variable in which $\11=(1,\ldots , 1)$ and where $\bfcurU^i = (\curU^i_1,\ldots,\curU^i_{\nu_i})$ is the random vector constituted of $\nu_i$ independent uniform random variables on $[0,1]$, whose $N$ independent realizations are $\{\bfcurU^i(\theta_j), j=1,\ldots , N\}$. The random vectors
$\bfcurU^1$, $\bfcurU^2$, and $\bfcurU^3$ are statistically independent.
Let $\bfcurM^i = (\curM^i_1,\ldots , \curM^i_{\nu_i})$ be the $\RR^{\nu_i}$-valued random variable in which,
for $k=1,\ldots , \nu_i$, $\curM^i_k$ is the random monomial $\curM^i_k = \sqrt{k!} \,(U^i_k)^k$ (thus the degree of this monomial is $k$).  Let be $\bfcurM^i_c = \bfcurM^i - \underline{\bfm^i}$  in which $\underline{\bfm^i}$ is the estimate of the mean value of $\bfcurM^i$. Let $[C_{\bfcurM^i}]$ be the estimate of the covariance matrix of $\bfcurM^i$ and let $[L^i]$ be the upper triangular matrix computed from the Cholesky factorization $[C_{\bfcurM^i}] = [L^i]^T \, [L^i]$. Therefore, the random vector $\bfY^i$ is constructed as $\bfY^i = ([L^i]^T)^{-1}\, \bfcurM^i_c$ whose the $N$ independent realizations $\{\bfeta_d^{i,j},j=1,\ldots,N\}$ are such that
$\bfeta_d^{i,j} = ([L^i]^T)^{-1}\, \bfcurM^i_c(\theta_j)$. The $N$ independent realizations $\{\bfeta_d^j,j=1,\ldots,N\}$ of $\bfH$ are such that
$\bfeta_d^j = (\bfeta_d^{1,j},\bfeta_d^{2,j},\bfeta_d^{3,j}) \in \RR^\nu = \RR^{\nu_1}\times\RR^{\nu_2}\times\RR^{\nu_3}$. Using these $N$ realizations, the estimate
of the mean vector of $\bfH$ is $\underline{\bfeta} = \bfzero_\nu$ and the estimate of its covariance matrix is $[C_{\bfH}] = [I_{\nu}]$. By construction, we have
$\bfY^1= (H_1,\ldots, H_{10})$, $\bfY^2= (H_{11},\ldots, H_{30})$, and $\bfY^3= (H_{31},\ldots, H_{60})$.

\section{Model and data for Applications~2}
\label{AppendixB}

\paragraph{(i) Geometry and surface force field $\bfcurG^\Gamma$}
The  3D bounded domain is defined by $\Omega = ]0, 1.0[ \, \times \, ]0, 0.2[ \, \times \,]0 , 0.1[$.
Its boundary is written as $\partial\Omega=\Gamma_0 \cup\Gamma$ in which $\Gamma = \partial \Omega \backslash \Gamma_0$ with
$\Gamma_0 =\{\zeta_1=1.0\, , \, 0\leq \zeta_2 \leq 0.2 \, , \, 0 \leq \zeta_3 \leq 0.1 \}$.
The surface force field $\bfcurG^\Gamma = (\curG^\Gamma_1, \curG^\Gamma_2, \curG^\Gamma_3)$ is zero on all $\Gamma$ except on the part $\{\zeta_1=0.0\, , \, 0\leq \zeta_2 \leq 0.2 \, , \, 0 \leq \zeta_3 \leq 0.1 \}$ for which $\curG^\Gamma_1(\zeta) = - 4.95\times 10^7\, N/m^2$,
$\curG^\Gamma_2(\zeta) = - 4.29\times 10^5\, N/m^2$, and $\curG^\Gamma_3(\zeta) = - 1.65\times 10^5\, N/m^2$.

\paragraph{(ii) Probabilistic modeling of the elasticity random field}
Random field $\KK$ is rewritten as $\KK_{ijkh}(\bfzeta) = [\bfK(\bfzeta)]_{IJ}$ with $I=(i,j)$ and $J=(k,h)$,
in which indices  $I$ and $J$ belong to  $\{1,\ldots , 6\}$, and where $\{[\bfK(\bfzeta)] ,\bfzeta \in \RR^3\}$ is a second-order $\MM_6^+$-valued non-Gaussian random field  indexed by $\RR^3$, which is assumed to be  statistically homogeneous.
Its mean function $[\underline K] \in\MM_6^+$ is thus independent of $\bfzeta$ and corresponds to the elasticity tensor of a homogeneous isotropic elastic medium whose Young modulus is $10^{10}\, N/m^2$ and Poisson coefficient $0.15$. The statistical fluctuations of random field $[\bfK]$ around $[\underline K]$ are those of a heterogeneous  anisotropic elastic medium.
The non-Gaussian $\MM_6^+$-valued  random field  $\{ [\bfK(\bfzeta)]\, ,\bfzeta\in \Omega \}$ is constructed using the stochastic model  \cite{Soize2006,Soize2017b} of random elasticity fields for heterogeneous anisotropic elastic media that are isotropic in statistical mean and exhibit anisotropic statistical fluctuations. Its parameterization consists of three spatial-correlation lengths, one dispersion parameter, and  a positive-definite lower bound. Random field $[\bfK]$ is written, for all $\bfzeta$ in $\RR^3$, as
$[\bfK(\bfzeta)] = [K_\ell] + [\underline L_\varepsilon]^T\, [{\bfG_0}(\bfzeta)]\, [\underline L_\varepsilon]$.
The lower-bound matrix is defined by $[K_\ell] = \varepsilon (1+ \varepsilon)^{-1}\, [\underline K] \in \MM_6^+$ in which $\varepsilon$ is chosen equal to $10^{-6}$. The upper triangular matrix $[\underline L_\varepsilon] \in\MM_6$ is written as
$[\underline L_\varepsilon] =  (1+ \varepsilon)^{-1/2}\, [\underline L]$ in which $[\underline K] = [\underline L]^T\, [\underline L]$ (Cholesky factorization).
The non-Gaussian random field $\{[{\bfG_0}(\bfzeta)],\bfzeta\in\RR^3\}$, which  is indexed by $\RR^3$ with values
in $\MM_6^+$, is homogeneous in $\RR^3$ and is a second-order random field whose modeling and generator are detailed Pages 272 to 274 of \cite{Soize2017b}.
For all $\bfzeta$ in $\RR^3$, the random matrix $[{\bfG_0}(\bfzeta)]$ is written as
$[\bfG_0(\bfzeta)] = [\bfL_G(\bfzeta)]^T \, [\bfL_G(\bfzeta)]$
in which $[\bfL_G(\bfzeta)]$ is an upper $(6\times 6)$ real triangular random matrix that depends of $n_G =21$ independent normalized Gaussian random variables.
Random field $[\bfG_0]$ depends on $3$ spatial correlation lengths, $L_{\pcorr,1}$, $L_{\pcorr,2}$, $L_{\pcorr,3}$, relative to each one of the three directions $\zeta_1$-, $\zeta_2$-, and $\zeta_3$-axes. It also depends on the dispersion parameter $\delta_G > 0$ that allows for controlling the level of statistical fluctuations. As explained in Section~\ref{sec:Ap2-1}, only two hyperparameters are kept: $L_\pcorr$ and $\delta_G$, for which we have chosen
$L_{\pcorr,1} = L_{\pcorr,2} = L_{\pcorr,3} = L_\pcorr$.

\paragraph{(iii) Finite element approximation of the stochastic boundary value problem and definition of random vector $\bfU$}
Domain $\Omega$ is meshed with $50\!\times\! 10\!\times\! 5 = 2\,500$ finite elements
using $8$-nodes finite elements. There are $3\,366$ nodes and $10\,098$ dofs (degrees of freedom) before applying the Dirichlet conditions.
The displacements are locked  at all the $66$ nodes belonging to surface $\Gamma_0$ and therefore, there are $198$ zero Dirichlet conditions.
There are $8$ integration points in each finite element. Consequently, there are  $N_i= 20\, 000$ integration points $\bfzeta^1,\ldots, \bfzeta^{N_i}$.
Let us consider the $\RR^{n_u}$-valued random variable $\bfU$ constituted of all the $n_u= N_i \!\times\! n_G = 20\, 000\!\times\! 21= 420\, 000$ independent normalized Gaussian random variables that allow the set $\{[\bfL_G(\bfzeta^1)] ,\ldots , [\bfL_G(\bfzeta^{N_i})]\}$ of random matrices to be generated.

\paragraph{(iv) Construction of random vectors $\bfQ$, and $\bfW$}
The $\RR^{n_q}$-valued random variable $\bfQ$ of the QoIs are constituted of the $10\, 098$ dofs of the discretization of the random displacement field $\bfV$.
The random vector $\bfW=(W_1,W_2)$ in such that $W_1= \log (L_\pcorr)$ and $W_2 = log(\delta_G)$. The random variables $L_\pcorr$ and $\delta_G$ are independent and uniform on $[0.1\, , 1.0]$ and $[0.1\, 0.5]$, respectively. We then have $L_\pcorr = 0.9 \, \curU_1 + 0.1$ and $\delta_G = 0.4\,  \curU_2 +0.1$ in which $\curU_1$ and $\curU_2$ are two independent uniform random variable on $[0\, ,1]$.

\section*{Acknowledgments}
Support for this work was partially provided through the Scientific Discovery through Advanced Computing (SciDAC) program funded by the U.S. Department of Energy, Office of Science, Advanced Scientific Computing Research

\bibliographystyle{elsarticle-num}
\bibliography{references}
\end{document}